\newtheorem{Theorem}{Theorem}
\newtheorem{Lemma}[Theorem]{Lemma}
\newtheorem{Remark}[Theorem]{Remark}
\newtheorem{Definition}[Theorem]{Definition}
\newtheorem{Corollary}[Theorem]{Corollary}
\theoremstyle{example}
\newtheorem{Example}[Theorem]{Example}
\title{Uncertainty Principles in Krein Space}
\author{Sirous Homayouni and Angelo B. Mingarelli} 
\address{School of Mathematics and Statistics \\ School of Mathematics and Statistics \\ Carleton University \\ Ottawa, Ontario, Canada}
\thanks{}
\date{}
\email[S. Homayouni]{chomayou@mathstat.yorku.ca}
\email[A. B. Mingarelli]{angelo@math.carleton.ca}
\begin{document}
\begin{abstract}
Uncertainty relations between two general non-commuting self-adjoint operators are derived in a Krein space. All of these relations involve a Krein space induced fundamental symmetry operator, $J$, while some of these generalized relations involve an anti-commutator, a commutator, and various other nonlinear functions of the two operators in question. As a consequence there exist classes of non-self-adjoint operators on Hilbert spaces such that the non-vanishing of their commutator implies an uncertainty relation. All relations include the classical Heisenberg uncertainty principle as formulated in Hilbert Space by Von Neumann and others. In addition, we derive an operator dependent (nonlinear) commutator uncertainty relation in Krein space.
\end{abstract}
\maketitle

\section{Introduction}

There are some quantum field theories (QFT) that, without the help of indefinite inner product spaces, face some inner contradictions. For example, these appear in Quantum Electrodynamics (in the Gupta-Bleuler formalism), vector meson theories and Pauli-Villar regularization procedure. There are some other QFTs, like Heisenberg's unified field theory of elementary particles, that are founded on the basis of indefinite metric spaces from the beginning in order to prevent divergences that usually emerge in them. The indefinite inner product makes it possible, by virtue of its non-positive definiteness, to make convenient subtraction procedures that eventually can make things like propagators finite. These comments provide some background as to why indefinite inner product spaces, sometimes also called {\it indefinite metric spaces}, play a useful and important role in QFTs and mathematics.

Along with this background there are tendencies to extend the theories regarding general eigenvalue problems of Sturm-Liouville type to the wider spaces with indefinite metric. Examples of these can be found in the works of H. Langer \cite{langer44} and A. B. Mingarelli \cite{A.B.Mingarelli44}, \cite{ABM86} on so-called {\it non-definite Sturm-Liouville problems}.

The spaces called {\it Pontryagin spaces} and {\it Krein spaces} were named in honor of the Soviet mathematicians, Lev Semenovich Pontryagin and Mark Grigorievich Krein, the first researchers who investigated the theory of indefinite inner product spaces \cite{Pontryagin244}. While such spaces endowed with indefinite inner products are not, strictly speaking, Hilbert spaces they admit a decomposition into a direct sum of two Hilbert spaces (usually called positive and negative spaces) \cite{Mingarelli44}. If either one of the two Hilbert spaces in the stated decomposition is finite dimensional, we call the original indefinite inner product space a Pontryagin space. Otherwise, the space is called a Krein space. This decomposition is done via {\it orthoprojectors}; these project the vectors of our space into components belonging to either the positive or negative Hilbert spaces and it is by means of these orthoprojectors that the most characteristic notion of these indefinite metric spaces is defined. We are referring here to the notion of a {\it fundamental symmetry operator}. In general, for any Pontryagin/Krein space there is a \textit{bona fide} inner product, defined via such a fundamental symmetry operator with the property that the indefinite space endowed with this new inner product is actually a Hilbert space, \cite{Azizov44}.

There are clear difficulties connected with an indefinite inner product space. For example, according to probabilistic interpretations of quantum theories the results of experiments are explained on the basis of probabilities, \cite{de broglie44}. Such probabilistic interpretations become awkward in connection with an indefinite inner product space since now some vectors can have zero or negative {\it norm} thus giving rise to zero or even negative probabilities. One way out of this problem is to decide which part of the state vector space (the base space with the indefinite metric) may describe actual physical states and then look for the possibility of an interpretation, \cite{Nagy44}.

The terminology associated with various types of vectors in indefinite metric spaces is colorful indeed. For example, those non-zero vectors having negative or zero norm are called {\it ghosts}. Since, as related above, such ghosts naturally lead to difficulties in terms of probabilistic interpretation, a decomposition of such vectors (or {\it states}) into physical and non-physical subspaces $H_p$ and $H_n$ respectively could be helpful. The decompositions of the state space for actual theories are based on physical reasonings and are usually called {\it natural decompositions} (with a norm that is at least ``semi-definite"). We recall that these decompositions are accomplished via projection operators $P_{p}$ and $P_{n}$ which project indefinite states into physical and non-physical subspaces $H_{p}$ and $H_{n}$, respectively. Such decompositions may then help to approximate the real world more closely.

In this work we will use the terms {\it space with an indefinite metric} and {\it indefinite inner product space} interchangeably and without notice. This choice of terminology has now become universal. Since a vector in an indefinite metric space could have positive, negative or zero norm, we may get problems with the completeness of the eigenvectors of a {\it hermitian} operator in such spaces (``hermiticity`` is now defined in terms of the indefinite inner product).  It turns out that the existence of at least one non-zero eigenvector with zero norm is a necessary condition for the eigenfunctions of a hermitian operator {\it not} to form a complete system. The important question is then: ``When do the eigenvectors of an (hermitian) operator $P$ form a complete system?" In indefinite inner product spaces the notion that is fundamental to the completeness of a set of eigenvectors revolves around the notion of a {\it principal linear manifold}. The principal linear manifold of an operator $P$ corresponding to a complex number $p$ is the linear span of the set of states $u$ for which there is a positive integer $l(u)$ such that $(P-pI)^{l(u)}u=0$ (i.e., $u$ is a generalized eigenvector).

While in Hilbert space the eigenfunctions of a self-adjoint operator with {\it pure point spectrum} (i.e., its spectrum consists only of eigenvalues having finite multiplicity) always form a complete set (i.e., one can expand any function in the Hilbert space in terms of them), the same is not necessarily true in a Krein/Pontryagin space. Thus, the notion of an eigenfunction is not enough and the more general concept of ``generalized eigenfunction" is needed.

In Section 2 we review the basic theory of Pontryagin and Krein spaces and their operators. The concepts, definitions and fundamental theorems of Pontryagin and Krein spaces are reviewed there.

In Section 3 a brief history of the Heisenberg Uncertainty Principle (or Relation, HUR) between two non-commuting self-adjoint operators is reviewed. When applied to the position and momentum operators this fundamental result lies at the core of quantum mechanics. In this form it was published first by Werner Heisenberg in 1927 \cite{Heisenberg44}. In the sequel we sometimes denote this principle by the acronym, HUR, for simplicity. Its abstract form, that is, its $L^2$-form, was originally proved by H. Percy Robertson in 1929, \cite{Robertson44}, for any two non-commuting self-adjoint operators in Hilbert space. This result was followed by another completely abstract Hilbert space version of HUR due to John Von Neumann in 1930, \cite{Neumann44}, on the basis of the former proof of Robertson. In 1930, Erwin Schr\"{o}dinger considered the fact that product of two non-commuting hermitian operators in general is not hermitian but can be split into a hermitian and a ``skew-hermitian" ($i \times$ hermitian) part. Based on this consideration, Schr\"{o}dinger in  \cite{Schrodinger44} derived a generalized version of HUR involving both a commutator part (that already appeared in Robertson's derivation) and a new anti-commutator part. While Schr\"{o}dinger derived this inequality for two general self-adjoint operators, he showed that for the specific case of the canonical conjugate operators momentum and position, the noncommutative term is not generally zero. Neglecting it, however, gives the weaker inequality of Heisenberg.

The questions we consider are inspired by a paper of Condon, \cite{condon44}. In this insightful paper Condon ponders the ``limits" of the then recently discovered uncertainty principle. In particular he states: ``The fact that the operators corresponding to two physical quantities, $p$ and $q$, do not commute does not imply the existence of an uncertainty relation of the form $$\triangle p\triangle q > \frac{h}{2\pi},$$ namely, that the product of two uncertainties must be greater than or equal to some lower limit." He goes on to show that there exist operators for which non-commutativity does not imply uncertainty.

This work is more concerned with the question:`` Can we extend the class of operators, $A$, $B$ and the spaces they are defined upon so as to {\it guarantee} an uncertainty relation?" We show that, indeed, there exist classes of non self-adjoint operators on Hilbert spaces such that the non-vanishing of their commutator implies an uncertainty relation.

In Section 4 we derive such a generalization of Heisenberg's Uncertainty Principle. While all known derivations and/or generalizations of Heisenberg's (classical) inequality are done for operators that are self-adjoint in Hilbert space, in this section we derive it for operators that are self-adjoint only in Krein space, thereby producing an uncertainty relation for a wider spectrum of operators that are not necessarily self-adjoint in Hilbert space. In the specific case of a Hilbert space this generalization reduces to the Robertson-Schr\"{o}dinger  inequality. We define a class of functions for which the anti-commutative part, called the {\it standard deviations product}, does not vanish thereby giving  a stronger/better lower bound than Heisenberg's original one. Also, under a modification on the commutator of the distance operator, $x$, and the momentum operator, $p$, this generalization can give us an expression of an uncertainty relation consistent with the minimum Planck length.

The applications and consequences of the generalized HUR in Krein space are discussed in the various sections where generalizations are presented. Conclusions are summarized in Section ~\ref{ch6}.


\section{The basic theory of Krein spaces}

In this section the fundamental concepts and theorems of indefinite inner product spaces, Pontryagin, and Krein spaces are reviewed, \cite{Mingarelli44}.

\begin{Definition}\label{def1}
 Let $V$ be a vector space over the field of complex numbers $\mathbb{C}$. A {\it sesquilinear hermitian form} on $V$ is a map $Q:V\times V\rightarrow\mathbb{C}$ such that for all $x, y, x_{1},x_{2}\in\, V$ and $ \lambda_{1},\, \lambda_{2}\, \in\,\mathbb{C}$ we have

I) $Q(\lambda_{1}x_{1}+\lambda_{2}x_{2},y)=\lambda_{1}Q_{1}(x_{1},y)+\lambda_{2}Q_{2}(x_{2},y)$ (linearity in the 1st argument),

II) $Q(y,x)= \overline{Q(x,y)}$ (hermiticity).
\end{Definition}

\begin{Example} Let $V=L^{2}(a,b)$ where $-\infty<a<b<+\infty$ be the space of all complex-valued Lebesgue square-integrable functions on the real interval $[a,b]$, under the usual operations of sums, etc. Define a map $Q$ on $V\times V$ by
\begin{equation}
Q(f,g)=\int_{a}^{b}f(x)\overline{g(x)}\,dx.\nonumber
\end{equation}
Then $Q$ is clearly sesquilinear hermitian on $V$.
\end{Example}

\begin{Example} Let $r\in L^{\infty}(D)$ be a real valued essentially bounded function defined in a compact subset $D\subset \mathbb{R}^{n}$. The vector space $V=L_{r}^{2}(D)$ is defined to be the space of all those complex valued Lebesgue measurable functions $f$ such that $\int_{D}|f(x)|^{2}|r(x)|\,dx <\infty$, where $dx$ is Lebesgue measure in $\mathbb{R}^{n}$. Define a map by $Q$ on $V\times V$ by
\begin{equation}
Q(f,g)=\int_{D}f(x)\overline{g(x)}r(x)\,dx.\nonumber
\end{equation}
Then $Q$ is a sesquilinear hermitian form on $V$. This space $V$ is called a {\it weighted} Lebesgue space of square-integrable functions.
\end{Example}

The difference between a sesquilinear hermitian form and an inner product is that an inner product is a sesquilinear hermitian form with the additional property of {\it positive definiteness}, i.e., $Q(x,x)>0$ for all $x\neq 0$. Therefore we can say that a sesquilinear hermitian form generally defines an {\it indefinite} inner product. Also a sesquilinear hermitian form is sometimes called (especially in Russian papers) an \emph{indefinite metric} (although strictly speaking it is not an indefinite metric as the triangle inequality may be lacking). Because of the pervasive nature of these equivalent expressions we will use the terms {\it indefinite metric space} and {\it indefinite inner product space} interchangeably.

\begin{Definition} Let $V$ be a vector space with an indefinite metric. We say that a vector $x \in V$ is [1]
\begin{equation}
\left\{
  \begin{array}{ll}
     &\hbox{{\rm positive, if}\,\, $[x,x]>0$;} \\
     &\hbox{{\rm negative, if}\,\, $[x,x]<0$;} \\
     &\hbox{{\rm neutral, if}\,\, $[x,x]=0$, $x\neq 0$.}
  \end{array}
\right.
\end{equation}
\end{Definition}

\begin{Example} Let $u=(u_{1},u_{2},u_{3,}u_{4}), v \in\mathbb{R}^4$. Then it is readily verified that $$[u,v]=u_{1}v_{1}-u_{2}v_{2}-u_{3}v_{3}-u_{4}v_{4}$$ is a sesquilinear hermitian form on $\mathbb{R}^4$. It is also clear that $[u,v]$ could be positive, negative or zero depending on the vectors chosen. Hence $[, ]$ defined here is an indefinite metric, and this metric is usually called the {\it Lorentz metric}.
~\label{exam3}
\end{Example}

\begin{Example} Consider the group $G$ of all invertible $4\times4$ matrices $A$ over the real numbers such that $[Au,v]=[u,A^{-1}v], \forall u, v\in \mathbb{R}^4$ where $[\ ,\ ]$ is the Lorentz metric of Example ~\ref{exam3}. This group $G$ is called the {\it Lorentz group} and is denoted by $O(3,1)$. Letting $v=Au$, we come up with $[Au,Au]=[u,u]$ so that the group $G$ leaves the form $[\ ,\ ]$ invariant. This also means that $G$ leaves the quadratic form $[u,u]=u_{1}^{2}-u_{2}^{2}-u_{3}^{2}-u_{4}^{2}$ invariant. The elements of the Lorentz group are called {\it Lorentz transformations}. For example, the matrix

\[ \left( \begin{array}{cccc}
\cosh\varphi & \sinh\varphi & 0 & 0 \\
\sinh\varphi & \cosh\varphi & 0 & 0 \\
0 & 0 & 1 & 0  \\
0 & 0 & 0 & 1 \end{array} \right)\]

where $\sinh\varphi=\frac{v\gamma}{c}$, $\cosh\varphi=\gamma=\frac{1}{\sqrt{1-\frac{v^{2}}{c^{2}}}}$ where $v$ is the velocity here, $c$, the speed of light, is a parameter of the group $G$. This represents a Lorentz transformation.
\end{Example}

\begin{Example}\label{expl5}
 A generalized matrix eigenvalue problem may be defined as an eigenvalue problem of the form
 \begin{equation}Ax=\lambda Bx\end{equation} where $A=A^*$, $B=B^*$ are hermitian matrices and we seek $\lambda \in \mathbb{C}$ such that $det(A-\lambda B)=0$ for some non-zero vector $x$ (usually called a {\it generalized eigenvector}). If we assume that $A$ is a positive definite matrix (i.e., $(Au,u)>0$ for all $u \in \mathbb{C}^n$ where $(,)$ is the usual inner product on $\mathbb{C}^n$), then $(Ax,x)=\lambda(Bx,x)$ and $\lambda$ must be real.
 Such a matrix $B$ can be used to define a sesquilinear hermitian form $[\ ,\ ]$ on $\mathbb{C}^n$ by setting $$[u,v]=(Bu,v)$$ where $(,)$ is the usual inner product on $\mathbb{C}^n$.
\end{Example}

 \begin{Remark} The assumption of positive definiteness on $A$ in the above example guarantees that any generalized eigenvalue $\lambda$ must be real. This can be easily seen because $(Ax,x)\neq 0$ for any $x\neq 0$. Now, the hermiticity of both $A$ and $B$ implies that both sides of $(Ax,x)=\lambda (Bx,x)$ are real. Taking imaginary parts we get $({\rm Im}\,\, \lambda)\, (Bx,x) =0$. If ${\rm Im}\,\lambda \neq 0$ the latter implies that  $(Bx,x) =0$. Thus $(Ax,x)=0$ which finally forces $x=0$ (since $A$ is positive definite). Therefore ${\rm Im}\, \lambda=0$ if $x$ is not zero, thus all generalized eigenvalues must be real whenever $A$ is positive definite.
 \end{Remark}

 \begin{Remark} It follows that any generalized eigenvalue $\lambda$ and associated generalized eigenvector $x$ must satisfy  $$\lambda\,(Bx,x)>0,$$ (if $A$ is positive definite). So, in this case, $(Bx,x)$ always has the same sign as $\lambda$. Using the indefinite inner product defined in Example~\ref{expl5} we deduce that,
\begin{equation}
\left\{
  \begin{array}{ll}
     & \hbox{$\lambda>0$ if and only if $[x,x]>0$;} \\
     & \hbox{$\lambda<0$ if and only if $[x,x]<0$;} \\
     & \hbox{$[x,x]=0$, if and only if $x=0$}
  \end{array}
\right.\nonumber
\end{equation}
\end{Remark}

 \begin{Definition}{$Q-$orthogonality} Let $Q(u,v)=[u,v]$ be a sesquilinear hermitian form on a vector space $V$. If for some $u,v \in V$ we have $[u,v]=0$, we say $u$ and $v$ are {\it orthogonal} with respect to $[\ ,\ ]$, or $u$ is $Q-$orthogonal to $v$.
 \end{Definition}
 \begin{Remark} Note that since $[\ ,\ ]$ is hermitian, $[u,v]=0 \Leftrightarrow [v,u]=0$ so that Q-orthogonality is a reflexive relation.
\end{Remark}
Non-zero neutral vectors are orthogonal to themselves, and the existence of such vectors in indefinite metric spaces is guaranteed by the following theorem.

\begin{Theorem} (See, e.g., \cite{ikl}, \cite{Mingarelli44}.)
Let $[\ ,\ ]$ be an indefinite sesquilinear hermitian form on $V$, in the sense that there are at least two vectors $x,y\in V$ such that $[x,x]>0$ and $[y,y]<0$. Then $V$ contains at least one (non-zero) neutral vector.
\end{Theorem}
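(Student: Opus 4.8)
The plan is to produce an explicit neutral vector lying on the real line segment (in fact, the real affine line) joining the given positive vector $x$ to the given negative vector $y$, using a continuity/intermediate-value argument. Concretely, I would consider, for a real parameter $t$, the vector $z_t = x + t y$ and the real-valued function $g(t) = [z_t, z_t]$.

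First I would expand $g(t)$ using sesquilinearity and hermiticity: since $t$ is real,
\begin{equation}
g(t) = [x,x] + t\,[y,x] + t\,[x,y] + t^2\,[y,y] = [x,x] + 2t\,{\rm Re}\,[x,y] + t^2\,[y,y],
\end{equation}
which is an honest real quadratic polynomial in $t$ (here I use $[y,x] = \overline{[x,y]}$ and $[x,x], [y,y]\in\mathbb{R}$). Note $g(0) = [x,x] > 0$, while the leading coefficient of $g$ is $[y,y] < 0$, so $g(t) \to -\infty$ as $t \to +\infty$. By continuity of $g$ and the intermediate value theorem there exists $t_0 > 0$ with $g(t_0) = 0$, i.e. $[z_{t_0}, z_{t_0}] = 0$.

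It remains to check that $z_{t_0}$ is non-zero, so that it genuinely qualifies as a neutral vector. For this I would first observe that $x$ and $y$ are linearly independent: if $y = \lambda x$ for some $\lambda \in \mathbb{C}$, then $[y,y] = |\lambda|^2 [x,x] \ge 0$, contradicting $[y,y] < 0$. Hence $x + t_0 y \neq 0$ (a non-trivial linear combination of independent vectors). Therefore $z_{t_0}$ is a non-zero vector with $[z_{t_0}, z_{t_0}] = 0$, as required.

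There is no real obstacle here; the argument is elementary once one commits to restricting the parameter to the reals (so that $g$ is genuinely real-valued and the IVT applies). The only points needing a moment of care are the correct expansion of $g(t)$ via the sesquilinear–hermitian axioms of Definition~\ref{def1}, and the short linear-independence remark guaranteeing $z_{t_0}\neq 0$; both are routine.
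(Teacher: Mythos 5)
Your proof is correct and complete. Note that the paper itself gives no argument for this theorem --- it simply refers the reader to \cite{ikl} and \cite{Mingarelli44} --- so there is no in-text proof to compare against; your intermediate-value argument is the standard one and is entirely self-contained. The expansion of $g(t)=[x+ty,x+ty]$ for real $t$ into $[x,x]+2t\,{\rm Re}\,[x,y]+t^2[y,y]$ is the right use of sesquilinearity plus hermiticity, the sign data $g(0)=[x,x]>0$ and leading coefficient $[y,y]<0$ give a positive root $t_0$ by continuity, and your linear-independence remark (if $y=\lambda x$ then $[y,y]=|\lambda|^2[x,x]\geq 0$, a contradiction) correctly rules out $z_{t_0}=0$. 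For comparison, the proof one typically finds in the cited literature avoids the intermediate value theorem by an explicit algebraic construction: writing $a=[x,x]>0$, $b=[y,y]<0$, one checks that $u=\sqrt{-b}\,x+\sqrt{a}\,e^{i\theta}y$ satisfies $[u,u]=2\sqrt{-ab}\;{\rm Re}\,\bigl(e^{-i\theta}[y,x]\bigr)$, and a choice of $\theta$ making the real part vanish (always possible) yields a neutral vector, nonzero by the same independence argument. Both routes are elementary; yours trades the choice of phase for continuity of a real quadratic, and either is a legitimate replacement for the missing proof.
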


As an application of the above theorem, considering the Lorentz metric, the mere existence of a time-like vector (our space) and the existence of a space-like vector (future) together implies the existence of a light-like vector (a particle moving with the speed of light) that is we can infer the existence of photons from the existence of tachyons (super-luminal particles).

\begin{Remark}
It can be easily shown that the eigenvectors corresponding to non-complex-conjugate generalized eigenvalues (see Example~\ref{expl5})
are $Q-$orthogonal. In other words, if $Ax=\lambda Bx$ and $Ay=\mu By$ where $\lambda, \mu \in \mathbb{C}$ and $x,y\neq 0$, then $[x,y]= (Bx,y)=0$ if $\lambda \neq \overline{\mu}$. Since $(Ax,y)=\lambda (Bx,y)$ together with $(Ax,y)=(x,Ay)=\overline{(Ay,x)}=\overline{(\mu By,x)}=\overline{\mu}\overline{(By,x)}=\overline{\mu}(x,By)=\overline{\mu}(Bx,y)$ implies $\lambda(Bx,y)=\overline{\mu}(Bx,y)$ $\Rightarrow$ $(\lambda-\overline{\mu})(Bx,y)= 0$. Therefore
\begin{equation}
 (Bx,y)=[x,y]= 0 \quad if\quad \lambda \neq\overline{\mu}.\nonumber
\end{equation}
\end{Remark}
\begin{Remark} The eigenvalue $\lambda$ in the preceding remark above cannot be non-real if $A$ is positive definite, (since all the eigenvalues must be real in this case). However, if $A$ is indefinite then there may well be real eigenvalues $\lambda$ whose eigenvectors $x$ satisfy $[x,x]=0$, \cite{A.B.Mingarelli44}.
\end{Remark}

When both matrices $A$ and $B$ are indefinite (i.e., their quadratic forms $(Ax,x)$, $(Bx,x)$ are indefinite) things can get pretty bad in the sense that there may be examples where every $\lambda\in\mathbb{C}$ is an eigenvalue of $Ax=\lambda Bx$. We choose $A, B, x$ as follows:
\begin{center}
$x=\left(\begin{array}{c}
    0 \\
    1 \\
    0
  \end{array}\right)$,
  $A=\left(\begin{array}{ccc}
    1 & 0& 0\\
    0 & 0& 0\\
    0 & 0& -2
  \end{array}\right)$,
  $B=\left(\begin{array}{ccc}
    -1 & 0& 0\\
    0 & 0& 0\\
    0 & 0& 2
  \end{array}\right)$.
\end{center}
 Then $Ax=0, Bx=0$ and so $Ax=\lambda Bx$ for all $\lambda \in \mathbb{C}$. Of course, such is the case since $x\neq 0$ and $x\in ker(A) \cap ker(B)$.

 In this example the matrices $A$ and $B$ are not invertible (as Ker$(A)$ and Ker$(B)\neq \{0\}$). On the other hand, using the arguments above, one can show that if for some matrices $A$ and $B$, all of the eigenvalues of the generalized eigenvalue problem $Ax=\lambda Bx$ are complex, then both $A$ and $B$ must be indefinite (in fact, we can state that neither $A$ nor $B$ can be invertible in this case).

 Thus, in order to prevent all the eigenvalues of the generalized eigenvalue problem $Ax=\lambda Bx$ from filling the whole complex plane, we need that at least one of the matrices $A$ and $B$ be invertible. Suppose the invertible one is $B$. Then
\begin{equation}
Ax=\lambda Bx \Leftrightarrow B^{-1}Ax=\lambda x,\nonumber
\end{equation}
i.e., we come up with a standard eigenvalue problem for the matrix $C=B^{-1}A$ (which is not necessarily hermitian even though $A$ and $B$ are).

One of the questions that arises in this context is the following one: ``How do we extend the space in such a way that the product $C=B^{-1}A$ is again hermitian?" We address this question by defining a generally indefinite inner product $[\ ,\ ]$ by setting
\begin{equation}[u,v]:=(Bu,v)
\end{equation}
where $(,)$ is the usual inner product of $\mathbb{C}^n$. We know that this defines a sesquilinear hermitian form (as $B$ is hermitian). Furthermore, for any $u, v$,
\begin{equation}[Cu,v]=(BCu,v)=(Au,v)=(u,Av)=(u,BCv)=(Bu,Cv)=[u,Cv],
\end{equation}
i.e., we note that now $C=B^{-1}A$ is ``hermitian`` since $[Cu,v]=[u,Cv]$, relative to this new inner product $[\ ,\ ]$.

Next, if $\lambda$ is a non-real eigenvalue of $C$ and $u$ is a corresponding eigenvector then, for our inner product $[\ ,\ ]$,
$$[Cu,u]=(BCu,u)=\lambda(Bu,u)=\lambda[u,u],$$ i.e.,
\begin{equation}
[Cu,u]=\lambda [u,u].
\end{equation}

Since $C=B^{-1}A$ is hermitian relative to the sesquilinear hermitian form $[\ ,\ ]$, the quantity $[Cu,u]$ must be real, i.e., the (generalized) {\it expectation value}, $[Cu,u]$, of $C$ relative to $[\ ,\ ]$ is real. This, together with (2.5) implies that
\begin{equation}
  {\rm Im}\,[Cu,u]=0= {\rm Im}\,(\lambda)[u,u] \Rightarrow [u,u]=0,
  \end{equation}
  since ${\rm Im}\, (\lambda)\neq 0$ (by hypothesis). Thus, the non-real eigenvalues of $C$ have necessarily neutral eigenvectors (relative to $[\ ,\ ]$).

  The remarks leading to (2.3) motivate the following definition of hermitian operators (relative to a sesquilinear hermitian form $[\ ,\ ]$).

  \begin{Definition} Let $[\ ,\ ]$ be a sesquilinear hermitian form on a vector space $V$. We say that a linear transformation $A$ on $V$ is hermitian if for every $ u, v\in V$ we have,
  \begin{equation}
  [Au,v]=[u,Av].
  \end{equation}
 \end{Definition}

 \begin{Remark} In the case when $B$ is an $n\times n$  positive definite matrix, $[\ ,\ ]$ defined by (2.3), is just a second inner product on the space.
 \end{Remark}

 \begin{Remark} In the case of a real hermitian operator on $(\mathbb{R}^n,(,))$ we know from linear algebra that $A=A^*$ $\Leftrightarrow$ $A=A^t$.
 \end{Remark}


\section{Decomposition into positive and negative subspaces}

Let $B$ be an $n \times n$ hermitian matrix such that the inner product $[u,v]:=(Bu,v)$ is non-degenerate (i.e., if for all $v$ we have $[u,v]=0$, then $u=0$). The indefinite inner product $[u,v]:=(Bu,v)$ can be used as a basis for decomposing the indefinite inner product space into a {\it Q-orthogonal direct sum} of so-called positive and negative subspaces. We proceed by way of an example.

\begin{Example}
Consider the indefinite inner product $[u,v]:=(Bu,v)$ on $\mathbb{R}^2$ where
$$B=\left(\begin{array}{cc}
1 & 0 \\
0 & -1
\end{array}\right)\in \mathbb{R}^{2\times2},$$

$u= {\rm col}\  (u_1, u_2)$ and $v = {\rm col}\  (v_1,v_2)$.

Then $[u,v]:=(Bu,v)= u_1v_1-u_2v_2$ $\Rightarrow$ $[u,u]=u_1^2-u_2^2$. Now the usual basis in $\mathbb{R}^2$ is
$e_1= {\rm col}\  (1, 0)$, $e_2= {\rm col}\  (0, 1)$. Since $[e_1,e_1]=1$ and  $[e_2,e_2]=-1$,
it follows that $e_1$ and $e_2$ are respectively positive and negative vectors in $(\mathbb{R}^2,[,])$, and the subspaces generated by $e_1$ (respectively $e_2$) are positive (respectively negative) subspaces of $(\mathbb{R}^2,[,])$. These subspaces are denoted respectively by $H^+$ and $H^-$. Hence a vector $u$ in the indefinite inner product space $(\mathbb{R}^2, [,])$ has a standard decomposition as
$u=u_1e_1+u_2e_2$ where now $u_1e_1$ and $u_2e_2$ belong to $H^+$ and $H^-$ respectively. This direct sum is $Q$-orthogonal in the sense that $[e_1,e_2]=(Be_1,e_2)=0$, or the two subspaces $H^+$ and $H^-$ are orthogonal with respect to $[\ ,\ ]$. When $H^+$ and $H^-$ are $[\ ,\ ]$-orthogonal, we write the decomposition of the original space in the form $H^+ [+] H^-$.

Hence we have come to the decomposition of $\mathbb{R}^2$ as an indefinite inner product space, $(\mathbb{R}^2, [ , ])$, into an $[ , ]$-orthogonal direct sum $\mathbb{R}^2 = H^+ [+] H^-$ of subspaces $(H^+, +[ , ])$ and $(H^-,-[ , ])$. Observe that $[ , ]$ (respectively $-[ , ]$) is an inner product on $H^+$ (respectively $H^-$).

We also note that neutral vectors (they necessarily exist), i.e., vectors $u={\rm col}\  (u_1,u_2)$ such that $u_1^2 = u_2^2$, or $|u_1| = |u_2|$
also have a decomposition into a $[\ ,\ ]$-orthogonal sum of two vectors each of which lies in the positive and negative spaces, $H^+$ and $H^-$.
\end{Example}

The following definitions lead us to the study of orthogonal complements.

\begin{Definition}
A vector $x_0$ is called an {\it isotropic vector} of the vector space $V$ if $x_0 \bot v$ ($x_0$ is orthogonal to $v$ relative to [,]) for all $v \in V.$
\end{Definition}

\begin{Definition} Let $\mathfrak{L}\subset(V,[,])$ be a subset of $V$. Then the [,]-orthogonal complement of $\mathfrak{L}$ is defined as the set of vectors $u\in V$ with $u\bot \mathfrak{L}$ in the sense of [,].
\end{Definition}
Thus $u \in \mathfrak{L}^\bot$ $\Leftrightarrow$  $[u,\mathfrak{L}]=0$.
Note that since the zero vector is $[,]-$orthogonal to all vectors, $\mathfrak{L}^\bot$ is a subspace, even when $\mathfrak{L}$ is merely a set, (see \cite{ikl}, \cite{Krein44}).

\begin{Definition}
Let $\mathfrak{L}$ be a subspace of $V$. The isotropic subspace $\mathfrak{L}^o$ of $\mathfrak{L}$ is the space
of all isotropic vectors of $\mathfrak{L}$.
\end{Definition}

\begin{Remark}
Note that $\mathfrak{L}^o=\mathfrak{L}\cap\mathfrak{L}^\bot$.
\end{Remark}
This is because if $u\in\mathfrak{L}^o$ then $u \in \mathfrak{L}$, by definition. Furthermore, $u\in\mathfrak{L}^o$ also implies that  $u \bot \mathfrak{L}$. Hence $u \in \mathfrak{L}^\bot$. Thus, $\mathfrak{L}^o \subseteq \mathfrak{L}\cap\mathfrak{L}^\bot$. Conversely, if $u \in \mathfrak{L}\cap\mathfrak{L}^\bot$, then $u\in \mathfrak{L}$ and $u\in \mathfrak{L}^\bot$. The latter implies that $[u, \mathfrak{L}] = 0.$ Hence $u$ is an isotropic vector of $\mathfrak{L}.$ Thus, $\mathfrak{L}\cap\mathfrak{L}^\bot \subseteq \mathfrak{L}^o$. The result follows.


\begin{Example} The indefinite inner product space $(\mathbb{R}^2,[,])$ where $[u,v]=(Bu,v)$ and $$B=\left(
                    \begin{array}{cc}
                      1 & 0 \\
                      0 & -1 \\
                    \end{array}
                  \right)$$ has no isotropic vector other than the zero vector. This is because if we let
$x={\rm col}\  (x_1,x_2)$ be an isotropic vector in $(\mathbb{R}^2,[,])$ then, by definition, $[x,u]=0$ for all $u\in \mathbb{R}^2$.
In particular, choosing $u={\rm col}\  (1,0)$ then $0=[x,u]=(Bx,u)=x_1.$ Similarly, using $v={\rm col}\  (0,1)$ we see that $0=[x,v]=(Bx,v)=x_2.$ Thus $x=0$, i.e., there is no isotropic vector in $(\mathbb{R}^2,[,])$ other than the zero vector.
\label{exam7}
\end{Example}

\begin{Example} We use the matrix $$B=\left(
         \begin{array}{ccc}
           1 & 0 & 0 \\
           0 & -1 & 0 \\
           0 & 0 & 0 \\
         \end{array}
       \right)$$ to define an indefinite inner product, $[\ ,\ ]$, on $\mathbb{R}^3$ by setting $[u,v]=(Bu,v)$, for all $u,v\in \mathbb{R}^3$. It is easy to see that $[u,v] = u_1v_1-u_2v_2.$ Now let $\mathfrak{L}=\{ \alpha\, {\rm col}\  (0,0,1) : \alpha\, \in \mathbb{R}\}$, be the linear span of the vector $e_3 \in \mathbb{R}^3$.

       Observe that for any given $u\in \mathbb{R}^3$ we have $[u,v]=0$ for all  $v\in \mathfrak{L}$. It follows that $\mathfrak{L}^\bot=\mathbb{R}^3$ and so                    $\mathfrak{L}^o=\mathfrak{L}\cap\mathfrak{L}^\bot=\mathfrak{L}\cap\mathbb{R}^3=\mathfrak{L}$, i.e., $\mathfrak{L}^o=\mathfrak{L}$ i.e., the isotropic subspace of $\mathfrak{L}$ is $\mathfrak{L}$ itself. Since $\mathfrak{L}^o\neq{0}$, we say that $\mathfrak{L}^o$ (and so $\mathfrak{L}$) is a degenerate isotropic subspace and it is the subspace generated by $e_3$.
       \label{exam8}
                  \end{Example}
\begin{Remark}
It can be shown that, in this case, there is no decomposition of $\mathbb{R}^3$ into a $[,]-$orthogonal direct sum of definite spaces.
\end{Remark}

\begin{Definition} Let $[\ ,\ ]$ be an indefinite inner product on the infinite dimensional vector space $(V,[,])$ over $\mathbb{C}$. If $V$ admits a canonical decomposition $V=H^+[+]H^-$, where $(H^+,[,])$ and $(H^-,-[,])$ are Hilbert spaces, is called a Krein space. In the case where either $H^+$ or $H^-$ is finite dimensional, $V$ is called a {\it Pontryagin space}.
\end{Definition}
\begin{Example} While the non-degenerate space $(\mathbb{R}^2,[,])$ of Example~\ref{exam7} admits the $[,]-$ orthogonal decomposition $(\mathbb{R}^2,[,])=H^+ [+] H^-$, where $(H^+,[,])$ and $(H^-,-[,])$ are Hilbert spaces. Here $(H^+,[,])$ (resp. $(H^-,-[,])$) is defined by the span of the vector ${\rm col}\ (1,0)$ (resp. ${\rm col}\ (0,1)$). Thus, $(\mathbb{R}^2,[,])$ is a Pontryagin space. However, in the case of the degenerate space $(\mathbb{R}^3,[,])$ of Example~\ref{exam8} one can prove that because of the presence of the isotropic subspace spanned by ${\rm col}\ (0,0,1)$ such a decomposition is not admitted and so this cannot be a Krein (Pontryagin) space. The above conclusions for $(\mathbb{R}^2,[ ,])$ and $(\mathbb{R}^3,[ ,])$ are consequences of the more general Lemma below whose proof is basically clear from the definitions.
\end{Example}
\begin{Lemma}
If $(V,[,])$ is a Krein space then $V$ cannot contain a (non-trivial) isotropic vector.
\end{Lemma}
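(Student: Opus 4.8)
The plan is to work directly from the defining canonical decomposition of a Krein space, with no machinery beyond sesquilinearity and the positive-definiteness of $[\,,\,]$ on each summand. Suppose, toward a contradiction, that $x_0 \in V$ is a non-trivial isotropic vector, so that $[x_0, v] = 0$ for every $v \in V$ while $x_0 \neq 0$. Since $(V,[\,,\,])$ is a Krein space it admits a canonical decomposition $V = H^+ [+] H^-$, where $(H^+,[\,,\,])$ and $(H^-,-[\,,\,])$ are Hilbert spaces; in particular I may write $x_0 = x^+ + x^-$ with $x^+ \in H^+$ and $x^- \in H^-$, and the notation $[+]$ records that this sum is $[\,,\,]$-orthogonal, i.e. $[x^+, x^-] = [x^-, x^+] = 0$.

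First I would apply the isotropy hypothesis to the particular vector $v = x^+ \in V$. Using linearity in the first argument and the cross-term vanishing,
\[
0 = [x_0, x^+] = [x^+, x^+] + [x^-, x^+] = [x^+, x^+].
\]
Because $(H^+,[\,,\,])$ is a Hilbert space, $[\,,\,]$ restricted to $H^+$ is positive definite, so $[x^+,x^+]=0$ forces $x^+ = 0$. Symmetrically, applying the isotropy hypothesis to $v = x^- \in V$ gives
\[
0 = [x_0, x^-] = [x^+, x^-] + [x^-, x^-] = [x^-, x^-],
\]
and since $-[\,,\,]$ is positive definite on $H^-$, we get $-[x^-,x^-]=0$, hence $x^- = 0$. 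Therefore $x_0 = x^+ + x^- = 0$, contradicting $x_0 \neq 0$; consequently $V$ contains no non-trivial isotropic vector.

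There is no real obstacle here — the statement is essentially a bookkeeping consequence of the definitions, as the excerpt itself anticipates. The only point that genuinely does the work is recalling that the canonical sum is $[\,,\,]$-orthogonal, which is exactly what lets one test isotropy against the two components $x^{\pm}$ separately and then invoke definiteness on each Hilbert summand. If desired, the same argument can be phrased without contradiction: it shows that any vector $[\,,\,]$-orthogonal to all of $V$ is already $[\,,\,]$-orthogonal to its own $H^+$ and $H^-$ parts, which immediately annihilates both parts.
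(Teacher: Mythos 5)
Your proof is correct and is exactly the argument the paper has in mind: the paper offers no written proof, stating only that the lemma is ``basically clear from the definitions,'' and your decomposition of the putative isotropic vector as $x_0=x^++x^-$, testing against each component, and invoking positive definiteness on each Hilbert summand is precisely that definitional argument spelled out.
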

\begin{Corollary}
 The space $(\mathbb{R}^3,[ ,])$ with indefinite inner product defined as in Example~\ref{exam8}, namely, $[u,v]=(Bu,v)$ where $$B=\left(
       \begin{array}{ccc}
        1 & 0 & 0 \\
         0 & -1 & 0 \\
          0 & 0 & 0 \\
          \end{array}
           \right)$$
is not a Krein (Pontryagin) space as it contains an isotropic subspace.
\end{Corollary}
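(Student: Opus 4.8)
The plan is to derive this immediately from the Lemma just proved, together with the computation already carried out in Example~\ref{exam8}. Concretely, I would first exhibit a non-trivial isotropic vector in $(\mathbb{R}^3,[,])$: taking $e_3 = {\rm col}\,(0,0,1)$, one has $Be_3 = 0$, hence $[e_3,v] = (Be_3,v) = 0$ for every $v \in \mathbb{R}^3$, while $e_3 \neq 0$. Thus the isotropic subspace $\mathfrak{L}^o$ of $\mathfrak{L} = \operatorname{span}\{e_3\}$ equals $\mathfrak{L}$ itself (as noted in Example~\ref{exam8}), so $(\mathbb{R}^3,[,])$ genuinely contains a non-zero isotropic vector.

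Next I would invoke the Lemma contrapositively. The Lemma asserts that a Krein space contains no non-trivial isotropic vector; since we have just produced one, $(\mathbb{R}^3,[,])$ cannot be a Krein space. For the Pontryagin case I would recall that a Pontryagin space is by definition a Krein space subject to an extra finiteness condition on $H^+$ or $H^-$, so the very same obstruction applies — equivalently, the mechanism behind the Lemma (a candidate canonical decomposition $V = H^+[+]H^-$ into Hilbert spaces forces $[,]$ to be non-degenerate, which fails here because $e_3$ is $[,]$-orthogonal to all of $\mathbb{R}^3$) rules out any such decomposition directly.

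There is essentially no serious obstacle here: the statement is just the Lemma packaged and applied to the explicit degenerate form of Example~\ref{exam8}. The only point worth a sentence of care is that the Lemma was phrased using the (infinite-dimensional) definition of a Krein space, whereas $\mathbb{R}^3$ is finite-dimensional; I would remark that the "no isotropic vector" conclusion is insensitive to dimension, since a $[,]$-orthogonal direct sum of Hilbert spaces — finite-dimensional or not — is automatically non-degenerate, so the obstruction is genuine in this finite-dimensional setting as well.
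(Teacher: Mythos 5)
Your proposal is correct and follows essentially the same route as the paper: exhibit the non-trivial isotropic vector $e_3$ from Example~\ref{exam8} (since $Be_3=0$) and apply the preceding Lemma, which forbids non-trivial isotropic vectors in a Krein (hence Pontryagin) space. The added remark on the finite-dimensional setting is a harmless clarification and does not change the argument.
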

\begin{Example}
We show that the space $(\mathbb{R}^4,[ ,])$ with the indefinite inner product $[u,v]=(Bu,v)$ where $$B=\left(
       \begin{array}{cccc}
        1 & 0 & 0 & 0 \\
         0 & -1 & 0 & 0 \\
          0 & 0 & -1 & 0 \\
          0 & 0 & 0 & -1 \\
          \end{array}
           \right)$$
           is a Pontryagin space. \\ \\

Note that for $4-$vectors $u$ and $v$ we have
 \begin{equation}
[u,v]=u_1v_1-u_2v_2-u_3v_3-u_4v_4 \quad \Rightarrow \quad [u,u]=u_1^2-u_2^2-u_3^2-u_4^2,\nonumber
\end{equation}
which is the Lorentz metric.\\
Now consider the standard basis for $\mathbb{R}^4$ given by
\begin{equation}
e_1=(1,0,0,0)^t, e_2=(0,1,0,0)^t,\ldots .\nonumber
\end{equation}
Then $[e_1,e_1]=1$, $[e_2,e_2]=-1$, $[e_3,e_3]=-1$, and $[e_4,e_4]=-1$. In addition, $[e_1,e_i]=0$ for $i=2,3,4$ and $[e_i,e_j]=0$, $i,j=2,3,4$.
So $e_1$ is a positive vector and $e_2,e_3,e_4$ are negative vectors.

Then the subspace generated by $e_1$ (i.e., $H^+={\rm span}\,\{e_1\}$) is a positive subspace, while the subspace spanned by $e_2, e_3, e_4$ (i.e., $H^-= {\rm span}\,\{e_2,e_3,e_4\}$) is a negative subspace (relative to the indefinite metric, $[\,,\,]$).


Then $u\in\mathbb{R}^4$ may be decomposed as $u=(u_1, u_2, u_3, u_4)^t= u_1e_1+u_2e_2+u_3e_3+u_4e_4$, where $u^+=u_1e_1\in H^+$ (as $[u^+,u^+]=u_1^2>0$)  and $u^-=u_2e_2+u_3e_3+u_4e_4\in H^-$ (as $[u^-,u^-]=-u_2^2-u_3^2-u_4^2<0$).

Furthermore,
             $[u^-,u^-]=0$ $\Rightarrow$ $u_2=u_3=u_4=0$ $\Rightarrow$ $u^-=0$ $\Rightarrow$ $(H^-,-[,])$ is a positive definite space and so is a Hilbert space .

             Similarly $[u^+,u^+]=0$ $\Rightarrow$ $u_1=0$ $\Rightarrow$ $u^+=0$ $\Rightarrow$ $(H^+,[,])$ is a positive definite space and so is a Hilbert space.

             On the other hand since $[e_1,e_i]=0$ for $i=2,3,4$ $\Rightarrow$ $[e_1, u^-]=0$ thus $[H^+,H^-]=0$ i.e., subspaces $H^+$ and $H^-$ are orthogonal relative to $[\ ,\ ]$.

             Therefore there is a decomposition of $\mathbb{R}^4$ as $\mathbb{R}^4=H^+[+]H^-$ being an orthogonal (relative to $[\ ,\ ]$) direct sum of two Hilbert spaces and the two Hilbert spaces are finite dimensional. Hence $(\mathbb{R}^4, [,])$ is a Pontryagin space.
\end{Example}
\begin{Remark}
In general one can show that $(\mathbb{R}^n, [,])$, with $[u,v]=(Bu,v)$, where $B$ is any invertible symmetric matrix, is a Pontryagin space.
\end{Remark}
\begin{Theorem}
Let $(V,[,])$ be a Krein space and $V=H^+[+]H^-$ its canonical decomposition. Define $(,):V\times V$$\rightarrow$$\mathbb{C}$ by \begin{equation}
\label{ind0}
(u,v)=[u^+,v^+]-[u^-,v^-]
 \end{equation}
 where $u=u^+ + u^-$, $v=v^+ + v^-$, $u^\pm, v^\pm \in H^\pm$. Then $(V,(,))$ is a Hilbert space (often called the {\bf Hilbert majorant} space of $(V,[,])$.
\end{Theorem}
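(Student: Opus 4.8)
The plan is to verify, in order, that $(\,\cdot\,,\,\cdot\,)$ is well defined, that it is a sesquilinear Hermitian form, that it is positive definite (hence a genuine inner product), and finally that $(V,(\,,\,))$ is complete in the induced norm. The first three steps are essentially bookkeeping that piggy-backs on the corresponding properties of $[\,,\,]$; the completeness is where the content lies, and it is really just the observation that $(V,(\,,\,))$ is an orthogonal Hilbert space direct sum.

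\textbf{Well-definedness and the form axioms.} First I would note that the canonical decomposition $u=u^{+}+u^{-}$ with $u^{\pm}\in H^{\pm}$ is \emph{unique}: this is part of what ``$[+]$-direct sum'' means, and in any case if $x\in H^{+}\cap H^{-}$ then $[x,x]\ge 0$ (since $(H^{+},[,])$ is an inner product space) and $[x,x]\le 0$ (since $(H^{-},-[,])$ is one), so $[x,x]=0$ and $x=0$ by definiteness of $[,]$ on $H^{+}$. Hence the projectors $P^{\pm}\colon V\to H^{\pm}$, $P^{\pm}u=u^{\pm}$, are well defined and linear, so $(u,v)=[P^{+}u,P^{+}v]-[P^{-}u,P^{-}v]$ is sesquilinear (linear in the first slot, conjugate-linear in the second) because $[,]$ is, and Hermitian because $(v,u)=\overline{[u^{+},v^{+}]}-\overline{[u^{-},v^{-}]}=\overline{(u,v)}$.

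\textbf{Positive definiteness.} For $u=u^{+}+u^{-}$ one has $(u,u)=[u^{+},u^{+}]-[u^{-},u^{-}]$, a sum of two nonnegative terms: $[u^{+},u^{+}]\ge 0$ because $(H^{+},[,])$ is a Hilbert space, and $-[u^{-},u^{-}]\ge 0$ because $(H^{-},-[,])$ is. If $(u,u)=0$ then both terms vanish, forcing $u^{+}=0$ and $u^{-}=0$ by definiteness on each summand, i.e.\ $u=0$. Thus $(\,,\,)$ is an inner product, and the induced norm satisfies the Pythagorean identity $\|u\|^{2}=\|u^{+}\|_{H^{+}}^{2}+\|u^{-}\|_{H^{-}}^{2}$, where $\|\cdot\|_{H^{+}}$, $\|\cdot\|_{H^{-}}$ denote the Hilbert norms of $[,]$ on $H^{+}$ and of $-[,]$ on $H^{-}$; in particular $\|P^{\pm}u\|\le\|u\|$, so the projectors are contractions, and $H^{+}\perp H^{-}$ with respect to $(\,,\,)$.

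\textbf{Completeness (the crux).} With the Pythagorean identity established, $(V,(\,,\,))$ is precisely the orthogonal Hilbert space direct sum of the two complete spaces $(H^{+},[,])$ and $(H^{-},-[,])$ furnished by the definition of a Krein space, and completeness then follows from the standard direct-sum argument: given a $(\,,\,)$-Cauchy sequence $(u_{n})$, the estimate $\|u_{n}^{\pm}-u_{m}^{\pm}\|_{H^{\pm}}\le\|u_{n}-u_{m}\|$ shows that $(u_{n}^{+})$ and $(u_{n}^{-})$ are Cauchy in $H^{+}$ and $H^{-}$, hence converge to some $u^{+}\in H^{+}$, $u^{-}\in H^{-}$; putting $u:=u^{+}+u^{-}$ and using $\|u_{n}-u\|^{2}=\|u_{n}^{+}-u^{+}\|_{H^{+}}^{2}+\|u_{n}^{-}-u^{-}\|_{H^{-}}^{2}\to 0$ gives $u_{n}\to u$ in $(V,(\,,\,))$. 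Hence $(V,(\,,\,))$ is complete, i.e.\ a Hilbert space --- the Hilbert majorant of $(V,[,])$. I do not expect a genuine obstacle here; the only point that really uses the hypotheses is that the summands are already complete (by the definition of Krein space), so that all that must actually be proved is that Cauchy-ness in $V$ descends to each summand, which is exactly what the Pythagorean identity delivers. It is also worth recording, for later use in the paper, that the operator $J:=P^{+}-P^{-}$ satisfies $[u,v]=(Ju,v)$ and $J^{2}=I$, though this is not needed for the present statement.
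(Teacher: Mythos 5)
Your proof is correct. Note, however, that the paper itself states this theorem without proof: it appears in the expository review of standard Krein-space material, with the background attributed to the references (Iohvidov--Krein--Langer and the Mingarelli lecture notes), so there is no in-paper argument to compare against. Your route --- uniqueness of the decomposition (via $H^{+}\cap H^{-}=\{0\}$ and directness of the sum), sesquilinearity and hermiticity inherited from $[\,,\,]$, positive definiteness from the definiteness of $[\,,\,]$ on $H^{+}$ and of $-[\,,\,]$ on $H^{-}$, the Pythagorean identity $\|u\|^{2}=\|u^{+}\|_{H^{+}}^{2}+\|u^{-}\|_{H^{-}}^{2}$, and completeness by descending Cauchyness to the two complete summands --- is precisely the standard argument in those references, and every step is sound; the only hypothesis doing real work is, as you say, that the definition of a Krein space already supplies completeness of $(H^{+},[\,,\,])$ and $(H^{-},-[\,,\,])$.
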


\begin{Theorem}
Let $(V,[,])$ be a Krein space and $(V,(,))$ its Hilbert majorant. Then
\begin{equation}\label{ind1}
[u,v]=(u^+,v^+)-(u^-,v^-).
\end{equation}
Let $\|,\|$ denote the corresponding Hilbert space norm defined by $\|u\|^2=(u,u).$ Then,
\begin{equation}
\label{ind2}
[u,u]=\|u^+\|^2_V-\|u^-\|^2_V.
\end{equation}
\end{Theorem}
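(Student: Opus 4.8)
The plan is to unwind the definitions, reducing everything to the defining formula \eqref{ind0} of the majorant inner product together with the two structural facts encoded in the canonical decomposition $V = H^+[+]H^-$: that every $u \in V$ has a \emph{unique} representation $u = u^+ + u^-$ with $u^\pm \in H^\pm$, and that $H^+$ and $H^-$ are mutually $[,]$-orthogonal (this is precisely what the bracket notation $[+]$ records). From these, \eqref{ind1} follows by a two-line computation, and \eqref{ind2} is the special case $v = u$ read through the definition $\|u\|^2 = (u,u)$.

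First I would expand $[u,v]$ using sesquilinearity: writing $u = u^+ + u^-$ and $v = v^+ + v^-$,
\begin{equation}
[u,v] = [u^+,v^+] + [u^+,v^-] + [u^-,v^+] + [u^-,v^-].\nonumber
\end{equation}
The two mixed terms $[u^+,v^-]$ and $[u^-,v^+]$ vanish because $H^+$ and $H^-$ are $[,]$-orthogonal, leaving $[u,v] = [u^+,v^+] + [u^-,v^-]$.

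Next I would identify each surviving term with the majorant inner product. The crucial observation is that a vector lying entirely in $H^+$, say $w \in H^+$, has canonical decomposition $w = w + 0$, so its positive part is $w$ and its negative part is $0$; dually, $w \in H^-$ has positive part $0$ and negative part $w$. Applying the defining formula \eqref{ind0} to the pair $u^+, v^+ \in H^+$ therefore gives $(u^+,v^+) = [u^+,v^+] - [0,0] = [u^+,v^+]$, while applying it to $u^-, v^- \in H^-$ gives $(u^-,v^-) = [0,0] - [u^-,v^-] = -[u^-,v^-]$. Substituting these into $[u,v] = [u^+,v^+] + [u^-,v^-]$ yields $[u,v] = (u^+,v^+) - (u^-,v^-)$, which is \eqref{ind1}. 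Setting $v = u$ (so $v^\pm = u^\pm$) and recalling $\|u^\pm\|_V^2 = (u^\pm,u^\pm)$ gives $[u,u] = (u^+,u^+) - (u^-,u^-) = \|u^+\|_V^2 - \|u^-\|_V^2$, which is \eqref{ind2}.

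There is no genuine obstacle here: the real content is carried by the previous theorem (which guarantees $(,)$ is a Hilbert inner product) and by the definition of the canonical decomposition. The one point that merits care is the bookkeeping in the middle step — one must be sure that decomposing a vector already known to lie in $H^+$ (or $H^-$) reproduces that vector together with a zero component, which is immediate from the \emph{uniqueness} of the canonical decomposition — and that the subscript $V$ in \eqref{ind2} denotes the majorant norm on $V$ restricted to $H^\pm$, consistent with the way $(,)$ was defined.
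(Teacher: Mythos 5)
Your argument is correct and is exactly the standard derivation this theorem rests on: the paper states the result without proof, treating it as immediate from the defining formula \eqref{ind0}, the uniqueness of the decomposition $u=u^++u^-$, and the $[,]$-orthogonality of $H^\pm$ — which is precisely the bookkeeping you carry out. Nothing is missing; in particular your identification $(w_1,w_2)=[w_1,w_2]$ on $H^+$ and $(w_1,w_2)=-[w_1,w_2]$ on $H^-$ is the key step and is handled correctly via the zero components.
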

We can summarize the preceding discussion by saying that for any Krein space $(V,[,])$ there is an inner product $(,)$ on $V$ such that $(V,(,))$ is a Hilbert space. The relation between the definite, $(,)$, and indefinite, $[\ ,\ ]$, inner products is
\begin{equation}[u,v]=(u^+,v^+)-(u^-,v^-)\end{equation}
or its inversion
\begin{equation}\label{ind4}
(u,v)=[u^+,v^+]-[u^-,v^-]
\end{equation}
where $u^\pm, v^\pm \in H^\pm$ and $V=H^+[+]H^-$ is the canonical decomposition.

\section{The fundamental symmetry operator}

The notion of a {\bf fundamental symmetry operator} is one of the fundamental concepts in the study of Krein spaces. By way of background material we proceed by first introducing the concept of {\bf ortho-projectors}.

Let $(V,[,])$ be a Krein space and consider its orthogonal direct sum (canonical) decomposition $V=H^+[+]H^-$ (relative to $[,]$). Let $u \in V$, then $u=u^++u^-$ where $u^+$ (respectively $u^-$) belong to $H^+$ (respectively $H^-$).

For each $u\in V$ we define the two linear ortho-projectors $P_\pm$ by setting
$$P_+u=u^+,\quad\quad P_-u=u^-.$$
Clearly, $(P_++P_-)u=P_+u+P_-u=u^++u^-=u$ $\Rightarrow$ $P_++P_-=I$, where $I$ is the identity operator on $V$.
Thus, the operators $P_\pm$ project $u$ onto $H^\pm$ respectively, where each space $H^\pm$ is the orthogonal complement of the other (relative to $[,]$).

Associated with a given canonical decomposition $V=H^+[+]H^-$ of a Krein space $(V, [,])$, there is an operator $J:V\to V$, called a {\bf fundamental symmetry} that is defined in terms of the ortho-projectors $P_{\pm}$ by
\begin{equation}\label{cso}
J = P_+ - P_-.
\end{equation}
In other words, for $u\in V$ we have $Ju=u^+ - u^-$.
We now define the notion of an {\it adjoint}, or ``$J$-adjoin`` of an operator in a Krein space, $(V, [,])$, with Hilbert majorant, $(V, (,))$. Let $A$ be a linear operator whose domain, $D(A)$, is dense in $V$ and let $D(A^+)$ be the set of all vectors $v\in V$ such that there is an associated vector $z\in V$ such that for all $u\in V$ we have $$[Au,v] = [u,z].$$ Then $D(A^+)$ is a subspace of $V$ and the vector $z$ is uniquely determined since $V$ contains no isotropic vectors, by assumption. We usually write $z = A^+v$ where $A^+$ denotes the {\it Krein space adjoint} of the operator $A$. As usual we will maintain the notation $A^*$ for the Hilbert space adjoint of $A$. The (Krein space) adjoint therefore satisfies
$$[Au,v] = [u,A^+v],\quad\quad u\in D(A), v\in D(A^+).$$
The following properties of the Krein space adjoint hold (proofs may be found in \cite{ikl}) and are similar to the case of Hilbert space adjoints.

\begin{Lemma}\label{lem1} Let $(V, [,])$ be a Krein space with Hilbert majorant, $(V, (,))$. The Krein space adjoints, $A^+, B^+$, of densely defined operators $A, B$ satisfy
\begin{enumerate}
\item $A^+ + B^+ \subseteq  (A+B)^+ $
\item $B^+A^+ \subseteq (AB)^+   $
\item $A \subseteq (A^+)^+$
\item $A^* = JA^+J$
\item $A^+ = JA^*J$
\item $(\lambda A)^+ = \overline{\lambda} A^+,\quad \lambda \in \mathbb{C},$
\item If $A \subseteq B$ then $B^+ \subseteq A^+$,
\end{enumerate}
provided all the cited operators exist and have dense domains of definition.
\end{Lemma}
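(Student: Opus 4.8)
The plan is to reduce all seven assertions to the familiar properties of the Hilbert-space adjoint by transporting them through the fundamental symmetry $J$, viewed as an operator on the Hilbert majorant $(V,(,))$. First I would record the elementary facts about $J$ itself. Since the ortho-projectors satisfy $P_+P_- = P_-P_+ = 0$ and $P_\pm^2 = P_\pm$, expanding gives $J^2 = (P_+ - P_-)^2 = P_+ + P_- = I$. Next, combining \eqref{ind1} with the fact that $H^+$ and $H^-$ are orthogonal in the majorant (immediate from \eqref{ind4}) yields the \emph{bridge identity} $[u,v] = (Ju,v)$ for all $u,v \in V$; from it, $(Ju,v) = [u,v] = \overline{[v,u]} = \overline{(Jv,u)} = (u,Jv)$, so $J = J^*$. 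Hence $J$ is a bounded, everywhere-defined self-adjoint involution ($J = J^* = J^{-1}$) on the majorant, and in particular conjugation $S \mapsto JSJ$ is an order-preserving involution on the densely defined operators on $V$.

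Next I would prove (5), from which (4) is immediate. Let $A$ be densely defined. Using the bridge identity and $J = J^*$, the defining relation $[Au,v] = [u,A^+v]$ for all $u \in D(A)$ rewrites as $(Au,Jv) = (u,JA^+v)$ for all $u \in D(A)$; by the very definition of the Hilbert-space adjoint this is equivalent to the statement that $Jv \in D(A^*)$ together with $A^*(Jv) = JA^+v$. Therefore $D(A^+) = J\,D(A^*)$ and $A^+ = J^{-1}A^*J = JA^*J$, which is (5); conjugating once more gives $JA^+J = J(JA^*J)J = A^*$, which is (4).

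I would then dispose of the remaining items mechanically, from the corresponding Hilbert-space facts, by conjugating with $J$ and using $J^2 = I$. From $A \subseteq A^{**}$ one gets $(A^+)^+ = J(JA^*J)^*J = J(JA^{**}J)J = A^{**} \supseteq A$, i.e.\ (3). From $(\lambda A)^* = \overline{\lambda}A^*$ one gets $(\lambda A)^+ = J\overline{\lambda}A^*J = \overline{\lambda}A^+$, i.e.\ (6). If $A \subseteq B$ then $B^* \subseteq A^*$, hence $B^+ = JB^*J \subseteq JA^*J = A^+$, i.e.\ (7). Finally, from the Hilbert-space inclusions $A^* + B^* \subseteq (A+B)^*$ and $B^*A^* \subseteq (AB)^*$ one obtains, after conjugation and inserting $J^2 = I$, $A^+ + B^+ = J(A^*+B^*)J \subseteq J(A+B)^*J = (A+B)^+$ and $B^+A^+ = (JB^*J)(JA^*J) = JB^*A^*J \subseteq J(AB)^*J = (AB)^+$, which are (1) and (2).

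The only delicate point — and the place I expect to have to be careful — is the bookkeeping of domains: one must verify that $D(A^+) = J\,D(A^*)$ is an exact equality rather than a one-sided inclusion, and that conjugation by $J$ transfers each Hilbert-space domain inclusion to the Krein-space setting verbatim. Both are routine because $J$ is bounded with bounded inverse, and because the absence of isotropic vectors in $V$ makes each Krein adjoint single-valued, so that the displayed operator identities are genuinely meaningful. No algebraic step is hard; all the content is in the passage $A^+ = JA^*J$ and in keeping the domains honest.
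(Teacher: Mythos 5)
Your proposal is correct. The paper itself gives no proof of this lemma (it simply cites Iohvidov--Krein--Langer), and your route --- deriving $A^+ = JA^*J$ directly from the definition via the bridge identity $[u,v]=(Ju,v)$ and then transporting the whole Hilbert-space adjoint calculus by conjugation with $J$ --- is exactly the standard argument behind that citation, consistent with items (4)--(5) of the statement. The one place where the boundedness and bounded invertibility of $J$ is genuinely used (and which you rightly flagged under ``domain bookkeeping'') is the exact equality $(TJ)^* = JT^*$, needed e.g.\ in computing $(JA^*J)^* = JA^{**}J$ for item (3); since $J=J^*=J^{-1}$ is everywhere defined and bounded, this holds and your domain identity $D(A^+)=J\,D(A^*)$ is an honest equality.
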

\begin{Remark} All inclusions in Lemma~\ref{lem1} become equalities in the case where the operators $A,B$ are each bounded on $(V, (,))$, although strict inclusions in (1)-(3) may arise when either $A$ or $B$ or both $A$ and $B$ are unbounded, for example, if either $A$ or $B$ is not closed. A sufficient condition for (3) to hold with equality is that $A$ be closed.
\end{Remark}

\begin{Lemma} \label{lem2} The fundamental symmetry operator $J$ defined in \eqref{cso} has the following properties:
\begin{enumerate}
\item For $u\in H^{+}$ (resp. $u \in H^-$), $Ju = u^+$ (resp. $Ju = -u^-$),
\item $J^2=I$, where $I$ is the identity operator on $V$,
\item $J$ is invertible and $J^{-1}=J$,
\item $J$ is a bounded operator (viewed as an operator on the Hilbert majorant space and its endowed norm) and $\|J\|=1$.
\item $J$ relates the inner products $(,)$ and $[,]$ on $V$ as follows: For all $u, v \in V$,
\begin{equation}\label{eq215}
(u,v) = [Ju,v]
\end{equation}
 and
 \begin{equation}\label{eq216}
[u,v] = (Ju,v).
\end{equation}
\item $J=J^*$ is self-adjoint (relative to the inner product $(,)$ on the majorant space),
\item $JJ^*=I$, i.e., $J$ is unitary (viewed as an operator on the Hilbert majorant space),
\end{enumerate}
\end{Lemma}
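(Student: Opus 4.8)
The plan is to obtain the seven properties in sequence, working directly from the definition $J=P_+-P_-$ in \eqref{cso}, from the $[,]$-orthogonality of the canonical decomposition $V=H^+[+]H^-$, and from the relations \eqref{ind0}--\eqref{ind2} linking the indefinite form $[,]$ with the majorant inner product $(,)$. It is cleanest to reorder the list slightly: settle the purely algebraic items (1), (2), (3) first, then the inner-product identities (5), then deduce boundedness (4), and finally the self-adjointness (6) and unitarity (7).

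For (1): if $u\in H^+$ then $u=u^+$ and $u^-=0$, so $Ju=P_+u-P_-u=u^+$, and the case $u\in H^-$ is identical up to sign. For a general $u=u^++u^-$, applying (1) to the two summands gives $Ju^+=u^+$ and $Ju^-=-u^-$, hence $J^2u=Ju^+-Ju^-=u^++u^-=u$, which is (2); consequently $J$ is its own two-sided inverse, which is (3). For (5) I would expand $[Ju,v]=[u^+-u^-,\,v^++v^-]$ and invoke $[H^+,H^-]=0$ to cancel the cross terms, leaving $[u^+,v^+]-[u^-,v^-]=(u,v)$ by \eqref{ind4}; replacing $u$ by $Ju$ and using $J^2=I$ then gives $[u,v]=(Ju,v)$, i.e.\ \eqref{eq215}--\eqref{eq216}. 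For (4), since the positive and negative parts of $Ju$ are $u^+$ and $-u^-$, formula \eqref{ind4} applied to $Ju$ yields $\|Ju\|^2=(Ju,Ju)=[u^+,u^+]-[u^-,u^-]=(u,u)=\|u\|^2$, so $J$ is an isometry of the majorant, hence bounded with $\|J\|=1$ (as $V\neq\{0\}$). For (6), computing $(Ju,v)$ and $(u,Jv)$ from \eqref{ind4} and using $(Ju)^\pm=\pm u^\pm$, $(Jv)^\pm=\pm v^\pm$, both sides collapse to $[u^+,v^+]+[u^-,v^-]$; since $J$ is bounded and everywhere defined this forces $J=J^*$. Then $JJ^*=J^2=I$ (and likewise $J^*J=I$) by (2), which is (7).

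There is no genuinely hard step here; the lemma is entirely bookkeeping with the canonical decomposition. The one place to be careful is tracking the components of $Ju$ correctly: namely $(Ju)^+=u^+$ but $(Ju)^-=-u^-$. That sign flip is precisely what makes $J$ an isometry for $(,)$ --- giving (4) and the unitarity (7) --- while being only formally symmetric for $[,]$; everything else reduces to substitution into \eqref{ind0}--\eqref{ind2} together with the orthogonality $[H^+,H^-]=0$.
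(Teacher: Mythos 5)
Your proposal is correct and follows essentially the same route as the paper: componentwise computation of $J$ on the canonical decomposition for (1)--(3), expansion of $[Ju,v]$ with $[H^+,H^-]=0$ and \eqref{ind4} for (5), the isometry computation $\|Ju\|=\|u\|$ for (4), symmetry plus boundedness for (6), and $J^*=J=J^{-1}$ for (7). The only cosmetic difference is that you route the norm and symmetry computations through \eqref{ind4} while the paper uses the $(,)$-orthogonality of $u^\pm$ directly; the content is the same.
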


\begin{proof} We outline the proof. The first claim being clear we proceed to prove the second claim. For $u\in V$, $J^2u=J(Ju)=J(u^+-u^-)=Ju^+-Ju^-=u^+-(-u^-)=u^+ + u^-=u$. From this there now follows the third claim.

For $u\in V$ write  $\|u\|^2=(u,u)$. Since $Ju = u^+ - u^-$ where $(u^+,u^-) = 0$, we see that $$\|Ju\|^2 = (u^+ - u^-,u^+ - u^-) =(u^+,u^+)+(u^-,u^-)=(u^++u^-,u^++u^-)$$
$$=\|u^++u^-\|^2=\|u\|^2,$$ from which we infer that $J$ is a bounded operator on $V$ and, in fact, $\|J\|=1$.

The relationship between the inner products is straightforward. It is easy to see that \eqref{eq216} follows from \eqref{eq215} upon replacing $u$ by $Ju$ and using the fact that $J^2=I$. So we only have to verify  \eqref{eq215}.
This is also straightforward as
\begin{eqnarray*}
[Ju,v] &=& [u^+-u^-,v^++v^-]\\
&=& [u^+,v^+] - [u^-,v^-]\\
&=& (u,v)
\end{eqnarray*}
by \eqref{ind4}.

In order to prove the sixth claim, we first show that $J$ is a symmetric operator on the Hilbert majorant space. In order to prove symmetry we must show that $(Ju,v) = (u,Jv)$, for all $u, v, \in V$. But this is clear since
\begin{equation*}(Ju,v)=(u^+-u^-,v^++v^-)=(u^+,v^+)-(u^-,v^-)=(u^++u^-,v^+-v^-)=(u,Jv).
\end{equation*}
On the other hand, by definition of the adjoint of a bounded operator, we have for all $u, v \in V$,
\begin{equation}(Ju,v)=(u,J^*v). \end{equation}
Since $J$ is symmetric it already follows from this adjoint relation that  $(u,(J^*-J)v)=0$ for all $u$. Since there is no isotropic vector in $V$ we conclude that $J^*=J$ (since $v$ is arbitrary). Hence the fundamental symmetry operator $J$ is self-adjoint in $(V,(,))$.

The final claim is clear since $J^*=J=J^{-1}$, so that $JJ^*=I$.
\end{proof}

\begin{Lemma} (Schwarz inequality)\, Let $(H,(,))$ be a Hilbert space. Then for any vectors $u,v\in H$ there holds
$$ |(u,v)| \leq \|u\|\,\|v\|,$$
where $\|u\|=\sqrt{(u,u)}$. A similar result holds in a Krein space, $(H,[,])$; that is, for $u,v\in H$ there holds
$$ |[u,v]| \leq \|u\|\,\|v\|.$$
\label{lem0}
\end{Lemma}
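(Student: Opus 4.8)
The plan is to prove the Hilbert space inequality by the classical argument and then deduce the Krein space version from it by exploiting the identity $[u,v]=(Ju,v)$ together with the fact that $J$ is an isometry of the majorant space.

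For the Hilbert space statement, first dispose of the trivial case $v=0$, where both sides vanish. For $v\neq 0$ set $\lambda=(u,v)/(v,v)$ and expand $0\le(u-\lambda v,\,u-\lambda v)$ using sesquilinearity and hermiticity of $(\,,\,)$; the cross terms combine so that $0\le\|u\|^2-|(u,v)|^2/\|v\|^2$, and rearranging yields $|(u,v)|\le\|u\|\,\|v\|$. (Alternatively one may introduce a unimodular phase $e^{i\theta}$ making $e^{i\theta}(u,v)$ real and minimise the real quadratic $t\mapsto\|u-te^{-i\theta}v\|^2$ over $t\in\mathbb{R}$; the discriminant condition gives the same bound.)

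For the Krein space statement, recall that a Krein space carries a canonical decomposition $H=H^+[+]H^-$, hence a fundamental symmetry $J=P_+-P_-$ and a Hilbert majorant inner product $(\,,\,)$. By \eqref{eq216} in Lemma~\ref{lem2} we have $[u,v]=(Ju,v)$ for all $u,v\in H$, so applying the Hilbert space Schwarz inequality just established (in the majorant space) gives $|[u,v]|=|(Ju,v)|\le\|Ju\|\,\|v\|$. The computation carried out in the proof of Lemma~\ref{lem2} shows $\|Ju\|^2=(u^+-u^-,u^+-u^-)=(u^+,u^+)+(u^-,u^-)=\|u\|^2$, i.e. $J$ is a $(\,,\,)$-isometry, so $\|Ju\|=\|u\|$ and therefore $|[u,v]|\le\|u\|\,\|v\|$.

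There is no serious obstacle here; the point worth flagging is merely that the norm on the right of the Krein version is the \emph{majorant} Hilbert norm, $\|u\|^2=(u,u)=\|u^+\|^2+\|u^-\|^2$, not an indefinite quantity, so the statement is Cauchy--Schwarz transported through the isometry $J$. One could equally well argue without $J$: writing $[u,v]=(u^+,v^+)-(u^-,v^-)$ via \eqref{ind1}, applying Cauchy--Schwarz on $H^+$ and $H^-$ separately gives $|[u,v]|\le\|u^+\|\,\|v^+\|+\|u^-\|\,\|v^-\|$, and a final application of the Cauchy--Schwarz inequality in $\mathbb{R}^2$ to the vectors $(\|u^+\|,\|u^-\|)$ and $(\|v^+\|,\|v^-\|)$ bounds this by $\|u\|\,\|v\|$.
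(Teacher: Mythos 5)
Your proof is correct, and it actually supplies more than the paper does: for this lemma the paper gives no argument at all, simply referring the reader to the Iohvidov--Krein--Langer monograph (\cite{ikl}). Your Hilbert-space part is the standard quadratic/minimisation argument and is fine. For the Krein-space part, your main route --- $|[u,v]|=|(Ju,v)|\leq\|Ju\|\,\|v\|=\|u\|\,\|v\|$, using \eqref{eq216} and the isometry property $\|Ju\|=\|u\|$ established in the proof of Lemma~\ref{lem2} --- is exactly the mechanism the paper itself leans on later (e.g.\ in the proof of Theorem~\ref{thm7}, where the Schwarz inequality is combined with $\|J\|=1$), so it fits the paper's framework seamlessly. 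Your observation that the right-hand side is the majorant norm, not an indefinite quantity, is worth making explicit, since the statement would be false (indeed meaningless as a bound) with $[u,u]$ in place of $(u,u)$. The alternative argument via the canonical decomposition, $|[u,v]|\leq\|u^+\|\,\|v^+\|+\|u^-\|\,\|v^-\|\leq\|u\|\,\|v\|$ by Cauchy--Schwarz on $H^{\pm}$ followed by Cauchy--Schwarz in $\mathbb{R}^2$, is also valid and has the small advantage of not invoking $J$ at all, only \eqref{ind1} and \eqref{ind2}; either version would serve as a complete replacement for the external citation.
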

For a proof see \cite{ikl}.

We consolidate some of these results in a theorem.
\begin{Theorem}
\label{thm4}
Let $(V,[,])$ be a Krein space and $(V,(,))$ its Hilbert majorant. Then there is a fundamental symmetry $J:V\rightarrow V$ such that for all $u,v \in V$,
\begin{equation}
\label{ind5}
 (u,v)=[Ju,v],
\end{equation}
and
\begin{equation}
 [u,v]=(Ju,v),
\end{equation}
where $J$ is an involution (i.e., $J^2=I$), $J$ is self-adjoint in $(V,())$, $J$ is unitary and $\|J\|=1$.
\end{Theorem}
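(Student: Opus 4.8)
The plan is to assemble the theorem directly from the construction of $J$ and from Lemma~\ref{lem2}, since every assertion has, in effect, already been established. First I would fix a canonical decomposition $V = H^+[+]H^-$ guaranteed by the definition of a Krein space, form the ortho-projectors $P_\pm$ onto $H^\pm$, and set $J := P_+ - P_-$; this is the fundamental symmetry whose existence the theorem asserts. The identity $(u,v) = [Ju,v]$ is exactly \eqref{eq215}: writing $u = u^+ + u^-$, $v = v^+ + v^-$ with $u^\pm, v^\pm \in H^\pm$, one expands $[Ju,v] = [u^+ - u^-, v^+ + v^-] = [u^+,v^+] - [u^-,v^-]$ and recognizes the right-hand side as $(u,v)$ via the inversion formula \eqref{ind4}. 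Replacing $u$ by $Ju$ and invoking $J^2 = I$ then yields $[u,v] = (Ju,v)$, which is \eqref{eq216}.

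Next I would record the remaining properties in order. The involution property follows from $J^2 u = J(u^+ - u^-) = u^+ - (-u^-) = u$, as in Lemma~\ref{lem2}(2), and invertibility with $J^{-1} = J$ is immediate. For self-adjointness in $(V,(,))$, symmetry is the computation $(Ju,v) = (u^+,v^+) - (u^-,v^-) = (u, Jv)$; combining it with the bounded-operator adjoint relation $(Ju,v) = (u, J^*v)$ gives $(u, (J^* - J)v) = 0$ for all $u$, and since a Krein space contains no isotropic vector we conclude $J^* = J$. Finally, $J$ is bounded with $\|J\| = 1$ because $\|Ju\|^2 = \|u^+ - u^-\|^2 = \|u^+\|^2 + \|u^-\|^2 = \|u\|^2$, using orthogonality of $u^+$ and $u^-$ in the majorant inner product, and $JJ^* = J^2 = I$ gives unitarity.

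Since each ingredient is already available in Lemma~\ref{lem2} (and the Schwarz inequality of Lemma~\ref{lem0} is not even needed here), there is no genuine obstacle: the theorem is a consolidation. The only point deserving a word of care is that $J$ depends on the chosen canonical decomposition, so the statement correctly asserts the existence of \emph{a} fundamental symmetry with the listed properties rather than a unique one; the proof is then a matter of citing and reassembling the above facts in the stated order.
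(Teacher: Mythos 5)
Your proposal is correct and follows essentially the same route as the paper: the paper offers Theorem~\ref{thm4} explicitly as a consolidation of Lemma~\ref{lem2}, whose proof establishes \eqref{eq215}, deduces \eqref{eq216} by substituting $Ju$ and using $J^2=I$, and verifies the involution, norm, symmetry-plus-no-isotropic-vector self-adjointness, and unitarity claims exactly as you do. Your closing remark that $J$ depends on the chosen canonical decomposition (so only existence, not uniqueness, is asserted) is a sound observation consistent with the paper's later example of non-unique decompositions.
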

So, in practice, given a Krein space $(V,[,])$ one can write down its canonical decomposition
\begin{equation}
 V=H^+[+]H^-
\end{equation}
from which one deduces the form of its orthoprojectors $P_\pm$. After this we can define the fundamental symmetry, $J=P_+-P_-$, and the corresponding positive definite inner product, $(,)$ via \eqref{ind5} in Theorem~\ref{thm4}.

\begin{Example}
The space $(\mathbb{R}^2,[,])$ with $[u,v]=(Bu,v)$ where $$B=\left(
\begin{array}{cc}
                                                              1 & 0 \\
                                                                0 & -1 \\
                                                              \end{array}
                                                            \right)$$
                                                            admits a decomposition in $[,]-$orthogonal subspaces $H^+$ and $H^-$ (spanned by the standard basis vectors $e_1$, $e_2$ where $[e_1,e_1]=1$ and $[e_2,e_2]=-1$).

                                                            By Theorem~\ref{thm4}, $(u,v)=[Ju,v]=(BJu,v)$ so that $((BJ-I)u,v)=0$ for all $u,v \in V=\mathbb{R}^2$.
                                                            Therefore, $(BJ-I)u=0$ (as $V$ has no isotropic vector).                                                      Since $u$ is arbitrary $BJ=I$ or $J=B^{-1}=B$ i.e., $J=B$.


\end{Example}
\begin{Example}
Consider the Pontryagin space $(\mathbb{R}^4,[,])$ with the {\it Lorentz metric}, $[,]$, induced by the matrix $$B=\left(
                                                        \begin{array}{cccc}
                                                          1 & 0 & 0 & 0 \\
                                                          0 & -1 & 0 & 0 \\
                                                          0 & 0 & -1 & 0 \\
                                                          0 & 0 & 0 & -1\\
                                                        \end{array}
                                                      \right),$$

via the relation $[u,v]=(Bu,v)$. Its fundamental symmetry $J$ is given by setting
$[u,v]=(Ju,v)$ for all $u,v$. Hence $J=B$.
\end{Example}
One can also work ``backwards" to define a Krein space as the following theorem implies.
\begin{Theorem}
Let $(V,(,))$ be a Hilbert space and $J:V\rightarrow V$ a fundamental symmetry. Then the sesquilinear hermitian form $[,]$, defined by $[u,v]=(Ju,v)$ for all $u,v \in V$ defines a Krein space $(V,[,])$.
\end{Theorem}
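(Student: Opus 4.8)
The plan is to reconstruct the canonical decomposition from $J$ itself and then verify item by item the requirements in the definition of a Krein space; throughout I take \emph{fundamental symmetry} in the sense distilled from Lemma~\ref{lem2} and Theorem~\ref{thm4}, namely that $J$ is a \emph{bounded} operator on $(V,(,))$ with $J=J^{*}$ and $J^{2}=I$. The first thing to check is that $[u,v]:=(Ju,v)$ really is a sesquilinear hermitian form in the sense of Definition~\ref{def1}: linearity in the first argument is immediate from linearity of $J$ and of $(,)$, and hermiticity uses self-adjointness of $J$, since for all $u,v\in V$,
\[
[v,u]=(Jv,u)=(v,J^{*}u)=(v,Ju)=\overline{(Ju,v)}=\overline{[u,v]}.
\]

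Next I would produce the orthoprojectors. Set $P_{\pm}:=\tfrac12(I\pm J)$. From $J^{2}=I$ and $J=J^{*}$ one checks directly that $P_{\pm}^{2}=P_{\pm}$, $P_{\pm}^{*}=P_{\pm}$, $P_{+}P_{-}=P_{-}P_{+}=\tfrac14(I-J^{2})=0$, $P_{+}+P_{-}=I$, and $P_{+}-P_{-}=J$. Hence $P_{\pm}$ are orthogonal projections onto the subspaces $H^{\pm}:=P_{\pm}V=\ker(J\mp I)$; these are closed (being kernels of the bounded operators $J\mp I$), so they inherit completeness from $V$ and are Hilbert spaces under $(,)$, and $V=H^{+}\oplus H^{-}$ is a Hilbert-orthogonal direct sum.

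It remains to match up the three inner products on the pieces and check $[,]$-orthogonality. For $u\in H^{+}$ we have $Ju=u$, so $[u,u]=(Ju,u)=(u,u)=\|u\|^{2}$; thus $[,]$ coincides with $(,)$ on $H^{+}$, making $(H^{+},[,])$ a Hilbert space. For $u\in H^{-}$ we have $Ju=-u$, so $[u,u]=-(u,u)=-\|u\|^{2}$; thus $-[,]$ coincides with $(,)$ on $H^{-}$, making $(H^{-},-[,])$ a Hilbert space. Finally, for $u\in H^{+}$ and $v\in H^{-}$, $[u,v]=(Ju,v)=(u,v)=0$, so the sum is $[,]$-orthogonal and $V=H^{+}[+]H^{-}$ is a canonical decomposition of the required type; by the definition of a Krein space, $(V,[,])$ is a Krein space (a Pontryagin space should one of $H^{\pm}$ be finite dimensional). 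If moreover $J\neq\pm I$, so that both $H^{\pm}\neq\{0\}$, then $[,]$ is genuinely indefinite, since it then admits vectors with $[u,u]>0$ and vectors with $[u,u]<0$.

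There is no real difficulty here; the only step that needs a moment's attention is the closedness of $H^{\pm}$, which is exactly where boundedness of $J$ is used and which guarantees that $(H^{\pm},\pm[,])$ are complete, hence Hilbert spaces. Everything else is routine bookkeeping with the projection identities established in the second paragraph.
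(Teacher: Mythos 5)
Your proof is correct, and since the paper states this theorem without giving a proof, your argument supplies exactly the intended one: recovering the orthoprojectors as $P_{\pm}=\tfrac12(I\pm J)$ simply inverts the paper's own definition $J=P_{+}-P_{-}$, and the verification that $(H^{\pm},\pm[\,,\,])$ are Hilbert spaces (closedness of $H^{\pm}$ via boundedness of $J$) and mutually $[\,,\,]$-orthogonal is the standard bookkeeping consistent with Lemma~\ref{lem2} and Theorem~\ref{thm4}. Your closing caveat is also the right one to flag: under the paper's definition a Krein space carries an \emph{indefinite} form on an infinite dimensional space, so strictly one needs $J\neq\pm I$ (both $H^{\pm}$ nontrivial), exactly as you note.
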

\begin{Remark}
In the special case where $J$ is a positive definite fundamental symmetry the Krein space is reduced to a Hilbert space. Generally speaking, in a Krein space $J$ is generally indefinite (i.e., $(Ju,u)$ could be positive, negative, or zero without $u$ being necessarily zero).
 \end{Remark}
 \begin{Remark}
The inner product $(,)$ that is used to define the indefinite inner product in Examples~\ref{exam7} and ~\ref{exam8} is not necessarily the same as the one that is used to define the inner product of the Hilbert majorant space. In other words, if $B=B^*$ is given and $[u,v]=(Bu,v)$ for all $u,v \in V$ is some indefinite inner product,  then there may exist a possibly different inner product $<,>$ such that $[u,v]=<Ju,v>$ where $J$ is now a fundamental symmetry. One of the reasons for this is that $B^2\neq I$ necessarily, yet $B^2 = I$ would be required if the two inner products were identical.
\end{Remark}

In the same vein it is clear that since $(Bu,v) = <Ju,v>$ for all $u,v$ we can replace $u$ by $Ju$ so that the two inner products, $(,)$ and $<,>$, are actually related by the equality
\begin{equation*}
<u,v>=(BJu,v).
\end{equation*}
In addition, if $JB=BJ$ then $(JB)^*=(BJ)^*=J^*B^*=JB$ thus $JB$, and so $BJ$ is symmetric. We know from the elementary theory of Banach spaces that all norms on a finite dimensional space are (topologically) equivalent, thus even though the inner products $(,)$, $<,>$, are different, the norms defined by them are topologically equivalent. We give here a simple example illustrating this fact.

\begin{Example}
 Let $V=\mathbb{R}^2$ and define an indefinite inner product on $V$ by setting $[u,v]=(Bu,v)$ where $$B=\left(\begin{array}{cc}
 1 & 2 \\
 2 & 1 \\                                                                                      \end{array}
\right).$$
Since $B$ has real distinct eigenvalues of opposite sign it follows that the form
$[,]$ is indefinite. Furthermore, for $u={\rm col}\  (u_1,u_2)$ etc. we have
$$[u,v] = u_1v_1+2u_2v_1+2u_1v_2+u_2v_2.$$
Corresponding to the eigenvalues $\lambda_1=-1$, $\lambda_2=3$ of $B$ are the eigenvectors $f_1= {\rm col}\  (1,-1)$ and $f_2={\rm col}\  (1,1)$.


We now  define  $H^+:={\rm span}\ \{f_2\}$ and $H^-:={\rm span}\ \{f_1\}$. Since $[f_2,f_2]=6$ and $[f_1,f_1]=-2$, it follows that $H^+:={\rm span}\ \{f_2\}$ and $H^-:={\rm span}\ \{f_1\}$ are respectively positive and negative subspaces of $V$. Furthermore, since for positive and negative vectors $u^\pm \in H^\pm$ we have $[u^+,u^+]=(Bu^+,u^+)=6\alpha^2>0$ if $u^+=\alpha f_2$, and $-[u^-,u^-]=-(Bu^-,u^-)= 2\alpha^2>0,$ if $u^-=\alpha f_1$, it follows that $(H^+,[,])$ and $(H^-,-[,])$ are positive definite subspaces.

On the other hand, $[\alpha f_2,\beta f_1]=(\alpha B f_2,\beta f_1)=\alpha\overline{\beta}\,(Bf_2,f_1)=0.$    Hence the subspaces $H^+$ and  $H^-$ are orthogonal relative to $[\ ,\ ]$ i.e., $\mathbb{R}^2=H^+[+]H^-$, and so $(\mathbb{R}^2, [,])$ is a Pontryagin space.

Now that $(\mathbb{R}^2,[,])$ is an indefinite space, by Theorem~\ref{thm4}  there is a (positive definite) inner product $<,>$ such that $(\mathbb{R}^2, <,>)$ is a Hilbert (or Banach) space, with $<,>$ defined via a fundamental symmetry $J$ by
\begin{equation}\label{eq218} <u,v>=[Ju,v]=[u^+,v^+]-[u^-,v^-]\end{equation}
where, as usual, $u=u^++u^-.$

Observe that if $u={\rm col}\ (u_1,u_2)$ then
$u^+= {\rm col}\ (\frac{u_1+u_2}{2}, \frac{u_1+u_2}{2})$ and $u^-={\rm col}\ (\frac{u_1-u_2}{2}, \frac{u_2-u_1}{2})$, so that $u=u^++u^-$ with a similar calculation for $v$. Substituting the latter into \eqref{eq218} and simplifying yields
$$<u,v>=[u^+,v^+]-[u^-,v^-]=(Bu^+,v^+)-(Bu^-,v^-)=u_1v_2+2u_1v_1+2u_2v_2+u_2v_1,$$
that is,
  \begin{equation}\label{eq219}
  <u,v>=u_1v_2+2u_1v_1+2u_2v_2+u_2v_1.
  \end{equation}
  In order to find $J$ we note that
  $$J=J^*=\left(\begin{array}{cc}
               a & b \\
               b & c
             \end{array}\right)$$
  implies that
$$<u,v>=[Ju,v]=(BJu,v)=(a+2b)u_1v_1+(b+2c)u_2v_1+(2a+b)u_1v_2+(c+2b)u_2v_2,$$ for any choice of $u, v$, i.e.,
  \begin{equation}\label{eq220}
  <u,v>=(a+2b)u_1v_1+(b+2c)u_2v_1+(2a+b)u_1v_2+(c+2b)u_2v_2.
  \end{equation}
  Comparing \eqref{eq220} and \eqref{eq219} yields the unique solution $a=c=0$ and $b=1$ so that the required fundamental symmetry is
   $$J=\left(
   \begin{array}{cc}
    0 & 1 \\
    1 & 0 \\
    \end{array}
    \right).$$
Note that  $J=J^*$, $J^2=1$ (so that $JJ^*=I$), and that $<,>$ and $(,)$ are different inner products on the same space. In addition, $J\neq B$ and $BJ=JB$. The resulting norms defined by $\sqrt{<,>}$ and $\sqrt{(,)}$ are topologically equivalent.
\label{ex13}
\end{Example}

  In the next example we show that the canonical decomposition of a Krein space into an $[,]$-orthogonal direct sum of two (Hilbert) spaces is not unique.

 \begin{Example}
 Let $(\mathbb{R}^2,[,])$ be a Krein space with the indefinite inner product $[u,v]=(Bu,v)$, $u,v \in \mathbb{R}^2$
 where $B$ is the matrix of Example~\ref{ex13}. In that example we produced a canonical decomposition for the Krein space $\mathbb{R}^2$ in the form $\mathbb{R}^2=H^+[+]H^-$ where $H^+:={\rm span}\ \{f_2\}$ and $H^-:={\rm span}\ \{f_1\}$ where $f_1= {\rm col}\  (1,-1)$ and $f_2={\rm col}\  (1,1)$.
 Here we are looking for a possibly different canonical decomposition.

 It suffices to find a different basis of $\mathbb{R}^2$ consisting of ``positive" and ``negative" vectors relative to the inner product $[,]$. For example, if we choose $H^+:={\rm span}\ \{g_2\}$ and $H^-:={\rm span}\ \{g_1\}$ where $g_1= {\rm col}\  (1,-2)$ and $g_2={\rm col}\  (0,1)$, then $R^2=H^+[+]H^-$ is also a canonical decomposition.

 In this case the orthoprojectors $P_\pm$ are given by
 \begin{equation}
          P_+=\left(
                                                     \begin{array}{cc}
                                                       0 & 0 \\
                                                       2 & 1 \\
                                                     \end{array}
                                                   \right),\quad\quad P_-=\left(
                                                     \begin{array}{cc}
                                                       1 & 0 \\
                                                       -2 & 0 \\
                                                     \end{array}
                                                   \right).
\end{equation}
so that the corresponding fundamental symmetry operator is given by
          \begin{equation}
          J=P_+-P_-=\left(
          \begin{array}{cc}
            -1 & 0 \\
             4 & 1 \\
             \end{array}
             \right).
          \end{equation}

While $J^2=I$ is satisfied, note that $J$ is not symmetric because according to Lemma~\ref{lem2} (with $(,)$ replaced by $<,>$) the fundamental symmetry is to be symmetric relative to the inner product induced by the orthoprojectors $P_\pm$, i.e., $<u,v>:=[u^+,v^+]-[u^-,v^-]$ that is while $J$ is not our usual ``symmetric" matrix, $BJ$ (or $JB$) {\it is} symmetric as
         $$BJ=\left(
                \begin{array}{cc}
                  1 & 2 \\
                  2 & 1 \\
                \end{array}
              \right)\left(
                \begin{array}{cc}
                  -1 & 0 \\
                  4 & 1 \\
                \end{array}
              \right)=\left(
                \begin{array}{cc}
                  7 & 2 \\
                  2 & 1 \\
                \end{array}
              \right)=(BJ)^*.$$
                  \end{Example}
\begin{Remark}
Sometimes we say that $J$ is ``$B-$symmetric". The reason for this notation is since $u = u^+ + u^-$ we have $Ju = u^+-u^-$, by definition, so that
\begin{eqnarray}
[Ju,v] &=& [u^+-u^-,v^++v^-] \nonumber\\
&=&[u^+,v^+]-[u^-,v^-]\nonumber\\
 &=& <u,v>\quad ( {\rm by\ definition\ of\ <,>)}\label{eq221} \\
 &=&[u^++u^-,v^+-v^-]\nonumber\\
 &=& [u,Jv],\nonumber\\\nonumber
\end{eqnarray}
or $J$ is symmetric relative to $[,]$; but this latter inner-product is defined in terms so of $B$, so in this sense $J$ is $B$-symmetric.

Next, we observe that $J$ is also $<,>$-symmetric, that is $<u,Jv>=<Ju,v>$. This is because $<u,v>=[Ju,v]$ (see \eqref{eq221}) so that
\begin{eqnarray*}
<u,Jv> &=& [Ju,Jv]=[u^+-u^-,v^+-v^-] \\
&=&[u^+,v^+]+[u^-,v^-]=[u^++u^-,v^++v^-]\\
&=& [u,v] \\
&=& <Ju,v>.
\end{eqnarray*}
\end{Remark} 


\begin{Example}
Let $B$ be the matrix of Example~\ref{ex13}. Then $\mathbb{R}^2=H^+[+]H^-$, where $H^+={\rm span}\ \{{\rm col}\,(1,0)\}$ and $H^-={\rm span}\ \{{\rm col}\,(2,-1)\}$ and $[H^+,H^-]=0$, where $[u,v]=(Bu,v)$ as usual. Since the orthoprojectors are $$P_+=\left(
                                                                           \begin{array}{cc}
                                                                             1 & 2 \\
                                                                             0 & 0 \\
                                                                           \end{array}
                                                                         \right),\quad\quad\quad P_-=\left(
                                                                           \begin{array}{cc}
                                                                             0 & -2 \\
                                                                             0 & 1 \\
                                                                           \end{array}
                                                                         \right).$$
the fundamental symmetry becomes
                                                                         $$J=P_+-P_-=\left(
                                                                           \begin{array}{cc}
                                                                             1 & 4 \\
                                                                             0 & -1 \\
                                                                           \end{array}
                                                                         \right).$$

We see that, generally speaking, $J$, $P_+$ and $P_-$ are not ``symmetric" in the ordinary sense, i.e., they are not $(,)$-symmetric, but $BJ$, $BP_+$ and $BP_-$ are symmetric from the theory. Recall that when the theory is applied to this example, we get
\begin{eqnarray}
\label{eq222}
[u,v]&=& (Bu,v)\\
\label{eq223}
<u,v>&=& [Ju,v].
\end{eqnarray}
For example, that $BJ$ is symmetric follows from \eqref{eq222}-\eqref{eq223}, the symmetry of $B$ and the argument
\begin{equation*} (BJu,v)=[Ju,v]=[u,Jv]=<u,v>=[u,Jv]=(Bu,Jv)=(u,BJv).
\end{equation*}

This can also be verified directly since
                                                                         $$BJ=\left(
                                                                           \begin{array}{cc}
                                                                             1 & 2 \\
                                                                             2 & 1 \\
                                                                           \end{array}
                                                                         \right)\left(
                                                                           \begin{array}{cc}
                                                                             1 & 4 \\
                                                                             0 & -1 \\
                                                                           \end{array}
                                                                         \right)=\left(
                                                                           \begin{array}{cc}
                                                                             1 & 2\\
                                                                             2 & 7 \\
                                                                           \end{array}
                                                                         \right)=(BJ)^*$$
                                                                         $$BP_+=\left(
                                                                           \begin{array}{cc}
                                                                             1 & 2 \\
                                                                             2 & 1 \\
                                                                           \end{array}
                                                                         \right)\left(
                                                                           \begin{array}{cc}
                                                                             1 & 2 \\
                                                                             0 & 0 \\
                                                                           \end{array}
                                                                         \right)=\left(
                                                                           \begin{array}{cc}
                                                                             1 & 2\\
                                                                             2 & 4 \\
                                                                           \end{array}
                                                                         \right)=(BP_+)^*,$$ and $$BP_-=\left(
                                                                           \begin{array}{cc}
                                                                             1 & 2 \\
                                                                             2 & 1 \\
                                                                           \end{array}
                                                                         \right)\left(
                                                                           \begin{array}{cc}
                                                                             0 & -2 \\
                                                                             0 & 1 \\
                                                                           \end{array}
                                                                         \right)=\left(
                                                                           \begin{array}{cc}
                                                                             0 & 0\\
                                                                             0 & -3 \\
                                                                           \end{array}
                                                                         \right)=(BP_-)^*.$$
So, even though $P_\pm$ and $J$ are not symmetric they can be shown to be symmetric relative to $<,>$.
\end{Example}


\section{Heisenberg's Uncertainty Principle}
In this section we give a brief history of the uncertainty principle. The uncertainty principle, one of the most characteristic consequences of quantum mechanics, was first formulated by Heisenberg (see \cite{Heisenberg44}) for two {\it conjugate quantum variables} (i.e., they can be defined so that they are Fourier transform duals of one another.)

According to Heisenberg's uncertainty principle, the product of the uncertainties in the measurement of two conjugate quantum variables is at least of the order of Planck's constant $h$. To derive the uncertainty relation, Heisenberg considered the notion of a wave packet. A ``wave packet" is a generally moving disturbance whose amplitude is noticeable only in a bounded region which changes its size and shape; in other words, a spreading disturbance. For example, consider the function
\begin{equation}
f(x)=\int_{-\infty}^\infty g(k)\,e^{ikx}\,dk\nonumber
\end{equation}
whose real part,
\begin{equation}
\int_{-\infty}^\infty g(k)\cos(kx)\,dk,\nonumber
\end{equation}
is a linear superposition of waves with wave number $k$, or wave length $\lambda ={2\pi}/{k}$. If $g(k)$ is given by the Gaussian
\begin{equation}
g(k)=e^{-\alpha(k-k_0)^2}\nonumber
\end{equation}
where $\alpha > 0$, then
\begin{eqnarray*}
f(x)&=&\int_{-\infty}^\infty  g(k)e^{i(k-k_0)x}e^{ik_0x}\,dk\\
&=&e^{ik_0x}\int_{-\infty}^{\infty} e^{ik^{\prime}x}e^{-\alpha k^{\prime2}}\,dk^{\prime}\quad (k^{\prime}=k-k_0)\\ &=&e^{ik_0x}\int_{-\infty}^{\infty} e^{-\alpha(k^{\prime}-(ix/2\alpha))^2}e^{-x^2/4\alpha}\,dk^{\prime}\\
&=& \sqrt{\frac{\pi}{\alpha}}e^{ik_0x}e^{-x^2/4\alpha}.
\end{eqnarray*}
Thus,
\begin{equation}
|f(x)|^2=\frac{\pi}{\alpha}e^{-x^2/2\alpha}.\nonumber
\end{equation}
This is a function with a maximum at $x=0$ and width, [see e.g., \cite{gasiorowicz44}, p. 28], of the order $2\sqrt{2\alpha}$. If we compare this width with the width of $|g(k)|^2=e^{-2(k-k_0)^2\alpha}$ i.e., with ${2}/{\sqrt{2\alpha}}$, we see that
 \begin{equation}
\Delta k\Delta x\sim \frac{2}{\sqrt{2\alpha}}.2\sqrt{2\alpha}=4\nonumber
\end{equation}
i.e., the product of the widths of the two functions is independent of the parameter $\alpha$. It is a general property of any two functions that are Fourier transforms of each other that the wider the function the sharper is its Fourier transform and vice versa. Therefore for any two conjugate operators we have, as a result,
\begin{equation}
\Delta k\Delta x\geq O(1) > 0.\nonumber
\end{equation}
In the case of the conjugate variables $p, x$ of quantum mechanics, this becomes
\begin{equation}
\Delta p\Delta x\geq \frac{\hbar}{2}.\nonumber
\end{equation}
(because $p=\hbar k$ for a plane wave.) i.e., there is a limitation on the accuracy of simultaneous measurements of two quantum variables that are conjugate of each other. This is a general feature of wave packets and is a consequence of the assumption that if the position of a particle is known with a certain accuracy $\triangle x$, then it can be visualized as a wave packet in that position with a width equal to $\triangle x$, i.e., the idea of looking at notions of position and momentum of particles at atomic scales as superpositions of simple waves $e^{\pm ikx}$.

Later on in 1929, E. U. Condon pointed out that in the general case, when two quantum variables are not conjugate of each other, the uncertainty relation may not hold in its usual form and it is therefore important to have a more general formulation, \cite{condon44}.

Inspired by this comment of Condon, Robertson \cite{Robertson44} presented such a generalization in 1929 for any two non-commuting self-adjoint operators $A$ and $B$ on (the Hilbert space) $L^2(\mathbb{R})$ whose commutator $[A,B]=iC$, where $C$ is now a general (albeit at least symmetric or self-adjoint) operator.

In \cite{Robertson44}, for a quantum system with a normalized wave function $\psi$, Robertson defined  a mean value $A_0$ (the expectation value of $A$ in the state $\psi$), associated to such a self-adjoint operator $A$ by
\begin{equation}
A_0=\int\overline{\psi}A\psi\, d\tau \nonumber
\end{equation}
where $d\tau$ is the element of configuration space and the integral is extended over all of euclidean space. It is easily seen that $A_0$ is real since $A$ is self-adjoint, i.e., since
\begin{equation}
\int\overline{\phi}A\psi d\tau= \int\overline{\psi}A\phi\, d\tau \nonumber
\end{equation}
for all $\phi$ and $\psi$ (indeed, this is simply the relation $(A\psi,\varphi) = (\psi,A\varphi)$ using the standard inner product on $L^2(\mathbb{R})$.)

In statistics, the uncertainty in the value of $A$ is defined as
 \begin{equation}
(\triangle A)^2=\int\overline{\psi}(A-A_0)^2\psi\, d\tau, \nonumber
\end{equation}
i.e., the square root of the mean of the deviation of $A$ from its mean value $A_0$.

To prove the uncertainty relation in a specific case Robertson considers the quantum mechanical variables $A(q,p)$ and $B(q,p)$ that are linear in the momenta $(p_x,p_y,p_z)$, i.e., $A=a+a_xp_x+a_yp_y+a_zp_z$ where $p_x=({\hbar/i})\cdot{\partial}/{\partial x}$, etc. where $a, a_x,a_y,a_z$ are functions of position. Then, using the self-adjointness of $A$, Robertson calculated the uncertainty in $A$ in this case to be
 \begin{equation}
(\triangle A)^2=\int|(A-A_0)\psi|^2 d\tau. \nonumber
\end{equation}
The Schwarz inequality
\begin{equation}
\label{schi}
|\int(f_1g_1+f_2g_2)d\tau|^2\leq [\int(f_1\bar{f}_1+f_2\bar{f}_2)d\tau][\int(g_1\bar{g}_1+g_2\bar{g}_2)d\tau]
\end{equation}
is applied taking
\begin{equation}
\bar{f}_1=(A-A_0)\psi=f_2, \quad g_1=(B-B_0)\psi=-\bar{g}_2. \nonumber
\end{equation}
Integrating the left hand side of \eqref{schi} by parts then yields
 \begin{equation}
\Delta A\Delta B\geq\frac{1}{2}\left |\int\bar{\psi}(AB-BA)\psi d\tau \right|. \nonumber
\end{equation}
Since $iC=AB-BA = [A,B]$ by hypothesis, we get
\begin{equation}
\Delta A\Delta B\geq\frac{1}{2}\left |\int\bar{\psi}C\psi d\tau\right |, \nonumber
\end{equation}
or, using the inner product notation,
\begin{equation}
\label{eq300}
\sigma(A)(\psi)\sigma(B)(\psi)\geq\frac{1}{2}|(C\psi,\psi)|
\end{equation}
where $|(C\psi,\psi)|$ is the modulus of the expectation value of $C$, \cite{Robertson44}.

If we take $A=\delta$ and $B=P=-i\hbar\frac{d}{dx}$, then $[\delta,P]=i\hbar$, i.e., $C=\hbar I$, then the general formula \eqref{eq300} reduces to
\begin{equation}
(\triangle \delta)(\triangle P)\geq\frac{1}{2}\hbar,
\end{equation}
which is the original form of Heisenberg's uncertainty relation between the position and momentum variables (here $\triangle$ stands for $\sigma$, the standard deviation).

He concluded that the Heisenberg uncertainty principle is just a specific case of a more general relation between the standard deviations of two non-commuting operators in $L^2(\mathbb{R})$.

In 1930, Von Neumann (cf., \cite{Neumann44}) proved an abstract Hilbert space version of Heisenberg's uncertainty relation for any two non-commuting operators in a space of square integrable functions by relying on the proof by Robertson \cite{Robertson44}. He started with $[A,B]=aI$ and the fact that $a$ must be pure imaginary, because
 \begin{equation}
[A,B]^*=-[A,B], \quad [A,B]^*=(aI)^*=\bar{a}I \Rightarrow \bar{a}I=-aI \Rightarrow \bar{a}=-a.\nonumber
\end{equation}
Then, for a vector $\varphi$ in the Hilbert space $(H,(,))$,
\begin{equation}
ia\|\varphi\|^2=(i[A,B]\varphi,\varphi)=2\,{\rm Im}\,(A\varphi,B\varphi),
\end{equation}
from which
$$
\|\varphi\|^2\leq\frac{2}{|a|}|{\rm Im}\,(A\varphi,B\varphi)|r
\leq\frac{2}{|a|}|(A\varphi,B\varphi)|\leq\frac{2}{|a|}\|A\varphi\|\|B\varphi\|,$$
where, once again, the Schwarz inequality is used in the previous equation. Therefore, assuming that  $\|\varphi\|=1$, this yields
\begin{equation}
\label{eq301}
\|A\varphi\|\|B\varphi\|\geq\frac{|a|}{2}.
\end{equation}
From this and the fact that for $\sigma$ and $\rho$ real, the quantities $A-\sigma\,I$ and $B-\rho\, I$ are self-adjoint and obey the same commutation relation as $A$ and $B$, \eqref{eq301} can be applied to $A-\sigma\,I$ and $B-\rho\,I$, with the choice $\sigma=(A\varphi,\varphi)$ and $\rho=(B\varphi,\varphi)$. This then gives
\begin{equation}\label{eq302}
\|(A-\sigma I)\varphi\|\|(B-\rho I)\varphi\|\geq\frac{|a|}{2}.
\end{equation}
On the other hand,
\begin{equation}
\|(A-\sigma\,I)\varphi\|=\|(A-(A\varphi,\varphi)\,I)\varphi\|
=(A\varphi-(A\varphi,\varphi)\varphi,A\varphi-(A\varphi,\varphi)\varphi)^{1/2}\nonumber
\end{equation}
\begin{equation}
=\{(A\varphi,A\varphi)-2(A\varphi,\varphi)^2+(A\varphi,\varphi)^2\}^{1/2}
=\{(A\varphi,A\varphi)-(A\varphi,\varphi)^2\}^{1/2}\nonumber
\end{equation}
\begin{equation}
=\{(A^2\varphi,\varphi)-(A\varphi,\varphi)^2\}^{1/2}=\{E(A^2)-E(A)^2\}^{1/2}:=\sigma(A)\nonumber
\end{equation}
where $E(A)$ is expectation value of $A$ in the state $\varphi$ and $\sigma(A)$ is the uncertainty in the measurement of $A$. Therefore
\begin{equation}
\|(A-\sigma\,I)\varphi\|:=\sigma(A), \quad {\rm similarly,} \quad \|(B-\rho\,I)\varphi\|:=\sigma(B)\nonumber
\end{equation}
are standard deviations. Using this notation the inequality \eqref{eq302} gives the uncertainty relation
\begin{equation}
\sigma(A)\sigma(B)\geq\frac{|a|}{2}.\nonumber
\end{equation}
This is essentially Von Neumann's argument. A further state-dependent generalization of the uncertainty relation was obtained by Schr\"{o}dinger in \cite{Schrodinger44} for any two self-adjoint operators $A$ and $B$ when the system is in a state $\varphi$, where $\varphi$ is in a Hilbert space. He considered the fact that the product of two self-adjoint operators is, in general, not self-adjoint but yet this  product can be decomposed into a sum of a self-adjoint and a ``skew-self-adjoint" ($i$ $\times$ self-adjoint) operator, i.e.,
\begin{equation}
AB=\frac{AB+BA}{2}+\frac{AB-BA}{2}
\end{equation}
where the first term is self-adjoint and the second term is skew self-adjoint. Next,
\begin{equation}
\label{eq303}
|(B\varphi,A\varphi)|^2
=(AB\varphi,\varphi)=(\frac{AB+BA}{2}\varphi,\varphi)^2+|(\frac{AB-BA}{2}\varphi)|^2.
\end{equation}
On the other hand, the Schwarz inequality gives
\begin{equation}
\label{eq304}
|(B\varphi,A\varphi)|^2\leq\|B\varphi\|^2\|A\varphi\|^2.
\end{equation}
Now \eqref{eq303} and \eqref{eq304} together give
\begin{equation}
\|B\varphi\|^2\|A\varphi\|^2\geq(\frac{AB+BA}{2}\varphi,\varphi)^2+|(\frac{AB-BA}{2}\varphi)|^2.
\end{equation}
Reasoning as after \eqref{eq301} he proceeds to derive the following inequality for the standard deviations of the self-adjoint operators $A$ and $B$ when in the state $\varphi\in H$:
\begin{equation}
\label{eq224}
\sigma_{A}\sigma_{B}\geq
\sqrt{\frac{1}{4}|([A,B]\varphi,\varphi)|^{2}+\frac{1}{4}|(\{A-(A\varphi,\varphi)I,B-(B\varphi,\varphi)I\}\varphi,\varphi)|^{2}},
\end{equation}
where $[A,B]=AB-BA$ and $\{A,B\}=AB+BA$ are the commutator and anti-commutator of $A$ and $B$ respectively. This relation includes the commutation part of the formula that was derived by Robertson \cite{Robertson44} as well as a new anti-commutator part. It is easily seen that omission of the anti-commutator part in \eqref{eq224} yields the weaker inequality of Robertson (and so Heisenberg.)

Landau and Peierls \cite{landau44}, \cite{de broglie44} pointed out a form of uncertainty for the cases where speeds of particles approach the speed of light and so relativistic corrections need to be taken into account. In the following sections we derive general uncertainty relations for non-commuting self-adjoint operators in Krein space. We notice that in the particular case where the fundamental symmetry is equal to the identity operator, $J=I$, our general uncertainty relation reduces to the one by Schr\"{o}dinger \cite{Schrodinger44} for Hilbert space. It is also shown that in the special case where the operators are the linear momentum and position of a particle, it reduces to a stronger inequality than the usual Heisenberg uncertainty relation. We also present a generalization of the uncertainty principle where the commutator of the self-adjoint operators is not a multiple of the identity but a combination of the identity and one of the two operators.



\section{Generalizations of the Uncertainty Principle}
\label{sec31}
In this section we proceed to extend the uncertainty principle due to Heisenberg-Robertson-Von-Neumann-Schr\"{o}dinger to operators that are not necessarily self-adjoint in a Hilbert space but are self-adjoint in a space with a possibly indefinite inner-product. As a result of this endeavor, all concepts must now be redefined including the notion of a standard deviation. This generalization shows that the base operators, $A, B$ of the preceding section need {\it not} be self-adjoint in a Hilbert space in order for an uncertainty relation to hold.

In what follows, operators $A, B, \ldots$ on a given complex Hilbert space, $(H,(,))$ may be bounded or unbounded. If they are bounded, then there is no need for domain considerations; the domain of the commutator/anti-commutator may be taken to be the whole Hilbert space, $H$. Thus, in the general case where at least one of $A, B$ is an unbounded operator, we need to put some restrictions on the domains of the commutator $[A, B]$. For our purposes we do not wish $A$ nor $B$ to have a common element $x$ in their null space (as then $[A,B]x=0$ and some proofs degenerate). Generally speaking the two operators need to satisfy one or more of the following conditions as a bare minimum.

\begin{equation*}
\left\{
\begin{array}{ll}
1, & \hbox{Dom ([A,B])=Dom(A)$ \cap $ Dom(B) is dense in H}, \\
2, & \hbox{Ran(B) $\subseteq$ Dom(A), Ran(B) is dense in $H$},\\
3, & \hbox{Ran(A) $\subseteq$ Dom(B), Ran(A) is dense in $H$,}\nonumber
\end{array}
\right.
\end{equation*}\\
where Dom$(A)$ and Ran$(A)$ stand for domain and range of the operator $A$ respectively. In the main application, $H$ is a weighted $L^2$-space with an indefinite inner product. In this case, we will assume that the (definite or indefinite) inner product produces a topology in which the space $C^{\infty}_0$ defined as the space of infinitely differentiable functions having compact support in $\mathbb{R}$, is dense in $H$. This is so if, for example, the Lebesgue measure of the set of points at which the weight vanishes is zero. As a result we can assume that the domains of $A$ and $B$ contain $C^{\infty}_0$ at least.


\section{Uncertainty in Krein space}
In this section we introduce some of the theory of linear operators in Krein space that will be necessary for us in the formulation of our main results.
\begin{Lemma}\label{lemm2}
Let $(H,[,])$ be a Krein space and $A$ a self-adjoint operator on $(H,[,])$. Then $JA$ is self-adjoint in the Hilbert majorant space, $(H,(,))$.
\end{Lemma}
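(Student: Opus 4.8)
The plan is to reduce the statement to the algebraic identities for the Krein-space and Hilbert-space adjoints already recorded in Lemma~\ref{lem1} and Lemma~\ref{lem2}, rather than verifying self-adjointness of $JA$ from scratch. Recall that ``$A$ self-adjoint on $(H,[,])$'' means precisely that $A^{+}=A$ with $D(A^{+})=D(A)$. By Lemma~\ref{lem1}(4) the Hilbert-space adjoint of $A$ is related to its Krein-space adjoint by $A^{*}=JA^{+}J$, so under our hypothesis this becomes the operator identity $A^{*}=JAJ$, the domains matching because $J$ is an everywhere-defined bounded bijection of $H$ (Lemma~\ref{lem2}).

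Next I would compute $(JA)^{*}$. Since $J$ is bounded — in fact unitary and self-adjoint on the majorant space by Lemma~\ref{lem2} — the adjoint of the product behaves as in the bounded case when the bounded factor is on the left: $(JA)^{*}=A^{*}J^{*}=A^{*}J$, with equality of the operators (and their domains), not merely an inclusion. Substituting the identity from the previous step, $(JA)^{*}=(JAJ)J=JAJ^{2}=JA$, the last equality being $J^{2}=I$ from Lemma~\ref{lem2}(2). This is exactly the assertion that $JA$ is self-adjoint in $(H,(,))$.

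The only point that needs care — and which I expect to be the sole real obstacle — is the bookkeeping of domains when $A$ is unbounded. Here $D(JA)=D(A)$ because $J$ is defined on all of $H$, while $D\big((JA)^{*}\big)=\{v\in H:\ Jv\in D(A^{*})\}$; since $A^{+}=A$ gives $D(A^{*})=J\,D(A^{+})=J\,D(A)$ and $J$ is a bijection with $J^{-1}=J$, this set reduces to $D(A)$. Hence $D\big((JA)^{*}\big)=D(JA)$ and, together with the computation above, $JA=(JA)^{*}$ as operators. (If one only wanted \emph{symmetry} of $JA$ on $D(A)$ there is a shorter route: from $(x,y)=[Jx,y]$ and $J^{2}=I$ one gets $(JAu,v)=[Au,v]=[u,Av]=(u,JAv)$ for $u,v\in D(A)$, the middle equality being the Krein self-adjointness of $A$; but upgrading symmetry to genuine self-adjointness on the majorant space is precisely what the adjoint identity of Lemma~\ref{lem1}(4) supplies.)
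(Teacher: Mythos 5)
Your argument is correct. The paper itself gives no argument for this lemma at all: its ``proof'' is simply a citation to the Iohvidov--Krein--Langer monograph, so any honest derivation is necessarily a different route. What you do is unwind the identity $A^{*}=JA^{+}J$ of Lemma~\ref{lem1}(4) together with the properties of $J$ from Lemma~\ref{lem2}: from $A^{+}=A$ you get $A^{*}=JAJ$, and since $J$ is bounded, everywhere defined, self-adjoint and involutive, the adjoint of the product satisfies $(JA)^{*}=A^{*}J$ with genuine equality of operators and domains (this is the standard fact that $(BA)^{*}=A^{*}B^{*}$ when the left factor $B$ is bounded and everywhere defined), whence $(JA)^{*}=JAJ^{2}=JA$. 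Your domain bookkeeping is the right thing to check and is handled correctly: $D(JA)=D(A)$, and $D((JA)^{*})=\{v: Jv\in D(A^{*})\}=\{v: Jv\in JD(A)\}=D(A)$ because $J$ is a bijection with $J^{-1}=J$. The only caveat worth stating explicitly is that the whole proof rests on Lemma~\ref{lem1}(4) holding as an exact identity with the correct domains; the paper asserts this (again deferring to the literature), and indeed in most treatments $A^{+}:=JA^{*}J$ is essentially the definition of the Krein adjoint, so your proof is really a careful unwinding of definitions --- which is precisely what a self-contained proof of this lemma should be. Your closing remark correctly distinguishes the easy symmetry statement $(JAu,v)=(u,JAv)$ on $D(A)$ from full self-adjointness; keeping that distinction visible is what makes the write-up complete.
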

\begin{proof}
See \cite{ikl}.
\end{proof}
\begin{Lemma}
If $A$ is self-adjoint in the Hilbert space $(H,(,))$, then $JA$ is self-adjoint in the Krein space, $(H,[,])$ where, as usual, $[f,g]=(Jf,g)$.
\end{Lemma}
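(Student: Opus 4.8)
The plan is to verify the defining identity $[JAf,g]=[f,JAg]$ on a dense domain and then to pin down the Krein-space adjoint $(JA)^+$ exactly, so as to obtain genuine self-adjointness in $(H,[,])$ rather than mere symmetry. The only ingredients needed are the elementary properties of $J$ from Lemma~\ref{lem2}, namely $J^2=I$ and $J^*=J$, together with the bridge relation $[u,v]=(Ju,v)$. As a preliminary, since $J$ is a bounded bijection of $H$ with bounded inverse $J^{-1}=J$, the operator $JA$ has $D(JA)=D(A)$, which is dense in $H$; because the Krein-space structure is carried by the topology of the Hilbert majorant $(H,(,))$ this domain is dense in the relevant sense, and since $(H,[,])$ has no isotropic vectors the adjoint $(JA)^+$ is well defined.

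The symmetry step is a short computation. For $f\in D(A)$ and arbitrary $g\in H$ one has $[JAf,g]=(J(JAf),g)=(J^2Af,g)=(Af,g)$, using $[u,v]=(Ju,v)$ and $J^2=I$; and for $g\in D(A)$ one has $[f,JAg]=(Jf,JAg)=(J^*Jf,Ag)=(f,Ag)$, using in addition $J^*=J$. Since $A=A^*$ gives $(Af,g)=(f,Ag)$ for $f,g\in D(A)$, we conclude $[JAf,g]=[f,JAg]$, i.e. $JA$ is hermitian (symmetric) on $(H,[,])$.

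To upgrade symmetry to self-adjointness I would compute $D((JA)^+)$ directly. A vector $v$ lies in $D((JA)^+)$ precisely when there is $z\in H$ with $[JAf,v]=[f,z]$ for all $f\in D(A)$; by the displays above this reads $(Af,v)=(f,Jz)$ for all $f\in D(A)$, which is exactly the statement that $v\in D(A^*)=D(A)$ and $A^*v=Av=Jz$, hence $z=JAv$. Therefore $D((JA)^+)=D(A)=D(JA)$ and $(JA)^+v=JAv$, so $(JA)^+=JA$, which is the assertion. Equivalently, one may apply part (5) of Lemma~\ref{lem1} to $T=JA$ to get $(JA)^+=J(JA)^*J$, then use $(JA)^*=A^*J^*=AJ$ (valid since $J$ is bounded and everywhere defined) and $J^2=I$ to obtain $(JA)^+=JA$.

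The only real obstacle is this domain bookkeeping: in the unbounded case the formal adjoint calculus of Lemma~\ref{lem1} delivers inclusions rather than equalities, so I expect the cleanest presentation is the direct determination of $(JA)^+$ above, with the Lemma~\ref{lem1} manipulation offered only as a corroborating remark.
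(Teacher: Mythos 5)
Your proposal is correct. Note that for this Lemma the paper itself supplies no argument at all---it simply cites Iohvidov--Krein--Langer---so there is nothing in the text to compare against step by step; what you have written is a complete, self-contained proof of the kind the cited reference would give. Your two-stage structure is the right one: the identity $[JAf,g]=(Af,g)=(f,Ag)=[f,JAg]$ for $f,g\in D(A)$, using only $[u,v]=(Ju,v)$, $J^{2}=I$ and $J^{*}=J$ from Lemma~\ref{lem2}, establishes $[\,,\,]$-symmetry, and your direct computation of $(JA)^{+}$ is what turns this into genuine self-adjointness: the condition $[JAf,v]=[f,z]$ for all $f\in D(A)$ unwinds to $(Af,v)=(f,Jz)$, i.e.\ $v\in D(A^{*})=D(A)$ and $Jz=Av$, so $D((JA)^{+})=D(JA)$ and $(JA)^{+}=JA$. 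This correctly exploits the Hilbert-space self-adjointness $A^{*}=A$ (rather than mere symmetry of $A$), which is exactly where the domain equality comes from, and it sidesteps the caveat that the calculus of Lemma~\ref{lem1} yields only inclusions for unbounded operators. Your alternative route via $(JA)^{+}=J(JA)^{*}J$ together with $(JA)^{*}=A^{*}J^{*}=AJ$ (legitimate because $J$ is bounded and everywhere defined) is also sound and is a nice consistency check; presenting the direct determination of $(JA)^{+}$ as the main argument, as you propose, is indeed the cleaner choice.
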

\begin{proof}
See \cite{ikl}.
\end{proof}
Let $A$ be a linear operator on a Hilbert space $(H,(,))$. We say that $A$ is {\it skew-hermitian} if $A$ and $A^*$ have the same domain and $A^*=-A$. An operator $A$ is said to be {\it $J$-skew-hermitian} or {\it $[,]$-skew-hermitian} on the Krein space $(H,[,])$, if $A^+ = -A$.

\begin{Lemma}\label{lemm3}
Let $A$, $B$ be operators in a Krein space $(H,[,])$ that satisfy Lemma~\ref{lem1}, parts (1)-(2)-(3), with equality.
\begin{enumerate}
\item If $A$, $B$ are self-adjoint in $(H,[,])$ then $[A,B]$ is $J$-skew-hermitian and $\{A,B\}$ is self-adjoint in the Krein space,
\item If $J$ is a fundamental symmetry and if $A$, $B$ are self-adjoint operators on $(H,[,])$,  then ${\rm Re}\, [J[A,B]\varphi,\varphi] =0,$
 that is, $[J[A,B]\varphi,\varphi]$ is purely imaginary,
\item Finally, if $A$ is $J$-skew-hermitian then $(JA\varphi,\varphi) = [A\varphi,\varphi]$ is purely imaginary.
\end{enumerate}
\end{Lemma}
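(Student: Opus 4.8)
The plan is to reduce all three assertions to algebraic identities for the Krein adjoint, combining the properties of Lemma~\ref{lem1} with the relations $[u,v]=(Ju,v)$, $(u,v)=[Ju,v]$ and $J^{2}=I$ from Lemma~\ref{lem2}. I would establish (1) first, then (3), and finally obtain (2) by applying (3) to the operator $[A,B]$.

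For the first assertion, self-adjointness of $A,B$ in $(H,[,])$ means $A^{+}=A$ and $B^{+}=B$, while the standing hypothesis that $A,B$ realise parts (1)--(3) of Lemma~\ref{lem1} with equality gives $(AB)^{+}=B^{+}A^{+}=BA$ and $(BA)^{+}=A^{+}B^{+}=AB$. Combining these with additivity (part (1) with equality) and part (6) at $\lambda=-1$ yields $\{A,B\}^{+}=(AB)^{+}+(BA)^{+}=BA+AB=\{A,B\}$ and $[A,B]^{+}=(AB)^{+}-(BA)^{+}=BA-AB=-[A,B]$, which says exactly that $\{A,B\}$ is $[,]$-self-adjoint and $[A,B]$ is $J$-skew-hermitian.

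For (3), suppose $A^{+}=-A$. Hermiticity of $[,]$ together with the defining identity $[A\varphi,\varphi]=[\varphi,A^{+}\varphi]$ of the Krein adjoint give $\overline{[A\varphi,\varphi]}=[\varphi,A\varphi]=[A^{+}\varphi,\varphi]=-[A\varphi,\varphi]$, so $[A\varphi,\varphi]$ is purely imaginary, and $[A\varphi,\varphi]=(J(A\varphi),\varphi)=(JA\varphi,\varphi)$ by Lemma~\ref{lem2}(5); equivalently $(JA)^{*}=A^{*}J^{*}=A^{*}J=(JA^{+}J)J=JA^{+}=-JA$ (by Lemma~\ref{lem1}(4), $J^{*}=J$, $J^{2}=I$), so $JA$ is skew-hermitian on the Hilbert majorant, whence $(JA\varphi,\varphi)$ is purely imaginary for every $\varphi$. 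For (2) I then take $D:=[A,B]$, which by (1) satisfies $D^{+}=-D$; applying (3) to $D$ shows that the Krein expectation $[[A,B]\varphi,\varphi]=(J[A,B]\varphi,\varphi)$ is purely imaginary, i.e.\ its real part vanishes.

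I do not expect a genuine obstacle; once Lemmas~\ref{lem1} and~\ref{lem2} are in hand the content is bookkeeping. The one point needing care is the legitimacy of the adjoint manipulations, since in general only the inclusions of Lemma~\ref{lem1}(1)--(3) hold: here I must use the hypothesis that $A$ and $B$ realise them with equality, together with the density (and closedness) conditions from Section~\ref{sec31}, so that $(AB)^{+}$, $(BA)^{+}$, $(A+B)^{+}$ are the stated operators and $[A,B]^{+}$, $\{A,B\}^{+}$ are densely defined. The only remaining subtlety is notational — keeping track of which inner product is in force — which is handled uniformly via $[u,v]=(Ju,v)$ and $J^{2}=I$.
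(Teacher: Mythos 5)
Your treatment of parts (1) and (3) is correct and is essentially the paper's own argument: (1) is the computation $[A,B]^{+}=(AB)^{+}-(BA)^{+}=B^{+}A^{+}-A^{+}B^{+}=BA-AB$ (similarly for $\{A,B\}$), and (3) is the identity $[A\varphi,\varphi]=[\varphi,A^{+}\varphi]=-\overline{[A\varphi,\varphi]}$ combined with $[u,v]=(Ju,v)$.

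Part (2) is where your proof and the stated claim part company, and the discrepancy should be made explicit. Applying (3) to $D=[A,B]$ gives that $[D\varphi,\varphi]=(JD\varphi,\varphi)$ is purely imaginary, i.e.\ ${\rm Re}\,(J[A,B]\varphi,\varphi)=0$. The statement of (2), however, asserts ${\rm Re}\,[J[A,B]\varphi,\varphi]=0$, and since $[u,v]=(Ju,v)$ and $J^{2}=I$ one has $[J[A,B]\varphi,\varphi]=([A,B]\varphi,\varphi)$ --- a different quantity unless $J$ commutes with $[A,B]$. The literal claim does not follow from the hypotheses: part (1) yields only $[A,B]^{+}=-[A,B]$, hence $[A,B]^{*}=J[A,B]^{+}J=-J[A,B]J$ rather than $-[A,B]$, and ${\rm Re}\,([A,B]\varphi,\varphi)$ can be nonzero. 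For instance, with $J=\bigl(\begin{smallmatrix}1&0\\0&-1\end{smallmatrix}\bigr)$, $A=\bigl(\begin{smallmatrix}0&1\\-1&0\end{smallmatrix}\bigr)$, $B=\bigl(\begin{smallmatrix}1&0\\0&0\end{smallmatrix}\bigr)$ (padded by any Krein space on which $A,B$ act as zero, so the equalities of Lemma~\ref{lem1} hold trivially), $A$ and $B$ are $[,]$-self-adjoint, yet $([A,B]\varphi,\varphi)=-1$ for $\varphi=\tfrac{1}{\sqrt2}\,{\rm col}\,(1,1)$, while $(J[A,B]\varphi,\varphi)=0$. The paper's own proof of (2) makes exactly the identification you would need --- it treats $[A,B]$ as skew-hermitian with respect to $(,)$, which only follows when $J$ commutes with $[A,B]$ (as under the hypotheses of Lemma~\ref{lemm5}) --- and, tellingly, in the proof of Theorem~\ref{thm1} the term that must be purely imaginary is $(J[A,B]\varphi,\varphi)$ and the citation there is to part (3), not part (2). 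So your derivation is the usable one; just state explicitly that what you prove is ${\rm Re}\,(J[A,B]\varphi,\varphi)=0$ (equivalently ${\rm Re}\,[[A,B]\varphi,\varphi]=0$), and note that (2) as displayed requires the additional assumption that $J$ commute with $A$ and $B$.
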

\begin{proof} This is a simple calculation. Note that
$[A,B]^+=(AB-BA)^+=(AB)^+-(BA)^+=B^+A^+-A^+B^+=BA-AB=-[A,B]$. The part about the anti-commutator is similar.

Next, since $[J[A,B]\varphi,\varphi] = ([A,B]\varphi,\varphi)$ (by Theorem~\ref{thm4}) and $[A,B]$ is skew-hermitian,
\begin{eqnarray*}
([A,B]\varphi,\varphi) &=& (\varphi,[A,B]^*\varphi) \\
&=& - (\varphi,[A,B]\varphi)\\
&=& - \overline{([A,B]\varphi,\varphi)},
\end{eqnarray*}
so that ${\rm Re}\, [J[A,B]\varphi,\varphi] =0,$ as required.

Now, let $A$ be $J$-skew-hermitian. Then $$[A\varphi,\varphi] = [\varphi,A^+\varphi] = -\overline{[A\varphi,\varphi]},$$
and so $[A\varphi,\varphi]$ must be purely imaginary.
\end{proof}

The next lemma is important in what follows and deals with a useful property of sesquilinear hermitian forms.
\begin{Lemma}
\label{lemm4}
Let $A$ be any linear operator on a vector space $V$, let $[,]$ be a sesquilinear hermitian form on $V\times V$, and $\sigma \in \mathbb{C}$. Then, for $\varphi \in V$ we have
$$[(A-\sigma I)\varphi,(A-\sigma I)\varphi] = [A\varphi, A\varphi] - 2\,{\rm Re}\, \{ \sigma [\varphi,A\varphi]\}+|\sigma|^2 [\varphi, \varphi].$$
\end{Lemma}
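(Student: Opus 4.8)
The plan is to expand the left-hand side directly using the sesquilinear and hermitian properties from Definition~\ref{def1}, treating $(A-\sigma I)\varphi = A\varphi - \sigma\varphi$ as a single vector and distributing the form over the difference. First I would write
$$[(A-\sigma I)\varphi,(A-\sigma I)\varphi] = [A\varphi - \sigma\varphi,\, A\varphi - \sigma\varphi],$$
and then use additivity in the first slot (a consequence of property I of Definition~\ref{def1}, with $\lambda_1 = \lambda_2 = 1$) to split this into $[A\varphi, A\varphi - \sigma\varphi] - [\sigma\varphi, A\varphi - \sigma\varphi]$. Next I would apply hermiticity (property II) to each term in order to move the scalar $\sigma$ out of the second slot: since $[x, \sigma y] = \overline{[\sigma y, x]} = \overline{\sigma[y,x]} = \overline{\sigma}\,\overline{[y,x]} = \overline{\sigma}[x,y]$, antilinearity in the second argument is established, and then both slots can be cleared of scalars.

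Carrying this out, the four resulting terms are $[A\varphi,A\varphi]$, $-\overline{\sigma}[A\varphi,\varphi]$, $-\sigma[\varphi,A\varphi]$, and $+\sigma\overline{\sigma}[\varphi,\varphi] = |\sigma|^2[\varphi,\varphi]$. The middle two terms combine: by hermiticity $[A\varphi,\varphi] = \overline{[\varphi,A\varphi]}$, so $-\overline{\sigma}[A\varphi,\varphi] - \sigma[\varphi,A\varphi] = -\overline{\sigma}\,\overline{[\varphi,A\varphi]} - \sigma[\varphi,A\varphi] = -\bigl(\overline{\sigma[\varphi,A\varphi]} + \sigma[\varphi,A\varphi]\bigr) = -2\,{\rm Re}\,\{\sigma[\varphi,A\varphi]\}$, using the identity $z + \overline{z} = 2\,{\rm Re}\,z$. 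Assembling the pieces gives exactly the claimed formula. Note that $[\varphi,\varphi]$ is automatically real by hermiticity, so the right-hand side is manifestly real, as it must be.

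There is essentially no obstacle here: the only points requiring care are bookkeeping the conjugates correctly (which slot the bar lands on) and remembering that linearity is stated only in the first argument in Definition~\ref{def1}, so antilinearity in the second must be derived from hermiticity rather than assumed. No positivity, boundedness, or domain hypotheses are needed — the identity is purely algebraic and holds for an arbitrary linear operator $A$ on any vector space $V$ carrying a sesquilinear hermitian form, which is why the lemma is stated at this level of generality (it will later be applied both to $[,]$ itself and, via Lemma~\ref{lem2}, in the majorant inner product). I would present the computation as a short displayed chain of equalities in an \texttt{eqnarray*} or \texttt{align*} environment.
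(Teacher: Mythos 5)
Your proposal is correct and follows essentially the same route as the paper's own proof: a direct expansion of $[(A-\sigma I)\varphi,(A-\sigma I)\varphi]$ using linearity in the first argument together with hermiticity (equivalently, the derived antilinearity in the second slot), then combining the two cross terms via $z+\overline{z}=2\,{\rm Re}\,z$. The bookkeeping of conjugates and the final identification of $-2\,{\rm Re}\,\{\sigma[\varphi,A\varphi]\}$ match the paper's computation exactly.
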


\begin{proof} We expand the left side using the definition of the form (cf., Definition~\ref{def1}). Thus,
\begin{eqnarray*}
[(A-\sigma I)\varphi, (A-\sigma I)\varphi]&=& [A\varphi, (A-\sigma I)\varphi]+[-\sigma \varphi,(A-\sigma I)\varphi]\\
&=&[A\varphi,A\varphi] +[A\varphi,-\sigma \varphi] + \overline{[A\varphi-\sigma\varphi,-\sigma\varphi]}\\
&=& [A\varphi,A\varphi] - \overline{\sigma}\overline{[\varphi, A\varphi]} + \overline{[A\varphi,-\sigma\varphi]} + \overline{[-\sigma\varphi,-\sigma\varphi]}\\
&=&[A\varphi,A\varphi] - \overline{\sigma}\overline{[\varphi, A\varphi]} -\sigma\, {[\varphi,A\varphi]} - \overline{\sigma}\,\overline{[\varphi,-\sigma\varphi]}\\
&=& [A\varphi,A\varphi] - \overline{\sigma}\overline{[\varphi, A\varphi]} -\sigma\,{[\varphi,A\varphi]} - \overline{\sigma}\,{[-\sigma\varphi,\varphi]}\\
&=&[A\varphi,A\varphi] - \overline{\sigma}\overline{[\varphi, A\varphi]} -\sigma\, {[\varphi,A\varphi]} - \overline{\sigma}(-\sigma)\,{[\varphi,\varphi]}\\
&=&[A\varphi,A\varphi] - \overline{\sigma}\overline{[\varphi, A\varphi]} -\sigma\, {[\varphi,A\varphi]} + |\sigma|^2 [\varphi,\varphi]\\
&=& [A\varphi, A\varphi] - 2\,{\rm Re}\, \{ \sigma [\varphi,A\varphi]\}+|\sigma|^2 [\varphi, \varphi],
\end{eqnarray*}
as required.
\end{proof}
\begin{Corollary}
 \label{cor2}
 Let $A$ be any linear operator on a Hilbert space $(H,(,))$, $\varphi \in {\rm Dom}\,(A).$ Then, for any $\sigma \in \mathbb{C}$ we have
  $$\|A\varphi\|^2 - 2\,{\rm Re}\, \{ \sigma (\varphi,A\varphi)\}+|\sigma|^2 \|\varphi\|^2 \geq 0,$$
  where, by definition, $\|u\|^2 = (u,u)$, for any $u\in H$.
\end{Corollary}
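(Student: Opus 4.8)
The plan is to apply Lemma~\ref{lemm4} to the special case where $V$ is the Hilbert space $H$ and the sesquilinear hermitian form $[\,,\,]$ is taken to be the (positive definite) inner product $(\,,\,)$ itself. Since every inner product is in particular a sesquilinear hermitian form in the sense of Definition~\ref{def1}, the hypotheses of Lemma~\ref{lemm4} are met with $[\varphi,\psi] := (\varphi,\psi)$, and the identity furnished by that lemma reads
\begin{equation*}
(( A-\sigma I)\varphi,(A-\sigma I)\varphi) = (A\varphi,A\varphi) - 2\,{\rm Re}\,\{\sigma(\varphi,A\varphi)\} + |\sigma|^2(\varphi,\varphi).
\end{equation*}
Rewriting each term using the norm notation $\|u\|^2=(u,u)$, the right-hand side becomes exactly $\|A\varphi\|^2 - 2\,{\rm Re}\,\{\sigma(\varphi,A\varphi)\} + |\sigma|^2\|\varphi\|^2$.

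The only remaining point is to observe that the left-hand side is $\|(A-\sigma I)\varphi\|^2$, which is non-negative because $(\,,\,)$ is a genuine (positive semi-definite, indeed positive definite) inner product: for any vector $w\in H$ we have $(w,w)=\|w\|^2\ge 0$. Here one needs $\varphi\in{\rm Dom}(A)$ so that $A\varphi$ is defined and $(A-\sigma I)\varphi$ makes sense as an element of $H$; this is exactly the hypothesis stated in the corollary. Chaining the identity with this inequality yields
\begin{equation*}
0 \le \|(A-\sigma I)\varphi\|^2 = \|A\varphi\|^2 - 2\,{\rm Re}\,\{\sigma(\varphi,A\varphi)\} + |\sigma|^2\|\varphi\|^2,
\end{equation*}
which is the asserted inequality.

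There is essentially no obstacle here: the corollary is a direct specialization of Lemma~\ref{lemm4} combined with the defining positivity property of a Hilbert space inner product. The one subtlety worth a sentence in the write-up is the domain bookkeeping when $A$ is unbounded — one should note that $(A-\sigma I)\varphi$ lies in $H$ for $\varphi\in{\rm Dom}(A)$, and that the inner products appearing on the right ($(\varphi,A\varphi)$, $(A\varphi,A\varphi)$) are all finite for such $\varphi$, so that every term in the displayed identity is well defined. With that remark in place the proof is a two-line consequence of the preceding lemma.
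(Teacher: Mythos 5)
Your proof is correct and follows the same route as the paper: specialize Lemma~\ref{lemm4} to the case where the sesquilinear form is the Hilbert space inner product, so the left-hand expression is $\|(A-\sigma I)\varphi\|^2\geq 0$. Your write-up merely spells out the domain bookkeeping that the paper leaves implicit.
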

\begin{proof} The proof is clear since, by Lemma~\ref{lemm4}, the left side is equal to $\|(A-\sigma I)\varphi\|^2 \in \mathbb{R}$.
\end{proof}
\begin{Corollary}
\label{cor1}
Let $A$ be any self-adjoint operator on a Hilbert space $(H,(,))$, $\varphi \in H$, $\sigma \in \mathbb{C}.$
Then $$\|(A-\sigma\,I)\varphi \|^2 = \|A\varphi\|^2 -2 ({\rm Re}\, \sigma)\,(\varphi, A\varphi)+ |\sigma|^2 \|\varphi\|^2.$$
\end{Corollary}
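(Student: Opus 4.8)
The plan is to specialize the identity already established in Lemma~\ref{lemm4} to the case where the sesquilinear hermitian form is the Hilbert inner product $(,)$ itself, and then to exploit self-adjointness of $A$ to simplify the cross term. First I would apply Lemma~\ref{lemm4} with $V = H$ and $[\,,\,] = (\,,\,)$, which immediately gives
\begin{equation*}
\|(A-\sigma I)\varphi\|^2 = \|A\varphi\|^2 - 2\,{\rm Re}\,\{\sigma(\varphi,A\varphi)\} + |\sigma|^2\|\varphi\|^2,
\end{equation*}
valid for any linear operator $A$ and any $\sigma \in \mathbb{C}$ (this is essentially the content of Corollary~\ref{cor2} as well, minus the sign).

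The only remaining point is to replace ${\rm Re}\,\{\sigma(\varphi,A\varphi)\}$ by $({\rm Re}\,\sigma)(\varphi,A\varphi)$. For this I would observe that self-adjointness of $A$ forces the expectation value $(\varphi,A\varphi)$ to be real: indeed $(\varphi,A\varphi) = (A\varphi,\varphi) = \overline{(\varphi,A\varphi)}$, the middle equality being the defining property $(Au,v) = (u,Av)$ of a self-adjoint operator. Once we know $(\varphi,A\varphi) \in \mathbb{R}$, the general fact that ${\rm Re}(\sigma t) = ({\rm Re}\,\sigma)\,t$ for real $t$ yields ${\rm Re}\,\{\sigma(\varphi,A\varphi)\} = ({\rm Re}\,\sigma)(\varphi,A\varphi)$, and substituting this into the displayed identity gives exactly the claimed formula.

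There is really no hard step here; the statement is a routine corollary whose only substantive ingredient is the reality of the diagonal quadratic form $\varphi \mapsto (\varphi,A\varphi)$ for self-adjoint $A$, together with the algebraic identity of Lemma~\ref{lemm4}. If one wanted to avoid invoking Lemma~\ref{lemm4}, the alternative is a direct expansion of $\|(A-\sigma I)\varphi\|^2 = (A\varphi - \sigma\varphi,\, A\varphi - \sigma\varphi)$ using sesquilinearity, keeping careful track of which argument is conjugate-linear; but routing through Lemma~\ref{lemm4} is cleaner and keeps the bookkeeping in one place.
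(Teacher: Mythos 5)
Your proposal is correct and follows essentially the same route as the paper: specialize the expansion of Lemma~\ref{lemm4} (via Corollary~\ref{cor2}) to the Hilbert inner product and use self-adjointness to conclude that $(\varphi,A\varphi)$ is real, so that ${\rm Re}\,\{\sigma(\varphi,A\varphi)\}=({\rm Re}\,\sigma)(\varphi,A\varphi)$. If anything, citing Lemma~\ref{lemm4} directly for the identity is slightly cleaner than the paper's citation of Corollary~\ref{cor2}, which as stated only records the nonnegativity rather than the full expansion.
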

\begin{proof} Since $A$ is self-adjoint, $(\varphi, A\varphi)$ is always real. The relation is now clear by Corollary~\ref{cor2}.
\end{proof}

Our next result is our first main result. It establishes an uncertainty principle for self-adjoint operators in a Krein space that includes the Heisenberg-Robertson-Schr\"{o}dinger-Von Neumann relations as a special case.
\begin{Theorem}
\label{thm1}
Let $A$, $B$ be self-adjoint operators in a Krein space $(H,[,])$ that satisfy Lemma~\ref{lem1}, parts (1)-(2)-(3) with equality, and let $J$ be a fundamental symmetry. As usual let $(H,(,))$ be the Hilbert majorant space. Then for a state $\varphi \in H$, normalized in $(H,(,))$ by setting $(\varphi,\varphi)=1$, we have

$\displaystyle \sigma_J^{2}(A)(\varphi)\sigma_J^{2}(B)(\varphi)\geq$
\begin{equation}
\label{eq238}
\left \{(J\frac{\{A,B\}}{2}\varphi,\varphi)
-(JA\varphi,\varphi)(JB\varphi,\varphi)(2-(J\varphi,\varphi))\right \}^{2}+ \left
|(J\frac{[A,B]}{2}\varphi,\varphi)\right |^{2},
\end{equation}
where
\begin{eqnarray}
\label{jsdd}
 \sigma_J^{2}(A)(\varphi)&=&(A\varphi,A\varphi)-2(JA\varphi,\varphi)\,{\rm Re}\,(A\varphi,\varphi)+(JA\varphi,\varphi)^{2}
 \end{eqnarray}
 is the squared {\bf $J$-standard deviation} of the operator $A$ in the state $\varphi$ in the Krein space $(H,[,])$.
 \end{Theorem}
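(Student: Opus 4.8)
The plan is to mimic the classical Robertson–Schrödinger argument but carried out in the Hilbert majorant space $(H,(,))$ after translating all Krein-space data through the fundamental symmetry $J$. The starting point is that, by Lemma~\ref{lemm2}, the operators $JA$ and $JB$ are self-adjoint in $(H,(,))$, while by Theorem~\ref{thm4} we can convert every indefinite bracket $[\cdot,\cdot]$ into $(J\cdot,\cdot)$. First I would form the product $(JA)(JB)$ and split it into its self-adjoint and skew-self-adjoint parts in $(H,(,))$, namely
\begin{equation*}
(JA)(JB)=\tfrac{1}{2}\bigl((JA)(JB)+(JB)(JA)\bigr)+\tfrac{1}{2}\bigl((JA)(JB)-(JB)(JA)\bigr),
\end{equation*}
and apply the Cauchy–Schwarz inequality in $(H,(,))$ to the pair $JA\varphi$, $JB\varphi$ in the form $|(JB\varphi,JA\varphi)|^2\le\|JA\varphi\|^2\|JB\varphi\|^2$. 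Expanding $(JB\varphi,JA\varphi)=((JA)(JB)\varphi,\varphi)$ and using the self-adjoint/skew decomposition gives a lower bound for $\|JA\varphi\|^2\|JB\varphi\|^2$ consisting of the squared expectation of the symmetrized product plus the squared modulus of the expectation of the commutator $[JA,JB]$. This is exactly the Schrödinger step, only with $JA,JB$ in place of $A,B$.

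Next I would perform the standard translation trick: replace $A$ by $A-\sigma I$ and $B$ by $B-\rho I$ for suitable scalars $\sigma,\rho$, noting that $J(A-\sigma I)=JA-\sigma J$ is still self-adjoint in $(H,(,))$ whenever $\sigma$ is chosen so that $\sigma J$ is self-adjoint — since $J^*=J$, any real $\sigma$ works, but the natural choice here (to make the cross terms in the $J$-standard deviation collapse correctly) will be $\sigma=(JA\varphi,\varphi)$ and $\rho=(JB\varphi,\varphi)$, which are real by self-adjointness of $JA,JB$. Applying Lemma~\ref{lemm4} (or Corollary~\ref{cor2}) with the ordinary inner product, one computes $\|(A-\sigma I)\varphi\|^2$ — here I deliberately keep $\|A\varphi\|^2=(A\varphi,A\varphi)$ rather than $\|JA\varphi\|^2$, since $\|J\cdot\|=\|\cdot\|$ by Lemma~\ref{lem2}(4), so $\|JA\varphi\|=\|A\varphi\|$ and likewise for $B$; this is the key identity that lets the left-hand side of Cauchy–Schwarz be rewritten in terms of the $J$-standard deviation $\sigma_J^2(A)(\varphi)$ as defined in \eqref{jsdd}. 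Expanding $\|(J(A-\sigma I))\varphi\|^2=\|(A-\sigma I)\varphi\|^2$ and matching against \eqref{jsdd} — with the cross term $-2(JA\varphi,\varphi)\,{\rm Re}(A\varphi,\varphi)+(JA\varphi,\varphi)^2$ — identifies the shifted norm-squared with $\sigma_J^2(A)(\varphi)$ exactly; the slightly unusual mixture of $(A\varphi,\varphi)$ and $(JA\varphi,\varphi)$ in \eqref{jsdd} comes precisely from the fact that we are shifting by a real scalar $\sigma=(JA\varphi,\varphi)$ but the inner product inside the norm is the majorant one.

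Finally, on the right-hand side, applying the Schrödinger bound to $JA-\sigma J$ and $JB-\rho J$, one needs $[\,JA-\sigma J,\,JB-\rho J\,]$ and the symmetrized product. The commutator part is the cleaner: $[JA-\sigma J,JB-\rho J]$ differs from $[JA,JB]$ by terms involving $[J,JB]=JJB-JBJ$ etc., which is where the factor $(2-(J\varphi,\varphi))$ will emerge — this is the main obstacle, because in the Hilbert case $J=I$ these correction terms vanish identically, whereas here $J^2=I$ but $J$ need not commute with $A$ or $B$, so one must carefully track the cross terms $\sigma(J\,JB\varphi,\varphi)$, $\rho(JA\,J\varphi,\varphi)$ and $\sigma\rho(J\,J\varphi,\varphi)=\sigma\rho(\varphi,\varphi)=\sigma\rho$. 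Using $(\varphi,\varphi)=1$, $J^2=I$, the self-adjointness of $J$, and the identities $(JJA\varphi,\varphi)=(A\varphi,\varphi)$ while $(JA\,J\varphi,\varphi)$ relates back to expectations of $JA$, one collects these into the stated combination $(JA\varphi,\varphi)(JB\varphi,\varphi)(2-(J\varphi,\varphi))$ in the anti-commutator slot and shows the extra commutator corrections cancel (so that only $(J\tfrac{[A,B]}{2}\varphi,\varphi)$ survives, which is purely imaginary by Lemma~\ref{lemm3}(2), hence contributes $|(J\tfrac{[A,B]}{2}\varphi,\varphi)|^2$). Assembling the bound on $\|A\varphi\|^2\|B\varphi\|^2$ after the shift — i.e. the bound on $\sigma_J^2(A)(\varphi)\,\sigma_J^2(B)(\varphi)$ — then yields \eqref{eq238}. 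I expect the bookkeeping of the $J$-cross terms in the anti-commutator expansion to be the genuinely delicate point; everything else is Cauchy–Schwarz plus Lemma~\ref{lemm4}.
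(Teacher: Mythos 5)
There is a genuine gap, and it sits exactly where you predicted the delicacy would be. Your first step decomposes $(JA)(JB)$ into its symmetric and skew parts, so your Cauchy--Schwarz bound on $|(JB\varphi,JA\varphi)|^2$ produces the expectations of $\{JA,JB\}$ and $[JA,JB]$. But the right-hand side of \eqref{eq238} involves $(J\{A,B\}\varphi,\varphi)$ and $(J[A,B]\varphi,\varphi)$, and $\{JA,JB\}=JAJB+JBJA$ is not $J(AB+BA)$, nor is $[JA,JB]$ equal to $J[A,B]$, unless $J$ commutes with $A$ and $B$ --- an assumption Theorem~\ref{thm1} does not make (it is imposed only later, in Theorem~\ref{thm10}). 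The paper's proof avoids this by applying the Schwarz inequality to the pair $(B\varphi,\,JA\varphi)$, not $(JB\varphi,\,JA\varphi)$: since $JA$ is self-adjoint in the majorant, $(B\varphi,JA\varphi)=(JAB\varphi,\varphi)$, so it is the product $AB$ itself that is split into $\tfrac12\{A,B\}+\tfrac12[A,B]$ with a single $J$ applied on the left; the realness of $(J\{A,B\}\varphi,\varphi)$ and the pure imaginarity of $(J[A,B]\varphi,\varphi)$ then come from the Krein self-adjointness/skewness in Lemma~\ref{lemm3}, giving \eqref{eq228} with exactly the terms that survive into \eqref{eq238}. As written, your first step proves a different (Hilbert-space Robertson--Schr\"odinger) inequality for $JA,JB$, not the stated one.

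Second, your account of where the factor $(2-(J\varphi,\varphi))$ comes from is not right, and the cancellation you hope for does not occur. In the paper the commutator is exactly translation invariant, $[A-\sigma I,B-\rho I]=[A,B]$ (no corrections at all on that side), and the factor arises from the \emph{anti-commutator} of the shifted operators: $\{A-\sigma I,B-\rho I\}=\{A,B\}-2\rho A-2\sigma B+2\sigma\rho I$, so with $\sigma=(JA\varphi,\varphi)$, $\rho=(JB\varphi,\varphi)$ one gets $\tfrac12(J\{\cdot\}\varphi,\varphi)=\tfrac12(J\{A,B\}\varphi,\varphi)-\sigma\rho\,(2-(J\varphi,\varphi))$, the $(J\varphi,\varphi)$ coming from the identity term. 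In your setup the operators entering the Schr\"odinger step after the shift are $JA-\sigma J$ and $JB-\rho J$, whose commutator is $[JA,JB]+\rho\,(A-JAJ)+\sigma\,(JBJ-B)$; these corrections do not cancel for general $A,B$ (only when $[J,A]=[J,B]=0$), and analogous uncontrolled terms $(JAJ\varphi,\varphi)$, $(JBJ\varphi,\varphi)$ enter the anti-commutator slot, so the inequality you would obtain is not \eqref{eq238}. Your handling of the left-hand side is fine --- $\|J(A-\sigma I)\varphi\|=\|(A-\sigma I)\varphi\|$ with $\sigma=(JA\varphi,\varphi)$ does reproduce $\sigma_J^2(A)(\varphi)$ via Lemma~\ref{lemm4}/Corollary~\ref{cor2} --- so the repair is to redo the first step with the pair $(B\varphi,JA\varphi)$, i.e.\ split $AB$ rather than $(JA)(JB)$, and then the shift argument goes through as in the paper.
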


{\bf Note:} One can easily see that for $J=I$ we get the real number $\sigma_J=\sigma$, where $\sigma^{2}=(A^{2}\varphi,\varphi)-(A\varphi,\varphi)^{2}$ is the usual \textit{standard deviation} in the measurement of the observable $A$ in the state $\varphi \in H$. Thus, \eqref{jsdd} may be thought of as the generalized standard deviation of a self-adjoint operator in a Krein space.

\begin{proof} (Theorem~\ref{thm1})
Write the product of the operators $A$ and $B$ in the form
 \begin{equation*}
 AB=\frac{AB+BA}{2}+\frac{AB-BA}{2}=\frac{\{A,B\}}{2}+\frac{[A,B]}{2}.
 \end{equation*}
 The hypotheses imply that $\{A,B\}$ is $[,]$-self-adjoint and $[A,B]$ is $[,]$-skew-hermitian. Since $A$ is $[,]$-self-adjoint we get that $JA$ is self-adjoint so that
  \begin{eqnarray*}
 (B\varphi,JA\varphi)&=&(JAB\varphi,\varphi)\\
 &=&(J\frac{\{A,B\}}{2}\varphi,\varphi)+(J\frac{[A,B]}{2}\varphi,\varphi).\nonumber
 \end{eqnarray*}
Thus,
\begin{eqnarray}
\left |(B\varphi,JA\varphi)\right |^{2}
&=&\left |\frac{1}{2}\,(J\{A,B\}\varphi,\varphi)+\frac{1}{2}\,(J[A,B]\varphi,\varphi)\right |^{2}\label{eq225}
\end{eqnarray}
However, since $\{A,B\}$ is $[,]$-self-adjoint it is the case that $J\{A,B\}$ is self-adjoint and so the first term in \eqref{eq225} is necessarily real. On the other hand, the second term is purely imaginary (i.e., it has its real part equal to zero, by Lemma~\ref{lemm3}\,(3) with $A$ there replaced by the commutator). Hence,
\begin{eqnarray}\label{eq226}
|(B\varphi,JA\varphi)|^{2} &=& \frac{1}{4}\,(J\{A,B\}\varphi,\varphi)^2+\frac{1}{4}\,|(J[A,B]\varphi,\varphi) |^{2}.
\end{eqnarray}
Use of the Schwarz inequality (Lemma~\ref{lem0}) we find
\begin{equation}\label{eq227}
|(B\varphi,JA\varphi)|^{2} \leq \|B\varphi\|^{2}\|JA\varphi\|^{2}\leq\|B\varphi\|^{2}\|A\varphi\|^{2}.
\end{equation}
Combining \eqref{eq226}-\eqref{eq227} gives \begin{equation}\label{eq228}\frac{1}{4}\,(J\{A,B\}\varphi,\varphi)^2+\frac{1}{4}\,|(J[A,B]\varphi,\varphi) |^{2} \leq \|A\varphi\|^{2}\|B\varphi\|^{2}.
\end{equation}


Observe that \eqref{eq228} necessarily holds if we replace $A$ by $A-\sigma I$ and $B$ by $B-\rho I$ with $\sigma$ and $\rho$ being arbitrary real numbers.

So, we choose $\sigma=[A\varphi,\varphi]=(JA\varphi,\varphi)$ and $\rho=[B\varphi,\varphi]=(JB\varphi,\varphi)$ i.e., the expectation values of the self-adjoint operators $A$ and $B$ in the Krein space (so that $\sigma$ and $\rho$ are necessarily real numbers).

Referring to \eqref{eq228} we see that we need to compute the two commutators of the translated operators. We first consider the anti-commutator:

$\displaystyle \{A-(JA\varphi,\varphi)I,B-(JB\varphi,\varphi)I\}$
\begin{eqnarray}
&=& (A-(JA\varphi,\varphi)I)(B-(JB\varphi,\varphi)I)\nonumber\\
&& +(B-(JB\varphi,\varphi)I)(A-(JA\varphi,\varphi)I)\nonumber\\
&=& AB- A(JB\varphi,\varphi)-(JA\varphi,\varphi)B+(JA\varphi,\varphi)(JB\varphi,\varphi)\nonumber\\
&& +BA-(JA\varphi,\varphi)B-(JB\varphi,\varphi)A+(JB\varphi,\varphi)(JA\varphi,\varphi) I\nonumber\\
&=&\{A,B\}-2(JB\varphi,\varphi)A-2(JA\varphi,\varphi)B+2(JA\varphi,\varphi)(JB\varphi,\varphi) I.\label{eq229}
\end{eqnarray}
Next, we consider the commutator:

$\displaystyle [A-(JA\varphi,\varphi)I,B-(JB\varphi,\varphi)I]$
\begin{eqnarray}
&=& (A-(JA\varphi,\varphi)I)(B-(JB\varphi,\varphi)I)\nonumber\\
&& -(B-(JB\varphi,\varphi)I)(A-(JA\varphi,\varphi)I)\nonumber\\
&=& AB -A(JB\varphi,\varphi)-(JA\varphi,\varphi)B+(JA\varphi,\varphi)(JB\varphi,\varphi)\nonumber\\
&& -[BA-(JA\varphi,\varphi)B-(JB\varphi,\varphi)A+(JB\varphi,\varphi)(JA\varphi,\varphi))]\nonumber\\
&=& AB-BA\nonumber\\
&=&[A,B]. \label{eq230}
\end{eqnarray}
Using \eqref{eq229} we get

$\displaystyle \frac{1}{2}\,(J\{A-(JA\varphi,\varphi)I,B-(JB\varphi,\varphi)I\}\varphi,\varphi)$
\begin{eqnarray}
&=& \frac{1}{2}\,(J\{A,B\}\varphi,\varphi)-(JB\varphi,\varphi)(JA\varphi,\varphi)\nonumber\\
&& -(JA\varphi,\varphi)(JB\varphi,\varphi)+ (JA\varphi,\varphi)(JB\varphi,\varphi)(J\varphi,\varphi)\nonumber\\
&=&  \frac{1}{2}\,(J\{A,B\}\varphi,\varphi)
-(JA\varphi,\varphi)(JB\varphi,\varphi)\{2-(J\varphi,\varphi)\}.\label{eq231}
\end{eqnarray}

Similarly, we find
\begin{equation}
\frac{1}{2}\,(J[A-(JA\varphi,\varphi)I,B-(JB\varphi,\varphi)I]\varphi,\varphi)
= \frac{1}{2}\,(J[A,B]\varphi,\varphi).\label{eq232}
\end{equation}

Therefore, as a result of the shift we get \eqref{eq231}-\eqref{eq232} for the two commutators of the translated operators.

On the other hand, since $\sigma = (JA\varphi,\varphi) \in \mathbb{R}$, we can use Corollary~\ref{cor2} to find that
\begin{eqnarray}
\|(A-(JA\varphi,\varphi)I)\varphi\|^{2} &=& (A\varphi,A\varphi)-2(JA\varphi,\varphi)\,{\rm Re}\,(A\varphi,\varphi)+(JA\varphi,\varphi)^{2} \label{eq235} \\
&\geq & 0,\label{eq236}
\end{eqnarray}
where we normalize $\varphi$ so that $(\varphi,\varphi)=1$ in Corollary~\ref{cor2}.

We conclude from Corollary~\ref{cor2} that $$\sigma_J^{2}(A)(\varphi) = (A\varphi,A\varphi)-2(JA\varphi,\varphi)\,{\rm Re}\,(A\varphi,\varphi)+(JA\varphi,\varphi)^{2}$$
is well-defined as a real number and this quantity defines the square of the $J$-standard deviations referred to in \eqref{jsdd}.

Similarly, using the same notation,
\begin{equation}\label{eq234} \sigma_J^{2}(B)(\varphi) = \|(B-(JB\varphi,\varphi)I)\varphi\|^{2}.
\end{equation}
Combining the results \eqref{eq231}, \eqref{eq232}, \eqref{eq235} and \eqref{eq234} with \eqref{eq228} gives
us the desired lower bound,

$\displaystyle \sigma_J^{2}(A)(\varphi)\sigma_J^{2}(B)(\varphi)\geq$
\begin{equation}
\{(J\frac{\{A,B\}}{2}\varphi,\varphi)-(JA\varphi,\varphi)(JB\varphi,\varphi)[2-(J\varphi,\varphi)]\}^{2}+
|(J\frac{[A,B]}{2}\varphi,\varphi)|^{2}.
\end{equation}
\end{proof}

\begin{Remark}
  The only restriction on the operators $A$ and $B$ in order for the inequality \eqref{eq228} to hold is that $A$ and $B$ are self-adjoint in the Krein space subject to some basic hypotheses on their adjoints (cf., Lemma~\ref{lem1}).
\end{Remark}


\begin{Remark}
 Note that for $J=I$ (i.e., when the Krein space is a Hilbert space) and the operator $A$ is $(,)$-self-adjoint, the $J$-standard deviation defined in \eqref{jsdd} becomes
\begin{eqnarray*}
\sigma_J^{2}(A)(\varphi)&=&(A\varphi,A\varphi)-2(JA\varphi,\varphi)\,{\rm Re}\,(A\varphi,\varphi)+(JA\varphi,\varphi)^{2}\\
&=&(A\varphi,A\varphi)-2(A\varphi,\varphi)(A\varphi,\varphi)+(A\varphi,\varphi)^{2}\\
&=&(A\varphi,A\varphi)-(A\varphi,\varphi)^{2}\\
&=&(A^{2}\varphi,\varphi)-(A\varphi,\varphi)^{2},
\end{eqnarray*}
which is the usual ``standard deviation" described in terms of operators. Assuming that $\varphi$ is normalized to unity in $H$ then \eqref{eq238} reduces to
\begin{equation}
\sigma^{2}(A)(\varphi)\sigma^{2}(B)(\varphi)
\geq\left[(\frac{\{A,B\}}{2}\varphi,\varphi)-(A\varphi,\varphi)(B\varphi,\varphi)\right]^{2}+
\left|(\frac{[A,B]}{2}\varphi,\varphi)\right|^{2}\end{equation} which is the inequality obtained by Schr\"{o}dinger, \cite{Schrodinger44}. So, we see that our general (Krein space) inequality, \eqref{eq238}, includes the Schr\"{o}dinger inequality in the specific case where the Krein space is a Hilbert space.
\end{Remark}

In the following theorems the effect of a generalized commutation rule in Krein space is considered in order to prove different versions of the uncertainty principle.
\begin{Theorem}\label{thm7}
Let $[A,B]=aJ$ where $a$ is purely imaginary and $A$, $B$ are self-adjoint operators in a Krein space $(H,[,])$ having $J$ as a fundamental symmetry. Then for $\varphi\in H$,
\begin{equation}
\label{eq237}
\sigma_J(A)(\varphi)\sigma_J(B)(\varphi)\geq \frac{|a|}{2}|(J\varphi,\varphi)|,\nonumber
\end{equation}
where
\begin{equation}
\sigma_J(A)(\varphi)=\{(A\varphi,A\varphi)-2(JA\varphi,\varphi)\,{\rm Re}\,(A\varphi,\varphi)+(JA\varphi,\varphi)^{2}\}^{1/2} \nonumber
\end{equation}
etc., is the $J$-standard-deviation in the measurement of the observable $A$.
\end{Theorem}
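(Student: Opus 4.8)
The plan is to read Theorem~\ref{thm7} off of Theorem~\ref{thm1} as a corollary, in exactly the way Von~Neumann's bound \eqref{eq301} is obtained from Schr\"odinger's \eqref{eq224}: by discarding the anti-commutator contribution. So the first step is just to verify that the hypotheses of Theorem~\ref{thm1} are in force. The only thing that needs checking is that the commutation relation $[A,B]=aJ$ with $a$ purely imaginary is compatible with $A,B$ being self-adjoint in $(H,[,])$; but this is immediate from Lemma~\ref{lemm3}(1), which says $[A,B]$ is $J$-skew-hermitian, together with Lemma~\ref{lem1}(6): $(aJ)^{+}=\overline a\,J^{+}=\overline a\,J$, and $J$-skew-hermiticity forces $\overline a\,J=-aJ$, i.e. $\overline a=-a$, as assumed. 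Hence Theorem~\ref{thm1} applies to a state $\varphi$ normalized in the majorant by $(\varphi,\varphi)=1$.

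Second, I would start from the master inequality \eqref{eq238}. Its first summand on the right is the square of a \emph{real} number --- the difference of the $J$-expectation of the $[,]$-self-adjoint operator $\{A,B\}$ (real because $J\{A,B\}$ is $(,)$-self-adjoint by Lemmas~\ref{lemm3}(1) and \ref{lemm2}) with a product of the real $J$-expectations $(JA\varphi,\varphi)$, $(JB\varphi,\varphi)$ and the real quantity $(J\varphi,\varphi)$ --- and is therefore nonnegative. Dropping it gives the weaker inequality
\[
\sigma_J^{2}(A)(\varphi)\,\sigma_J^{2}(B)(\varphi)\;\geq\;\left|\Bigl(J\,\tfrac{[A,B]}{2}\,\varphi,\varphi\Bigr)\right|^{2}.
\]

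Third, I would evaluate the remaining commutator term using the hypothesis: feeding the commutation relation into $\tfrac12\bigl(J[A,B]\varphi,\varphi\bigr)$ and using $J^{2}=I$ (Lemma~\ref{lem2}) one checks that this collapses to $\tfrac a2\,(J\varphi,\varphi)$, whose modulus is $\tfrac{|a|}{2}\,|(J\varphi,\varphi)|$ since $a\in i\mathbb R$. Substituting this into the displayed inequality and taking positive square roots --- legitimate because $\sigma_J^{2}(A)(\varphi)$ and $\sigma_J^{2}(B)(\varphi)$ are nonnegative reals by Corollary~\ref{cor2} --- yields
\[
\sigma_J(A)(\varphi)\,\sigma_J(B)(\varphi)\;\geq\;\frac{|a|}{2}\,|(J\varphi,\varphi)|,
\]
which is the assertion.

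There is no genuine obstacle here: all the substantive work --- the Schwarz step, the real/imaginary splitting of $(JAB\varphi,\varphi)$, and the shift by the $J$-expectations that turns $\|(A-\sigma I)\varphi\|$ into $\sigma_J(A)$ --- has already been carried out in the proof of Theorem~\ref{thm1}. The point to get right is simply the bookkeeping: confirming the stated commutation rule is consistent with $[,]$-self-adjointness, and keeping track of the single factor $J$ when reducing $(J[A,B]\varphi,\varphi)$ so that the right-hand side comes out as $|(J\varphi,\varphi)|$ rather than $(\varphi,\varphi)$. Theorem~\ref{thm7} is thus the ``pure commutator'' specialization of Theorem~\ref{thm1}, just as \eqref{eq301} is the pure-commutator specialization of \eqref{eq224} in Hilbert space.
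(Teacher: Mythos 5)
Your route --- specializing the master inequality \eqref{eq238} of Theorem~\ref{thm1} by discarding the (nonnegative) anti-commutator term --- is genuinely different from the paper's proof, but as written it breaks down at the decisive computation. With $[A,B]=aJ$ and $J^{2}=I$ you get $J[A,B]=aJ^{2}=aI$, so
\begin{equation*}
\Bigl(J\tfrac{[A,B]}{2}\varphi,\varphi\Bigr)=\tfrac{a}{2}(\varphi,\varphi),
\end{equation*}
\emph{not} $\tfrac{a}{2}(J\varphi,\varphi)$; your closing remark about ``keeping track of the single factor $J$ so that the right-hand side comes out as $|(J\varphi,\varphi)|$ rather than $(\varphi,\varphi)$'' is exactly backwards. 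With the correct evaluation your argument yields $\sigma_J(A)(\varphi)\,\sigma_J(B)(\varphi)\geq\frac{|a|}{2}$ for $(\varphi,\varphi)=1$, which does imply the stated bound, but only after the extra (and omitted) observation that $|(J\varphi,\varphi)|\leq\|J\|\,\|\varphi\|^{2}=1$. A second, quieter gap is that invoking Theorem~\ref{thm1} imports its additional hypotheses --- that $A$, $B$ satisfy Lemma~\ref{lem1}, parts (1)--(3), with equality --- which are not part of the statement of Theorem~\ref{thm7}; your reduction therefore proves a (stronger) inequality under stronger assumptions, not the theorem as stated.

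The paper instead argues directly, in Von Neumann's style, never inserting the extra $J$: from $a(J\varphi,\varphi)=([A,B]\varphi,\varphi)$ and the Schwarz inequality it first obtains $\|A\varphi\|\,\|B\varphi\|\geq\frac{|a|}{2}\,|(J\varphi,\varphi)|$, then uses the translation invariance $[A-\sigma I,B-\rho I]=[A,B]=aJ$ with the real shifts $\sigma=(JA\varphi,\varphi)$, $\rho=(JB\varphi,\varphi)$ (so the shifted operators remain self-adjoint in the Krein space), and finally Lemma~\ref{lemm4} to identify $\|(A-\sigma I)\varphi\|$ with $\sigma_J(A)(\varphi)$, and likewise for $B$. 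This is why the paper's right-hand side carries $|(J\varphi,\varphi)|$ while your (corrected) specialization of \eqref{eq238} would carry $(\varphi,\varphi)$: the two chains of estimates are not the same, and the ``corollary of Theorem~\ref{thm1}'' reading only works after you fix the evaluation of $(J[A,B]\varphi,\varphi)$, add the comparison $|(J\varphi,\varphi)|\leq 1$, and either assume or verify Theorem~\ref{thm1}'s hypotheses on the adjoints.
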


\begin{proof} First we note that $a$ must be purely imaginary in order for $aJ$ in the commutation hypothesis to be skew-hermitian in $H$ (since $J$ is self-adjoint). Next, for suitable $\varphi$ in the domain of $A$ and $B$ we have,
\begin{eqnarray*}
a(J\varphi,\varphi) &=& (aJ\varphi,\varphi)\\
&=&([A,B]\varphi,\varphi) \\
&=&((AB-BA)\varphi,\varphi)\\
&=&(AB\varphi,\varphi)-(BA\varphi,\varphi)
\end{eqnarray*}
\begin{eqnarray*}
\therefore |a||(J\varphi,\varphi)|\leq|(AB\varphi,\varphi)|+|(BA\varphi,\varphi)|.
\end{eqnarray*}
Next
\begin{eqnarray*}
(AB\varphi,\varphi)=[B\varphi,AJ\varphi].\\
\therefore |(AB\varphi,\varphi)|\leq \|B\varphi\|\|A\varphi\|
\end{eqnarray*}by Schwarz inequality and since $\|J\|=1$.\\
Similarly
\begin{eqnarray*}
|(BA\varphi,\varphi)|\leq\|A\varphi\|\|B\varphi\|.
\end{eqnarray*}
Since
\begin{eqnarray*}
|a||(J\varphi,\varphi)|&\leq&|(AB\varphi,\varphi)|+|(BA\varphi,\varphi)|\\
&\leq&2\|A\varphi\|\|B\varphi\|\\
\therefore |(J\varphi,\varphi)|\leq \frac{2}{|a|}\|A\varphi\|\|B\varphi\|
\end{eqnarray*}
Hence,
\begin{equation}
\label{eq239}
\|A\varphi\|\|B\varphi\|\geq\frac{|a|}{2}|(J\varphi,\varphi)|.
\end{equation}

Now consider the commutator, $[A-\sigma I,B-\rho I]$, where $\sigma, \rho$ are real or complex quantities. A straightforward calculation now shows that the commutator is translation invariant, i.e.,
 $$[A-\sigma I,B-\rho I]=[A,B]=aJ,$$
i.e., $A-\sigma I$ and $B-\rho I$ obey the same commutation rule as does $A,B$. Also $A-\sigma I$ and $B-\rho I$ are $J$-self-adjoint (i.e., self-adjoint with respect to $[,]$) only if $\sigma, \rho$ are real (see Lemma~\ref{lem1}\, (6)).

 Now, since $A$, $B$ are $J$-self-adjoint the quantities $(B\varphi,\varphi), (A\varphi,\varphi)$ are not necessarily real, so we cannot choose $\rho, \sigma$ as we would like. Instead, we choose $\rho=(JB\varphi,\varphi)$ and $\sigma=(JA\varphi,\varphi)$. The latter quantities being real, the operators $A-\sigma I$ and $B-\rho I$ are now $J$-self-adjoint and these enjoy the same commutation rule as does $A,B$. Hence we can derive the relation,
\begin{equation}
\label{eq240}
\|(A-\sigma I)\varphi\|\|(B-\rho I)\varphi\|\geq\frac{|a|}{2}|(J\varphi,\varphi)|.
\end{equation}
Since $\sigma, \rho \in \mathbb{R}$ and $(\varphi,\varphi)=1$, we use  Lemma~\ref{lemm4} with the form $[,]$ being replaced by the inner product, $(,)$, to get,
\begin{eqnarray*}
\frac{|a|}{2}|(J\varphi,\varphi)|&\leq& \|(A-\sigma I)\varphi\|\|(B-\rho I)\varphi\|\\
 &=&\|(A-(JA\varphi,\varphi) I)\varphi\|\|(B-(JB\varphi,\varphi) I)\varphi\|\\
 &=& \{(A\varphi,A\varphi)-2(JA\varphi,\varphi)\,{\rm Re}\,(A\varphi,\varphi)+(JA\varphi,\varphi)^{2}\}^{1/2}\\
 && \times\{(B\varphi,B\varphi)-2(JB\varphi,\varphi)\,{\rm Re}\, (B\varphi,\varphi)+(JB\varphi,\varphi)^{2}\}^{1/2}\\
 &=&\sigma_J(A)(\varphi)\sigma_J(B)(\varphi),
  \end{eqnarray*}
 since ${\rm Re}\,(\varphi,A\varphi) = {\rm Re}\,(A\varphi,\varphi)$ with a similar relation for $B$. The result follows.
\end{proof}

The proof of the following result is actually a special case of the techniques used in proving Theorem~\ref{thm7}.

\begin{Theorem}\label{thm8}
Let $[A,B]=aJ$ where $a$ is purely imaginary and $A$, $B$ are self-adjoint operators in a Krein space $(H,[,])$ having $J$ as a fundamental symmetry. Then for $\varphi\in H$,
\begin{equation}
\label{eq237}
\sigma_J(A)(\varphi)\sigma_J(B)(\varphi)\geq \frac{|a|}{2}|(J\varphi,\varphi)|,\nonumber
\end{equation}
where
\begin{equation}
\sigma_J(A)(\varphi)=\{(A\varphi,A\varphi)-({\rm Re}\,(A\varphi,\varphi))^2\}^{1/2} \nonumber
\end{equation}
etc., is another type of $J$-standard-deviation in the measurement of the observable $A$.
\end{Theorem}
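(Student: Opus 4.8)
The plan is to run the proof of Theorem~\ref{thm7} essentially verbatim, changing only the centering constants used in the translation step. First I would reproduce, word for word, the argument that produced \eqref{eq239}: from the hypothesis $[A,B]=aJ$ one writes
\begin{equation*}
a(J\varphi,\varphi)=(aJ\varphi,\varphi)=([A,B]\varphi,\varphi)=(AB\varphi,\varphi)-(BA\varphi,\varphi),
\end{equation*}
estimates each of $|(AB\varphi,\varphi)|$ and $|(BA\varphi,\varphi)|$ by $\|A\varphi\|\,\|B\varphi\|$ using the Schwarz inequality (Lemma~\ref{lem0}), the fact that $\|J\|=1$, and the $[,]$-self-adjointness of $A$ and $B$, and thus obtains
\begin{equation*}
\|A\varphi\|\,\|B\varphi\|\geq\tfrac{|a|}{2}\,|(J\varphi,\varphi)|.
\end{equation*}
This part of the argument is insensitive to the choice of state, so it is literally the one already carried out in Theorem~\ref{thm7}.

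Next I would invoke the translation invariance of the commutator: for any scalars $\sigma,\rho$ one has $[A-\sigma I,\,B-\rho I]=[A,B]=aJ$, and by Lemma~\ref{lem1}\,(6) the shifted operators remain self-adjoint in $(H,[,])$ provided $\sigma,\rho\in\mathbb{R}$. Hence the displayed inequality holds verbatim with $A,B$ replaced by $A-\sigma I,\,B-\rho I$ for every real $\sigma,\rho$. The single point of departure from Theorem~\ref{thm7} is the choice of the shift: there one used $\sigma=(JA\varphi,\varphi)$, $\rho=(JB\varphi,\varphi)$, whereas here I would take $\sigma={\rm Re}\,(A\varphi,\varphi)$ and $\rho={\rm Re}\,(B\varphi,\varphi)$, which are again real numbers (so that $A-\sigma I$, $B-\rho I$ stay Krein self-adjoint).

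It then remains to identify $\|(A-\sigma I)\varphi\|$ with the new $J$-standard deviation. Applying Lemma~\ref{lemm4} with the sesquilinear form taken to be the Hilbert inner product $(,)$ and using $(\varphi,\varphi)=1$ gives $\|(A-\sigma I)\varphi\|^2=(A\varphi,A\varphi)-2\,{\rm Re}\,\{\sigma(\varphi,A\varphi)\}+|\sigma|^2$; since $\sigma$ is real and ${\rm Re}\,(\varphi,A\varphi)={\rm Re}\,(A\varphi,\varphi)=\sigma$ by conjugate symmetry of $(,)$, the right side collapses to $(A\varphi,A\varphi)-({\rm Re}\,(A\varphi,\varphi))^2=\sigma_J(A)(\varphi)^{2}$, and likewise for $B$. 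Both quantities are nonnegative since $\|A\varphi\|^2\geq|(A\varphi,\varphi)|^2\geq({\rm Re}\,(A\varphi,\varphi))^2$ by Schwarz, so the square roots are legitimate. Substituting into the shifted version of \eqref{eq239} yields $\sigma_J(A)(\varphi)\,\sigma_J(B)(\varphi)\geq\tfrac{|a|}{2}|(J\varphi,\varphi)|$, as claimed. I do not anticipate any real obstacle here: the whole content is the observation that the shift trick works with \emph{any} real centering constant, and centering at the real part of the ordinary expectation happens to produce this alternative (and generally smaller) variance; the only things meriting care are keeping $\sigma,\rho$ real so Krein self-adjointness is preserved, and the nonnegativity check just mentioned.
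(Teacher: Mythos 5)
Your proposal is correct and follows exactly the route the paper takes: its proof of this theorem is precisely ``repeat the proof of Theorem~\ref{thm7}, changing only the shifts to $\sigma={\rm Re}\,(A\varphi,\varphi)$, $\rho={\rm Re}\,(B\varphi,\varphi)$,'' which is what you do, and your computation via Lemma~\ref{lemm4} that $\|(A-\sigma I)\varphi\|^2=(A\varphi,A\varphi)-({\rm Re}\,(A\varphi,\varphi))^2$ (with the nonnegativity check) just makes explicit the step the paper leaves as ``the remaining argument is similar.''
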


\begin{proof} The proof proceeds exactly as in the previous theorem except for the choice of $\sigma, \rho$. In this case, we choose $\sigma = {\rm Re}\, (A\varphi,\varphi)$ and $\rho = {\rm Re}\, (B\varphi,\varphi)$ in \eqref{eq240}.The remaining argument is similar.
\end{proof}

\begin{Remark} Given a state $\varphi$, the number of choices of $\sigma, \rho$ in \eqref{eq240} is basically infinite (see also Lemma~\ref{lemm4}). Hence, the techniques used in proving Theorem~\ref{thm7} and Theorem~\ref{thm8} can be used to provide general forms of uncertainty principles using different ``standard deviations" on the left.
\end{Remark}

A result in the same spirit as the one obtained in Theorem~\ref{thm8} can be obtained in Hilbert space, i.e., when the operators $A, B$ are self-adjoint in a Hilbert space, but where their  commutator is dependent upon an unspecified fundamental symmetry as in said theorem.

\begin{Theorem}\label{thm9}
Let $[A,B]=aJ$ where $a$ is purely imaginary, $J$ is a fundamental symmetry, and $A$, $B$ are self-adjoint operators in a Hilbert space $(H,(,))$. Then for $\varphi\in H$,
\begin{equation}
\label{eq237}
\sigma_J(A)(\varphi)\sigma_J(B)(\varphi)\geq \frac{|a|}{2}|(J\varphi,\varphi)|,\nonumber
\end{equation}
where
\begin{equation}
\sigma_J(A)(\varphi)=\{(A\varphi,A\varphi)-(A\varphi,\varphi)^{2}\}^{1/2} \nonumber
\end{equation}
etc., is the usual standard-deviation in the measurement of the observable $A$ in the state $\varphi$.
\end{Theorem}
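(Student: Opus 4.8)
The plan is to follow the Von Neumann–style argument already used in Theorem~\ref{thm7}, but to exploit the genuine Hilbert-space self-adjointness of $A$ and $B$ so that no passage through the indefinite inner product is needed. First I would record the commutation constraint: since $A=A^{*}$ and $B=B^{*}$ on $(H,(,))$, one has $[A,B]^{*}=B^{*}A^{*}-A^{*}B^{*}=-[A,B]$, so $(aJ)^{*}=-aJ$; because $J=J^{*}$ by Lemma~\ref{lem2}\,(6), this forces $\overline{a}=-a$, confirming that $a$ is purely imaginary and, at the same time, that $(J\varphi,\varphi)$ is real, so that $|(J\varphi,\varphi)|$ is meaningful.

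Next, for $\varphi$ in the (common, dense) domain on which $AB$, $BA$ and $J$ all act, normalized by $(\varphi,\varphi)=1$, I would compute the imaginary part of $(A\varphi,B\varphi)$. Using self-adjointness, $(A\varphi,B\varphi)=(BA\varphi,\varphi)$ and $\overline{(A\varphi,B\varphi)}=(AB\varphi,\varphi)$, hence
\begin{equation*}
2i\,{\rm Im}\,(A\varphi,B\varphi)=(BA\varphi,\varphi)-(AB\varphi,\varphi)=-([A,B]\varphi,\varphi)=-a(J\varphi,\varphi),
\end{equation*}
so that $|{\rm Im}\,(A\varphi,B\varphi)|=\tfrac{|a|}{2}\,|(J\varphi,\varphi)|$. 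The Schwarz inequality then gives $\|A\varphi\|\,\|B\varphi\|\ge|(A\varphi,B\varphi)|\ge|{\rm Im}\,(A\varphi,B\varphi)|=\tfrac{|a|}{2}|(J\varphi,\varphi)|$. (Equivalently, one could reproduce verbatim the estimate $|a||(J\varphi,\varphi)|\le|(AB\varphi,\varphi)|+|(BA\varphi,\varphi)|\le 2\|A\varphi\|\,\|B\varphi\|$ used in Theorem~\ref{thm7}.)

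Now I would invoke translation invariance of the commutator: for real scalars $\sigma,\rho$ one has $[A-\sigma I,\,B-\rho I]=[A,B]=aJ$, and $A-\sigma I$, $B-\rho I$ remain self-adjoint on $(H,(,))$, so the inequality just obtained applies verbatim to them. Choosing $\sigma=(A\varphi,\varphi)$ and $\rho=(B\varphi,\varphi)$ — both real because $A,B$ are self-adjoint in the Hilbert space — Corollary~\ref{cor1} yields $\|(A-\sigma I)\varphi\|^{2}=(A\varphi,A\varphi)-(A\varphi,\varphi)^{2}=\sigma_J^{2}(A)(\varphi)$, and similarly for $B$. Substituting gives exactly $\sigma_J(A)(\varphi)\,\sigma_J(B)(\varphi)\ge\tfrac{|a|}{2}|(J\varphi,\varphi)|$.

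There is no deep obstacle: the statement is essentially Theorem~\ref{thm7} with $J$ demoted from the symmetry structure of the operators to an external object entering only through the commutator, and the points that need care are (i) verifying that the \emph{real} translates $A-\sigma I$, $B-\rho I$ stay Hilbert-space self-adjoint, so that Corollary~\ref{cor1} (rather than the Krein-space identity of Lemma~\ref{lemm4}) is the correct tool and the $\sigma_J$ here collapses to the ordinary standard deviation; and (ii) the routine domain bookkeeping for unbounded $A,B$, handled by restricting $\varphi$ to the dense common domain (e.g.\ $C_0^{\infty}$) set up in Section~\ref{sec31}, so that all the products above are legitimate.
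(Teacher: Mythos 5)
Your proposal is correct and follows essentially the same route as the paper: the Von Neumann-style estimate $|a||(J\varphi,\varphi)|=2|{\rm Im}\,(A\varphi,B\varphi)|\leq 2\|A\varphi\|\,\|B\varphi\|$ via self-adjointness and Schwarz, then translation invariance of the commutator with the real shifts $\sigma=(A\varphi,\varphi)$, $\rho=(B\varphi,\varphi)$, which the paper likewise uses (deferring the details to the proof of Theorem~\ref{thm7}). Your explicit appeal to Corollary~\ref{cor1} to identify $\|(A-\sigma I)\varphi\|$ with the ordinary standard deviation is exactly the omitted "minor changes" step in the paper.
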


\begin{proof} As before, $a$ must be purely imaginary in order for $aJ$ in the commutation hypothesis to be skew-hermitian in $H$ (since $J$ is self-adjoint). Similarly we can show that (as in Theorem ~\ref{thm7})
\begin{eqnarray*}
a(J\varphi,\varphi) &=& (aJ\varphi,\varphi)\\
&=&([A,B]\varphi,\varphi) \\
&=&2i\,{\rm Im}\,(B\varphi,A\varphi),
\end{eqnarray*}
and
\begin{eqnarray*}
|(J\varphi,\varphi)|&=&|\frac{2i}{a}{\rm Im}\,(B\varphi,A\varphi)|\\
&\leq& \frac{2}{|a|}\|B\varphi\|\|A\varphi\|,
\end{eqnarray*}
so that,
\begin{equation}
\label{eq239}
\|A\varphi\|\|B\varphi\|\geq\frac{|a|}{2}|(J\varphi,\varphi)|.
\end{equation}

The translation invariance of the commutator, $[A-\sigma I,B-\rho I]$, where $\sigma, \rho$ are real or complex quantities is used with the choices $\rho=(B\varphi,\varphi)$ and $\sigma=(A\varphi,\varphi)$, both now being real in \eqref{eq240}. The remainder of the proof proceeds with minor changes and so is omitted.
\end{proof}


\section{The case of an operator dependent commutator}
In this section we consider the problem of an operator dependent commutator. First we derive some fundamental properties of commutators of self-adjoint operators in a Krein space.
\begin{Lemma}
 \label{lemm5}
Let $A$ and $B$ be self-adjoint operators in a Krein space, $(H,[,])$ with Hilbert majorant space $(H,(,))$. Let $J$ be a fundamental symmetry, and assume that $[J,A]=[J,B]=0$.
Then
$$[JA, JB] = [A,B] = - [A,B]^+,$$
and $$[A,B]^* =  -[JA^*, JB^*].$$
\end{Lemma}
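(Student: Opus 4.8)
The plan is to reduce everything to the two identities from Lemma~\ref{lem2} and Lemma~\ref{lem1}, namely $J^2=I$, $J=J^*=J^+$, and the adjoint calculus $(XY)^+ = Y^+X^+$, $A^* = JA^+J$, together with the standing hypothesis $JA=AJ$, $JB=BJ$ (and hence $JA^*=A^*J$, $JB^*=B^*J$, since commuting with $J$ passes to the Hilbert adjoint via $A^* = JA^+J$ and $A^+ = JA^*J$). First I would establish the auxiliary fact that $[A,B]$ commutes with $J$: since $J$ commutes with each of $A$ and $B$ it commutes with $AB$ and with $BA$, hence with their difference. This will be used repeatedly.

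Next I would prove $[JA,JB] = [A,B]$. Expanding, $[JA,JB] = JAJB - JBJA$. Using $AJ = JA$ we rewrite $JAJB = J(AJ)B = J(JA)B = J^2 AB = AB$, and similarly $JBJA = BA$; subtracting gives $[JA,JB] = AB - BA = [A,B]$. Then for the middle equality $[A,B] = -[A,B]^+$: by Lemma~\ref{lemm3}(1) (applicable because, under the equality hypotheses of Lemma~\ref{lem1}(1)--(3) which we may assume here exactly as in the earlier theorems, $A,B$ self-adjoint in the Krein space forces $[A,B]$ to be $J$-skew-hermitian), we have $[A,B]^+ = -[A,B]$, which is precisely the claim. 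Alternatively one can compute $[A,B]^+ = (AB-BA)^+ = B^+A^+ - A^+B^+ = BA - AB = -[A,B]$ directly using $A^+=A$, $B^+=B$, and antidistributivity of the Krein adjoint over products.

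For the last identity, $[A,B]^* = -[JA^*,JB^*]$, I would start from $A^* = JA^+J = JAJ$ (using $A^+ = A$) and likewise $B^* = JBJ$; alternatively use Lemma~\ref{lem1}(4), $X^* = JX^+J$, applied to $X = [A,B]$: this gives $[A,B]^* = J[A,B]^+J = J(-[A,B])J = -J[A,B]J$. Now $J[A,B]J = J(AB-BA)J = JABJ - JBAJ$, and inserting $J^2 = I$ between the factors together with the commutation hypotheses gives $JABJ = (JA)(BJ) = (AJ)(JB) = A J^2 B = AB$ — wait, that would give $[A,B]$ back; so instead I keep the $J$'s attached to the individual operators: $JABJ = (JAJ)(JBJ) = A^* B^*$ after inserting $J^2=I$ in the middle, and similarly $JBAJ = B^*A^*$. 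Hence $J[A,B]J = A^*B^* - B^*A^* = [A^*,B^*]$, so $[A,B]^* = -[A^*,B^*]$. Finally, since $A^*$ commutes with $J$ (shown above), the first part of this lemma applied with $A,B$ replaced by $A^*,B^*$ — or simply the computation $[JA^*,JB^*] = JA^*JB^* - JB^*JA^* = (JA^*J)(JB^*J)\cdot$(after $J^2=I$)$\,= A^{**}B^{**}\dots$ — I would instead just note $[JA^*,JB^*] = [A^*,B^*]$ by the same one-line argument as in the second paragraph (replace $A\mapsto A^*$, $B\mapsto B^*$, which still commute with $J$). Combining, $[A,B]^* = -[A^*,B^*] = -[JA^*,JB^*]$, as claimed.

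The only genuine subtlety — and the step I expect to require the most care — is the domain/closedness bookkeeping needed so that all the adjoint manipulations are equalities rather than mere inclusions: one must invoke the equality hypotheses of Lemma~\ref{lem1}(1)--(3) (as in Theorem~\ref{thm1} and Theorem~\ref{thm7}), and verify that $JA$, $JB$, $A^*$, $B^*$, and the various products are densely defined with the stated domains, so that $(XY)^+ = Y^+X^+$ and $X^{++}=X$ hold with equality. Granting these standing assumptions, the algebra above is purely formal and short.
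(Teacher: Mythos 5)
Your proposal is correct and takes essentially the same approach as the paper: both prove $[JA,JB]=[A,B]$ by inserting $J^2=I$ and using the commutation hypotheses, obtain $[A,B]^+=-[A,B]$ from $A^+=A$, $B^+=B$ and antidistributivity of the Krein adjoint, and deduce $[A,B]^*=-[JA^*,JB^*]$ from the relation $X^*=JX^+J$ of Lemma~\ref{lem1} together with $[J,A^*]=[J,B^*]=0$, differing only in whether that relation is applied to the whole commutator (your route) or the adjoint is expanded factorwise via $(AJ)^*,(BJ)^*$ (the paper's route). The tension you flagged ($JABJ=AB$ versus $JABJ=A^*B^*$) is harmless: under these hypotheses $A^*=JA^+J=JAJ=AJ^2=A$, so the two readings coincide.
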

\begin{proof} Since $A$ and $B$ are $J$-self-adjoint operators, it follows by Lemma~\ref{lemm2} that $JA$, $JB$ are self-adjoint in the Hilbert majorant space, $(H,(,))$. So, by the usual property of commutators, $[JA,JB]^*= - [JA,JB]$. But $$[JA,JB]= JAJB-JBJA = J^2AB-J^2BA = AB-BA=[A,B].$$ On the other hand, $[A,B] = - [A,B]^+$ since $A,B$ are self-adjoint in the Krein space.

Next, the commutativity assumptions between $J, A, B$ imply that $[J,A^*]=0$ and $[J,B^*]=0$, since $J=J^*$. In addition,
\begin{eqnarray*}
[A,B]^* &=& (AB-BA)^*\\
&=& -(A^*B^*-B^*A^*)\\
&=& -(AJ)^*(BJ)^*+(BJ)^*(AJ)^*\\
&=& -[(AJ)^*, (BJ)^*]\\
&=& -[JA^*, JB^*].
\end{eqnarray*}
\end{proof}
\begin{Theorem}
\label{thm10}
Let $[A,B]=a\,(I+\beta B^{2})$ where $A$ and $B$ are self-adjoint operators in a Krein space, $(H,[,])$, $J$ is a fundamental symmetry, $[J,A]=[J,B]=0$, $a$ is pure imaginary and $\beta \in \mathbb{R}$. Then we have
\begin{eqnarray}
 \sigma_J(A)(\varphi)\sigma_J(B)(\varphi)&\geq& \frac{|a|}{2}\,[(1+ \beta\,\{{\rm Re}\,(B^2\varphi,\varphi) - 2(JB\varphi,\varphi)\,{\rm Re}\,(B\varphi,\varphi) + (JB\varphi,\varphi)^2\})^2  \nonumber \\
 && + (\beta\,\{{\rm Im}\,(B^2\varphi,\varphi) -2(JB\varphi,\varphi)\,{\rm Im}\,(B\varphi,\varphi)\})^2 ]^{1/2}.\label{eq247}
 \end{eqnarray}
 where
\begin{equation*}
 \sigma_J^{2}(A)(\varphi)=(A\varphi,A\varphi)-2(JA\varphi,\varphi)\,{\rm Re}\,(A\varphi,\varphi)+(JA\varphi,\varphi)^{2}.
 \end{equation*}
\end{Theorem}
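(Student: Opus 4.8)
The plan is to follow the template of Theorems~\ref{thm7}--\ref{thm9}: extract a Schwarz-type lower bound from the commutator, pass to the translated operators $A-\sigma I$, $B-\rho I$, and then read off $\sigma_J(A)(\varphi)$ and $\sigma_J(B)(\varphi)$ on the left. What is genuinely new here is that the commutator $a(I+\beta B^{2})$ is \emph{operator-dependent}, hence not invariant under the shift $B\mapsto B-\rho I$, and this is where the work will lie.

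First I would establish the base inequality. Since $A$ and $B$ are self-adjoint in $(H,[,])$ and commute with $J$, Lemma~\ref{lemm2} together with $J^{2}=I$ makes $A$ and $B$ self-adjoint on the Hilbert majorant $(H,(,))$ as well (and Lemma~\ref{lemm5} applies). Then, for $\varphi$ in the common domain of the relevant products, $(AB\varphi,\varphi)=(B\varphi,A\varphi)$ and $(BA\varphi,\varphi)=\overline{(B\varphi,A\varphi)}$, whence $([A,B]\varphi,\varphi)=2i\,{\rm Im}\,(B\varphi,A\varphi)$; taking moduli and invoking the Schwarz inequality (Lemma~\ref{lem0}) gives $\|A\varphi\|\,\|B\varphi\|\ge\tfrac12\,|([A,B]\varphi,\varphi)|$. (One may equally route this through $(AB\varphi,\varphi)=[B\varphi,AJ\varphi]$ and the identity $\|AJ\varphi\|=\|A\varphi\|$, valid because $J$ is unitary and commutes with $A$, as in the proof of Theorem~\ref{thm7}.)

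Next I would use the translation invariance of the commutator, $[A-\sigma I,B-\rho I]=[A,B]$, which holds for all scalars $\sigma,\rho$; for \emph{real} $\sigma,\rho$ the shifted operators remain self-adjoint in the Krein space and still commute with $J$ (by the hypothesis $[J,A]=[J,B]=0$ and Lemma~\ref{lem1}(1),(6)), so the base inequality survives with $A,B$ replaced by $A-\sigma I$, $B-\rho I$. I would then choose $\sigma=(JA\varphi,\varphi)$ and $\rho=(JB\varphi,\varphi)$, both real, and normalize $(\varphi,\varphi)=1$. By Lemma~\ref{lemm4} with the form $[,]$ replaced by the inner product $(,)$, equivalently by Corollary~\ref{cor2}, one obtains $\|(A-\sigma I)\varphi\|^{2}=(A\varphi,A\varphi)-2(JA\varphi,\varphi)\,{\rm Re}\,(A\varphi,\varphi)+(JA\varphi,\varphi)^{2}=\sigma_J^{2}(A)(\varphi)$ and likewise $\|(B-\rho I)\varphi\|^{2}=\sigma_J^{2}(B)(\varphi)$; this is the left-hand side of \eqref{eq247}.

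The remaining, and decisive, step is to evaluate the right-hand side, namely $\tfrac12|([A-\sigma I,B-\rho I]\varphi,\varphi)|=\tfrac12|([A,B]\varphi,\varphi)|=\tfrac{|a|}{2}|((I+\beta B^{2})\varphi,\varphi)|$. The aim is to rewrite this with $B-\rho I$ in place of $B$: expanding $B^{2}=((B-\rho I)+\rho I)^{2}$, one reorganizes $a(I+\beta B^{2})$ as a polynomial in $B-\rho I$, takes the expectation in $\varphi$, uses the choice $\rho=(JB\varphi,\varphi)$ to eliminate the term linear in $B-\rho I$, and then separates the surviving complex scalar into its real and imaginary parts; this is intended to produce exactly $|a|$ times the bracket on the right of \eqref{eq247}. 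I expect this last bookkeeping to be the main obstacle: because the commutation relation carries $B^{2}$ rather than $(B-\rho I)^{2}$, the shift does not pass ``through'' the commutator for free as it did for the commutator $aJ$ of Theorems~\ref{thm7}--\ref{thm9}, so one must track the cross terms carefully and verify that the combination that remains is genuinely the real/imaginary decomposition of $1+\beta\,((B-\rho I)^{2}\varphi,\varphi)$. A subsidiary point, already anticipated by imposing Lemma~\ref{lem1}(1)--(3) with equality, is to ensure that $\varphi$ lies in the domains of $B^{2}$, $AB$ and $BA$, so that all the expectations above are legitimate.
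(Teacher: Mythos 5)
Your scaffolding (the Schwarz-type base bound obtained from the commutator, translation invariance of $[A-\sigma I,B-\rho I]$, the choices $\sigma=(JA\varphi,\varphi)$, $\rho=(JB\varphi,\varphi)$, and the identification of $\|(A-\sigma I)\varphi\|$, $\|(B-\rho I)\varphi\|$ with the $J$-standard deviations via Lemma~\ref{lemm4}) is exactly the paper's, and up to the analogue of \eqref{eq242} the two arguments agree. The genuine gap is precisely at the step you flagged and deferred, and it is not mere bookkeeping: it fails. Translation invariance gives $[A-\sigma I,B-\rho I]=[A,B]=a(I+\beta B^{2})$, so your route yields
\begin{equation*}
\sigma_J(A)(\varphi)\,\sigma_J(B)(\varphi)\;\ge\;\frac{|a|}{2}\,\bigl|1+\beta\,(B^{2}\varphi,\varphi)\bigr|,
\end{equation*}
and you must then show this modulus equals the bracket in \eqref{eq247}, i.e. $|1+\beta\,((B-\rho I)^{2}\varphi,\varphi)|$. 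But with $(\varphi,\varphi)=1$,
\begin{equation*}
(B^{2}\varphi,\varphi)=((B-\rho I)^{2}\varphi,\varphi)+2\rho\,\bigl[(B\varphi,\varphi)-\rho\bigr]+\rho^{2},
\end{equation*}
so the two quantities differ by $\beta\{2\rho(B\varphi,\varphi)-\rho^{2}\}$. The choice $\rho=(JB\varphi,\varphi)$ does not eliminate the linear term, since $(B\varphi,\varphi)$ need not equal $(JB\varphi,\varphi)$ when $J\neq I$; and even when it does (the Hilbert case $J=I$), a residual $+\beta\rho^{2}$ survives — indeed for $J=I$ the honest Robertson bound is $\tfrac{|a|}{2}\,|1+\beta\sigma^{2}(B)+\beta(B\varphi,\varphi)^{2}|$, not $\tfrac{|a|}{2}(1+\beta\sigma^{2}(B))$. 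Since $\beta$ is only assumed real and the expectations may be complex, you cannot in general absorb this discrepancy by an inequality either, so the proof as sketched does not close.

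It is worth seeing how the paper itself gets past this point: from \eqref{eq242} it passes directly to \eqref{eq243}, i.e. it re-runs the base argument on the shifted operators but writes $((B-\rho I)^{2}\varphi,\varphi)$ on the right-hand side — in effect substituting $B-\rho I$ for $B$ inside the commutation hypothesis. Translation invariance, which is all that the hypotheses provide, gives $a(I+\beta B^{2})$, not $a(I+\beta(B-\rho I)^{2})$, so your more careful treatment exposes a mismatch that the paper's proof glosses over rather than resolves. To reach \eqref{eq247} along your lines you would need an additional ingredient: for instance, restrict to $\beta\ge 0$ and to situations where the discarded term only strengthens the bound (as in the Hilbert-space case, where $|1+\beta\langle B^{2}\rangle|\ge 1+\beta\sigma^{2}(B)$), or else assume the commutation relation already in the shifted form. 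As written, the decisive step of your proposal is both unexecuted and, in the stated generality, not provable.
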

\begin{proof} First, we note that $a$ must be pure imaginary since, by Lemma~\ref{lemm5}, $[A,B]$ is $J$-skew-hermitian (i.e., if $a$ has a non-zero real part, then the assumption on the commutators is impossible).

We use Theorem~\ref{thm1}. Substituting $[A,B]=a(I+\beta B^{2})$ into \eqref{eq238} yields,

$\displaystyle \sigma_J^{2}(A)(\varphi)\sigma_J^{2}(B)(\varphi)\geq$
\begin{equation}\geq\{(J\frac{\{A,B\}}{2}\varphi,\varphi)
-(JA\varphi,\varphi)(JB\varphi,\varphi)[2-(J\varphi,\varphi)]\}^{2}+
|(J\frac{a(I+\beta B^{2})}{2}\varphi,\varphi)|^{2}
\end{equation}

From the commutation rule and the commutativity of $J$ with $A$ and $B$ we get,
\begin{eqnarray*}
(a(1+\beta B^{2})\varphi,\varphi)&=&((AB-BA)\varphi,\varphi)\\
&=&(J^2AB\varphi,\varphi)-(J^2BA\varphi,\varphi)\\
&=&(JAJB\varphi,\varphi)-(JBJA\varphi,\varphi)\\
&=&(JB\varphi,JA\varphi)-(JA\varphi,JB\varphi)\\
&=&(JB\varphi,JA\varphi)-\overline{(JB\varphi,JA\varphi)}\\
&=& 2i\,{\rm Im}\,(JB\varphi,JA\varphi)\\
&=&2i\,{\rm Im}\,(B\varphi,A\varphi).
\end{eqnarray*}
Therefore,
\begin{equation}
\label{eq241}
(a\varphi,\varphi)+(a\beta B^{2}\varphi,\varphi)=2i\,{\rm Im}\,(B\varphi,A\varphi).
\end{equation}


Now
\begin{eqnarray*}
2|{\rm Im}\,(B\varphi,A\varphi)| &=& |(a\varphi,\varphi)+(a\beta B^{2}\varphi,\varphi)|\\
&=& |a||(\varphi,\varphi)+\beta\, (B^{2}\varphi,\varphi)|,
\end{eqnarray*}
so that choosing $\varphi$ with $\|\varphi\|=1$ we get, by Schwarz's inequality,
\begin{eqnarray*}
2\|B\varphi\|\|A\varphi\|&\geq& 2|(B\varphi,A\varphi)|\\
& \geq& 2|{\rm Im}\,(B\varphi,A\varphi)|\\
&=& |a||1+\beta\, (B^{2}\varphi,\varphi)|.
\end{eqnarray*}
Therefore,
\begin{equation}
\label{eq242}
\|B\varphi\|\|A\varphi\|\geq\frac{|a|}{2}|1+\beta\,(B^{2}\varphi,\varphi)|.
\end{equation}
Note that $B^2$ is self-adjoint in the Krein space and so $(B^{2}\varphi,\varphi)$ is not necessarily real. As before we can show that, for any $\sigma, \rho \in \mathbb{C}$,
$$[A-\sigma\,I,B-\rho\,I]=[A,B].$$
Next, for $\sigma, \rho \in \mathbb{R}$, the operators $A-\sigma\,I,B-\rho\,I$ are self-adjoint in the Krein space as well, and $J$ commutes with both of these for any constants $\sigma, \rho$. Consequently, applying the above argument to these translated operators we find that \eqref{eq242} becomes, for any real $\sigma$ and real $\rho$,
 \begin{equation}
 \label{eq243}
 \|(B-\rho\,I)\varphi\|\|(A-\sigma\,I)\varphi\|
 \geq\frac{|a|}{2}|1+\beta\,((B-\rho\,I)^2\varphi,\varphi)|.
 \end{equation}
Now, choosing the real quantities $\sigma, \rho$ as before, that is, $\sigma=(JA\varphi,\varphi), \rho=(JB\varphi,\varphi)$ and writing $$((B-\rho\,I)^2\varphi,\varphi)={\rm Re}\,((B-\rho\,I)^2\varphi,\varphi) + i\,{\rm Im}\,((B-\rho\,I)^2\varphi,\varphi),$$
\eqref{eq243} becomes,
\begin{eqnarray}
\label{eq245}
\|(B-(JB\varphi,\varphi)\,I)\varphi\|\,\|(A-(JA\varphi,\varphi)\,I)\varphi\|\nonumber\\
\geq\frac{|a|}{2}|1+\beta\,((B-(JB\varphi,\varphi)\,I)^2\varphi,\varphi)|.
\end{eqnarray}
Now, a straightforward calculation gives
$$((B-(JB\varphi,\varphi)\,I)^2\varphi,\varphi) = (B^2\varphi,\varphi) -2(JB\varphi,\varphi)(B\varphi,\varphi) + (JB\varphi,\varphi)^2,$$
and since $(JB\varphi,\varphi) \in \mathbb{R}$ this becomes
\begin{eqnarray*}
((B-(JB\varphi,\varphi)\,I)^2\varphi,\varphi) &=& \{{\rm Re}\,(B^2\varphi,\varphi) -2(JB\varphi,\varphi)\,{\rm Re}\,(B\varphi,\varphi) \\
&& +(JB\varphi,\varphi)^2\}  \\
&& +i \{{\rm Im}\,(B^2\varphi,\varphi) -2(JB\varphi,\varphi)\,{\rm Im}\,(B\varphi,\varphi)\}.
\end{eqnarray*}
Thus, from \eqref{eq245},

$\displaystyle \frac{|a|}{2}\,|1+\beta\,((B-(JB\varphi,\varphi)\,I)^2\varphi,\varphi)|$
\begin{eqnarray*}
&=& \frac{|a|}{2}\, \bigg |1+ \beta\,\{{\rm Re}\,(B^2\varphi,\varphi) -
 2(JB\varphi,\varphi)\,{\rm Re}\,(B\varphi,\varphi)+  (JB\varphi,\varphi)^2\} \\
&& + i \beta\,\left \{{\rm Im}\,(B^2\varphi,\varphi) - 2(JB\varphi,\varphi)\,{\rm Im}\,(B\varphi,\varphi)\right \}\bigg |,\\
&=& \frac{|a|}{2}\,[(1+ \beta\,\{{\rm Re}\,(B^2\varphi,\varphi) - 2(JB\varphi,\varphi)\,{\rm Re}\,(B\varphi,\varphi)\\
&& + (JB\varphi,\varphi)^2\})^2 + (\beta\,\{{\rm Im}\,(B^2\varphi,\varphi) -2(JB\varphi,\varphi)\,{\rm Im}\,(B\varphi,\varphi)\})^2 ]^{1/2}
\end{eqnarray*}
so that \eqref{eq245} now becomes

$\displaystyle \|(B-(JB\varphi,\varphi)\,I)\varphi\|\,\|(A-(JA\varphi,\varphi)\,I)\varphi\|$
\begin{eqnarray}
&\geq& \frac{|a|}{2}\,[(1+ \beta\,\{{\rm Re}\,(B^2\varphi,\varphi) - 2(JB\varphi,\varphi)\,{\rm Re}\,(B\varphi,\varphi) + (JB\varphi,\varphi)^2\})^2  \nonumber \\
 && +(\beta\,\{{\rm Im}\,(B^2\varphi,\varphi) -2(JB\varphi,\varphi)\,{\rm Im}\,(B\varphi,\varphi)\})^2 ]^{1/2}. \label{eq246}
\end{eqnarray}
As usual,
\begin{equation}
 \|(B-(JB\varphi,\varphi)\,I)\varphi\|=\sigma_J(B)(\varphi),
 \end{equation}
and
\begin{equation}
 \|(A-(JA\varphi,\varphi)\,I)\varphi\|=\sigma_J(A)(\varphi).
 \end{equation}
Combining these with \eqref{eq246} we get the conclusion.
\end{proof}

\begin{Corollary}
\label{cor12}
When the Krein space is a Hilbert space (i.e., $J=I$) the result \eqref{eq247} becomes
\begin{equation}
\label{eq248}
 \sigma(A)(\varphi)\sigma(B)(\varphi)\geq\frac{|a|}{2}(1+\beta\sigma^2(B)(\varphi))
 \end{equation}
 where
 \begin{equation}
\sigma^2(B)=(B^2\varphi,\varphi)-(B\varphi,\varphi)^2
\end{equation}
is the square of the usual standard deviation of the self-adjoint operator, $B$.
\end{Corollary}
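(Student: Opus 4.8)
The plan is to obtain Corollary~\ref{cor12} as a direct specialization of Theorem~\ref{thm10} to the case $J=I$. First I would note that a Hilbert space $(H,(,))$ is trivially a Krein space whose fundamental symmetry is the identity, so $[J,A]=[J,B]=0$ holds automatically; hence, once $A,B$ are assumed self-adjoint in $(H,(,))$ with $a$ pure imaginary and $\beta\in\mathbb{R}$, every hypothesis of Theorem~\ref{thm10} is met and \eqref{eq247} applies with $J$ replaced by $I$ throughout.

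Next I would simplify the two ingredients of \eqref{eq247}. On the left, as already observed in the discussion following Theorem~\ref{thm1}, setting $J=I$ collapses the $J$-standard deviation \eqref{jsdd} to the ordinary one, $\sigma_J(A)(\varphi)=\{(A^2\varphi,\varphi)-(A\varphi,\varphi)^2\}^{1/2}=\sigma(A)(\varphi)$, and similarly for $B$. On the right, since $B=B^*$ in $(H,(,))$, both $(B\varphi,\varphi)$ and $(B^2\varphi,\varphi)=\|B\varphi\|^2$ are real, so $(JB\varphi,\varphi)=(B\varphi,\varphi)$ and ${\rm Im}\,(B\varphi,\varphi)={\rm Im}\,(B^2\varphi,\varphi)=0$. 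Consequently the imaginary-part bracket in \eqref{eq247} vanishes identically, while the real-part bracket becomes
$$
{\rm Re}\,(B^2\varphi,\varphi)-2(B\varphi,\varphi)^2+(B\varphi,\varphi)^2=(B^2\varphi,\varphi)-(B\varphi,\varphi)^2=\sigma^2(B)(\varphi).
$$

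Feeding these back into \eqref{eq247} leaves
$$
\sigma(A)(\varphi)\sigma(B)(\varphi)\geq\frac{|a|}{2}\left[(1+\beta\sigma^2(B)(\varphi))^2\right]^{1/2}=\frac{|a|}{2}\left|1+\beta\sigma^2(B)(\varphi)\right|,
$$
and since $\left|1+\beta\sigma^2(B)(\varphi)\right|\geq 1+\beta\sigma^2(B)(\varphi)$ one obtains \eqref{eq248}. The only point needing a word of care is this last step: when $\beta<0$ the quantity $1+\beta\sigma^2(B)(\varphi)$ may be negative, in which case \eqref{eq248} is a weakening of the absolute-value form and holds trivially because the left side is nonnegative (recall $\sigma^2(B)(\varphi)=\|(B-(B\varphi,\varphi)I)\varphi\|^2\geq 0$ by Corollary~\ref{cor2}); when $1+\beta\sigma^2(B)(\varphi)\geq 0$ the bound is exactly the absolute-value form. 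I do not anticipate any genuine obstacle here, as the entire analytic content sits in Theorem~\ref{thm10} and this corollary is merely its Hilbert-space shadow.
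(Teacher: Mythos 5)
Your specialization is correct and is exactly what the paper intends: setting $J=I$ in \eqref{eq247} collapses $\sigma_J$ to the usual standard deviation, kills the imaginary-part bracket since $(B\varphi,\varphi)$ and $(B^2\varphi,\varphi)$ are real for self-adjoint $B$, and reduces the real bracket to $\sigma^2(B)(\varphi)$, after which $|1+\beta\sigma^2(B)(\varphi)|\geq 1+\beta\sigma^2(B)(\varphi)$ yields \eqref{eq248}. Your remark on the sign of $1+\beta\sigma^2(B)(\varphi)$ when $\beta<0$ is a sensible extra precaution (the paper itself assumes $\beta\geq 0$ only in the discussion that follows), but the argument and conclusion coincide with the paper's.
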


Assuming $\beta \geq 0$, we get from \eqref{eq248} that
$$\sigma(B)(\varphi)\geq\frac{|a|}{2\sigma(A)(\varphi)}(1+\beta\sigma^2(B)(\varphi)).$$
We iterate the preceding expression by replacing $\sigma (B)$ on the right by the whole expression on the right, e.g., define $f:\sigma \to \mathbb{R}$ by
$$f(\sigma) = \frac{|a|}{2\sigma(A)(\varphi)}(1+\beta\sigma^2),$$
and compute its orbit at $\sigma=0$. The sequence thus generated is defined by setting $f^n(0)=f(f^{n-1}(0))$, where $n=2,3,4,\ldots$, and $f^1(0)=f(0)$, whose $n$-th term gives the first $n$ terms in the expansion on the right.
In other words,
\begin{equation*}
 \sigma(B)(\varphi)
 \geq\frac{|a|}{2\sigma(A)(\varphi)}\left(1+\frac{\beta|a|^2}{4\sigma^{2}(A)(\varphi)}
 +\frac{\beta^2|a|^4}{8\sigma^4(A)(\varphi)}
+\frac{5\beta^3|a|^6}{64\sigma^{6}(A)(\varphi)} + \cdots\right),
 \end{equation*}
 or
 \begin{equation*}
 \sigma(A)(\varphi)\,\sigma(B)(\varphi)\geq\frac{|a|}{2}+\frac{\beta|a|^3}{8\sigma^{2}(A)(\varphi)}+\frac{\beta^2|a|^5}{16\sigma^4(A)(\varphi)}
+\frac{5\beta^3|a|^7}{128\sigma^{6}(A)(\varphi)} + \cdots,
 \end{equation*}
which is consistent with the form of a generalized uncertainty principle see, e.g., \cite{manic44}, \cite{x.Li44}, \cite{zhao44}, \cite{Ren44}.

We end our investigations with the following result that uses the self-adjointness of $B^*B$ in lieu of the non self-adjointness of $B^2$ as in Theorem~\ref{thm10}.

\begin{Theorem}
Let $[A,B]=a\,(I+\beta B^*B)$ where $A$ is a self-adjoint operator in a Krein space, $(H,[,])$, $B$ is closed,  $J$ is a fundamental symmetry, $[J,A]=[J,B]=0$, $a$ is pure imaginary and $\beta \in \mathbb{R}$. Then we have
\begin{equation}
 \sigma_J(A)(\varphi)\sigma_J(B)(\varphi)\geq\frac{|a|}{2}(1+\beta\sigma_J^2(B)(\varphi))
 \end{equation}
where
\begin{equation}
\sigma_J^{2}(B)(\varphi)=(B\varphi,B\varphi)-2(JB\varphi,\varphi){\rm Re}\,(B\varphi,\varphi)+(JB\varphi,\varphi)^{2}(\varphi,\varphi).
\end{equation}
\end{Theorem}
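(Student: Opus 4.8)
The plan is to reproduce, with one simplification, the scheme of Theorems~\ref{thm7} and~\ref{thm10}. The simplification is that, since $B$ is closed and densely defined, $B^*B$ is self-adjoint in the Hilbert majorant $(H,(,))$, so $(B^*B\varphi,\varphi)=(B\varphi,B\varphi)=\|B\varphi\|^2$ is automatically a non-negative \emph{real} number; by contrast the quantity $(B^2\varphi,\varphi)$ of Theorem~\ref{thm10} need not be real, and it is exactly this reality that will collapse the imaginary-part corrections in \eqref{eq247}. (Note also that the purely imaginary character of $a$ is forced by consistency: $B^*B$ is $(,)$-self-adjoint and $[J,B^*B]=0$ — from $[J,B]=0$ and $J=J^*$ — so $a(I+\beta B^*B)$ is skew-hermitian, matching the $J$-skew-hermiticity of $[A,B]$ from Lemma~\ref{lemm3}.)

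First I would compute $([A,B]\varphi,\varphi)$, for $\varphi$ normalised by $(\varphi,\varphi)=1$, in two ways. On one hand it equals $(a(I+\beta B^*B)\varphi,\varphi)=a\bigl(1+\beta\|B\varphi\|^2\bigr)$. On the other hand, using that $JA$ and $JB$ are self-adjoint in $(H,(,))$ (Lemma~\ref{lemm2}), $J^2=I$, and $[J,A]=[J,B]=0$, a short calculation gives $([A,B]\varphi,\varphi)=(B\varphi,A\varphi)-(A\varphi,B\varphi)=2i\,{\rm Im}\,(B\varphi,A\varphi)$. Equating the two, and applying the Schwarz inequality (Lemma~\ref{lem0}) together with $\|J\|=1$, yields $\|A\varphi\|\,\|B\varphi\|\ge\tfrac{|a|}{2}\bigl(1+\beta\|B\varphi\|^2\bigr)$ when the right side is non-negative (the asserted inequality being trivial otherwise, as its left side is non-negative).

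Next I would run the standard shift, replacing $A$ by $A-(JA\varphi,\varphi)I$ and $B$ by $B-(JB\varphi,\varphi)I$. The shifts are real, so by Lemma~\ref{lem1}(6) the translated operators remain self-adjoint in the Krein space and still commute with $J$, while $[A-\sigma I,B-\rho I]=[A,B]$. Re-running the previous step for these operators and identifying, by Lemma~\ref{lemm4} (with the sesquilinear form taken to be $(,)$ and the real scalars $\sigma=(JA\varphi,\varphi)$, $\rho=(JB\varphi,\varphi)$), $\|(A-(JA\varphi,\varphi)I)\varphi\|=\sigma_J(A)(\varphi)$ and $\|(B-(JB\varphi,\varphi)I)\varphi\|=\sigma_J(B)(\varphi)$, should deliver the conclusion. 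A quicker route is to invoke Theorem~\ref{thm10} directly with $B^2$ replaced by $B^*B$ — legitimate since $B=B^*$ (from $B^+=B$, Lemma~\ref{lem1}(5), and $[J,B]=0$) — and then note that in \eqref{eq247} the reality of $(B^*B\varphi,\varphi)$ and of $(B\varphi,\varphi)$ annihilates the imaginary bracket, leaving precisely $1+\beta\,\sigma_J^2(B)(\varphi)$ in the real bracket.

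The step I expect to be the main obstacle is the translation, specifically obtaining $\sigma_J^2(B)(\varphi)=\|(B-(JB\varphi,\varphi)I)\varphi\|^2$ — rather than the cruder $\|B\varphi\|^2$ — on the right-hand side. This forces one, after the shift, to re-expand $B^*B$ about the operator $B-(JB\varphi,\varphi)I$ and to check that the various cross terms combine exactly into $1+\beta\,\sigma_J^2(B)(\varphi)$; it is here that the reality of the shift $(JB\varphi,\varphi)$ — equivalently the Krein self-adjointness of $B$ together with $[J,B]=0$ — is indispensable. The Schwarz estimate and the skew-hermiticity manipulations are routine by comparison.
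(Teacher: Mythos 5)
Your route is the one the paper itself intends (its proof of this theorem is simply deferred to Theorem~\ref{thm10}), and your first step is sound: for $A,B$ self-adjoint in the Krein space and commuting with $J$, one gets $([A,B]\varphi,\varphi)=a\bigl(1+\beta\|B\varphi\|^2\bigr)=2i\,{\rm Im}\,(B\varphi,A\varphi)$ and hence, by Schwarz, $\|A\varphi\|\,\|B\varphi\|\ge \tfrac{|a|}{2}\bigl(1+\beta\|B\varphi\|^2\bigr)$. The genuine gap is exactly at the translation step that you flag but do not close, and it does not close in the way you describe. After the shift, the commutator of $A-\sigma I$ and $B-\rho I$ is still $a(I+\beta B^*B)$ with the \emph{original} $B$, so re-running the Schwarz argument yields $\sigma_J(A)(\varphi)\,\sigma_J(B)(\varphi)\ge \tfrac{|a|}{2}\,\bigl|1+\beta\,(B^*B\varphi,\varphi)\bigr|=\tfrac{|a|}{2}\,\bigl|1+\beta\|B\varphi\|^2\bigr|$, not the asserted bound. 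Re-expanding $B^*B$ about $B'=B-\rho I$ with $\rho=(JB\varphi,\varphi)$ gives $(B^*B\varphi,\varphi)=\|B'\varphi\|^2+2\rho\,{\rm Re}\,(B'\varphi,\varphi)+\rho^2=\sigma_J^2(B)(\varphi)+\rho\bigl(2\,{\rm Re}\,(B\varphi,\varphi)-\rho\bigr)$, so the cross terms do \emph{not} ``combine exactly into $1+\beta\,\sigma_J^2(B)(\varphi)$''; their expectation is $\|B\varphi\|^2-\sigma_J^2(B)(\varphi)=\rho\bigl(2\,{\rm Re}\,(B\varphi,\varphi)-\rho\bigr)$, which can be negative (e.g.\ for $B$ diagonal and commuting with $J=\mathrm{diag}(1,-1)$ one easily arranges $(JB\varphi,\varphi)<0<(B\varphi,\varphi)$), and then for $\beta>0$ the bound $1+\beta\|B\varphi\|^2$ does not dominate $1+\beta\,\sigma_J^2(B)(\varphi)$. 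So an additional argument is needed at precisely this point; the proposal as written would not produce the stated inequality.

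Two related remarks. First, both your opening computation and the reality of $(JB\varphi,\varphi)$ (hence the identification $\sigma_J(B)(\varphi)=\|(B-(JB\varphi,\varphi)I)\varphi\|$), as well as your shortcut $B=B^*$, use $B^+=B$, which the statement does not list — it assumes only that $B$ is closed; you are implicitly importing the hypotheses of Theorem~\ref{thm10}, which is surely the intent but should be made explicit. Second, the difficulty you have hit is inherited rather than introduced: in the paper's proof of Theorem~\ref{thm10}, inequality \eqref{eq243} is obtained by inserting the shifted operator $B-\rho I$ into the commutation relation, although the translated operators satisfy $[A-\sigma I,B-\rho I]=a(I+\beta B^2)$ with the original $B$; since the present theorem's proof is deferred to that argument, the same unjustified substitution is what would be needed here, and your proposal neither avoids it nor repairs it.
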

\begin{proof} The proof follows the same argument as that presented in  Theorem~\ref{thm10} with minor modifications, and so is omitted.
\end{proof}



\section{Conclusion}
\label{ch6}
We found various generalized versions of the Heisenberg uncertainty principle in indefinite inner-product spaces, specifically, Krein spaces. These new abstract forms of the uncertainty relations all reduce to the usual uncertainty principle in specific instances of the inner-product or the space, or the operators. While the notions of {\it uncertainty} through {\it standard deviations} in a Krein space setting are different than the corresponding notions in Hilbert space, our definitions coincide with the usual ones in the specific case of a Hilbert space. The importance of these generalizations lies more in the fact that the base operators are not conjugate and not even self-adjoint in the usual sense.

These results are in sharp contrast with early speculations by Condon (1927) on commutators and their relationship to uncertainty principles. Inspired by the latter paper this work is more concerned with the question:`` Can we extend the class of operators, $A$, $B$ and the spaces they are defined upon so as to {\it guarantee} an uncertainty relation?" We show that, indeed, there exist classes of non-self-adjoint operators on Hilbert spaces such that the non-vanishing of their commutator implies an uncertainty relation.

Another generalization of the uncertainty principle was given involving a general form of the commutators, that is, when the commutator of the two operators is not a multiple of the identity operator but a function of one of the two operators. This generalization was derived in Krein space for the operators commuting with the fundamental symmetry of the Krein space.

\bibliographystyle{plain}
\end{document}